\newcommand{\calA}{\mathcal{A}}
\newcommand{\Rev}{\text{Rev}}
\newcommand{\Reg}{\text{Regret}}
\newcommand{\HO}{\Rev^{\star}}
\newcommand{\RDLP}{\Rev(\AlgRDLP)}
\newcommand{\DLP}{\Rev(\AlgDLP)}
\newcommand{\Revp}{\Rev(\pi)}
\newcommand{\Rogd}{\text{Rev}(\AlgDBPC)}
\newcommand{\Rdlpp}{\text{Rev}(\GPRDLP)}
\newcommand{\CR}{\text{CR}}
\newcommand{\A}{\mathbf{A}}
\newcommand{\bbR}{\mathbb{R}}
\newcommand{\xhdr}[1]{\vspace{1mm} \noindent{\bf #1}}
\newcommand{\FCFS}{\textsf{FCFS}\xspace}
\newcommand{\AlgDLP}{\textsf{DLP-PA}\xspace}
\newcommand{\AlgRDLP}{\textsf{RDLP-PA}\xspace}
\newcommand{\AlgSBPC}{\textsf{S-BPC}\xspace}
\newcommand{\AlgDBPC}{\textsf{BPC-OGD}\xspace}
\newcommand{\AlgBL}{\textsf{BL}\xspace}
\newcommand{\AlgN}{\textsf{NESTING}\xspace}
\newcommand{\GPDLP}{\textsf{GP-Enhanced-DLP}\xspace}
\newcommand{\GPRDLP}{\textsf{GP-Enhanced-RDLP}\xspace}
\newcommand{\GPDBPC}{\textsf{GP-Enhanced-BPC-OGD}\xspace}
\newcommand{\GPBL}{\textsf{GP-Enhanced-BL}\xspace}
\newcommand{\GPN}{\textsf{GP-Enhanced-NESTING}\xspace}
\def\E{\mathbb{E}}
\begin{document}

\ARTICLEAUTHORS{
\AUTHOR{Patrick Jaillet}\AFF{Department of Electrical Engineering and Computer Science, Massachusetts Institute of Technology,
\EMAIL{jaillet@mit.edu}}
\AUTHOR{Chara Podimata}\AFF{Sloan School of Management, Massachusetts Institute of Technology, Cambridge, MA, 02139,
\EMAIL{podimata@mit.edu}}
\AUTHOR{Zijie Zhou}\AFF{Operations Research Center,
Massachusetts Institute of Technology, Cambridge, MA, 02139,
\EMAIL{zhou98@mit.edu}}
}

\RUNAUTHOR{Jaillet, Podimata, and Zhou}
\RUNTITLE{Grace Period is All You Need}

\TITLE{Grace Period is All You Need: Individual Fairness without Revenue Loss in Revenue Management}

\ABSTRACT{\textbf{Problem definition: 
 } Imagine you and a friend purchase identical items at a store, yet only your friend received a discount. Would your friend's discount make you feel unfairly treated by the store? And would you be less willing to purchase from that store again in the future? Based on a large-scale online survey that we ran on Prolific, it turns out that the answers to the above questions are \emph{positive}. Therefore, when allocating resources to different customers, sellers should consider both the total reward and individual fairness.
 
 \xhdr{Methodology/results: } Motivated by these findings, in this work we propose a notion of \emph{individual fairness} in online revenue management and an algorithmic module (called \emph{``Grace Period''}) that can be embedded in traditional revenue management algorithms and guarantee individual fairness. Specifically, we show how to embed the Grace Period in five common revenue management algorithms including Deterministic Linear Programming with Probabilistic Assignment (\AlgDLP), Resolving Deterministic Linear Programming with Probabilistic Assignment (\AlgRDLP), Static Bid Price Control (\AlgSBPC), Booking Limit (\AlgBL), and Nesting (\AlgN), thus covering both \emph{stochastic} and \emph{adversarial} customer arrival settings. Embedding the Grace Period does \emph{not} incur additional regret for any of these algorithms. This finding indicates that, {in an asymptotic regime}, there is \emph{no} tradeoff between a seller maximizing their revenue and guaranteeing that each customer feels fairly treated. 

 \xhdr{Managerial implications: } The core intuition behind the Grace Period is that independent randomized decisions for each customer often lead to unfair outcomes. However, we cannot eliminate the randomness, as it plays a crucial role in maximizing profit. The Grace Period addresses this by shifting randomness away from individual decisions and applying it instead to the total number of customers receiving a particular decision. This approach preserves revenue potential while mitigating fairness issues. Furthermore, implementing the Grace Period depends on the specific decision algorithm, with tailored methods provided for each to incorporate this mechanism effectively.}

\maketitle

\section{Introduction} \label{sec:intro}

Revenue management plays a pivotal role in various sectors including airline, retail, advertising, and hospitality, as evidenced by previous research \cite{williamson1992airline,talluri1998analysis, talluri2004revenue, talluri2004theory}. In the most standard version, sequential decision-making for revenue management encompasses the following seller-customer interaction: The seller has access to limited-capacity \emph{resources} of potentially more than one \emph{types}. Customers of different \emph{types} arrive sequentially (the order of arrival can be either \emph{stochastic} or \emph{adversarial}) over a finite period of time. Upon arrival, the seller can choose either to \emph{accept} (i.e., sell the item to them) or \emph{reject} (i.e., not sell) the customer. Acceptance leads to revenue generation equal to the payment made by the customer, alongside the consumption of specific units of resources based on the customer's type. On the other hand, rejection of a customer does not lead to any revenue or resource usage. The goal of the seller is to maximize the total anticipated revenue throughout the given period, while adhering to the capacity constraints of all resources.

This work emphasizes the equitable treatment of customers through the requirement of \emph{individual fairness}. The concept of fairness is crucial not only from an ethical standpoint but also for its impact on customer behavior and potential to increase revenue. For example, think about the following (hypothetical, for now) scenario. Two stores, A and B, offer an identical product at a same price. However, Store A selectively offers discounts leading to perceived unfairness among customers. One may think that such perceived unfair treatment may deter customers from buying at Store A, possibly benefiting Store B. { \citet{xia2004price} support this intuition by showing that when customers perceive a high degree of transaction similarity—such as purchasing the same product under similar conditions or within a short time frame—price discrepancies are more likely to be judged as unfair. They further argue that perceived unfairness can trigger both rational, money-focused responses (e.g., complaints or refund requests) and emotional reactions (e.g., anger or revenge), ultimately eroding customer trust in the seller. Reinforcing this view, \cite{okun2011prices} explains that fair treatment of each individual is instrumental for a profit-maximizing firm to the maximization of long-run profits. To support even further that customer reactions align with our aforementioned intuition, we conducted a large-scale online survey on Prolific, detailed in Section \ref{sec:experiment}. Our survey's findings validate both our priors and the previous research; customers do indeed perceive various forms of unfair treatment from sellers and this in turn affects the probability of them buying again from certain stores.}

{ To incorporate fairness into revenue management, we need a meaningful way to measure the extent of equitable treatment. A fully fair policy—where all customers requesting the same resource are either always accepted or always rejected—would eliminate any unfairness but could significantly reduce revenue by failing to allocate resources efficiently. Instead, a more practical approach is to quantify fairness \emph{probabilistically}, assessing the likelihood that two customers requesting the same resource receive different decisions. This allows us to evaluate fairness while preserving decision-making flexibility and avoiding rigid rules that could harm revenue optimization.

Building on the idea that fairness should be measured probabilistically, we must carefully consider \emph{which} probability matters. There are two natural ways to define fairness: one from the decision-maker's perspective and one from the customer's perspective. The decision-maker might consider fairness in terms of ensuring that all customers requesting the same resource are accepted with the same probability (see e.g., \citep{arsenis2022individual}). However, this does not fully capture the individual customer's perception of fairness, which is shaped by the likelihood that two such customers receive different outcomes—an event that can lead to a rejected customer perceiving unfair treatment. Since fairness concerns arise when one customer is accepted while another is rejected (\cite{xia2004price}), we focus on minimizing the probability of such disparities to ensure that perceived unfairness occurs as infrequently as possible.

To illustrate the difference, consider two algorithms. In the first, two customers requesting the same resource are each accepted with probability 0.3, while in the second, both are accepted with probability 0.5. From the decision-maker’s standpoint, both algorithms appear fair since they treat all such customers identically in terms of individual acceptance probabilities. However, their actual impact on customers differs significantly. In the first case, given that the second customer observes that the first one is accepted, the probability to be rejected is \( 0.3\), while in the second case, this probability increases to \( 0.5\). Thus, even when an algorithm assigns equal acceptance probabilities, the probability of \emph{observed} disparate treatment may vary, meaning fairness at the decision-making level does not necessarily translate to fairness as experienced by customers.  

In this paper, we focus on \emph{fairness from the customer's perspective}, ensuring that the probability of disparate treatment—where one customer is accepted while another is rejected—is explicitly bounded. The exact definition of the fairness metric is provided in Section \ref{sec:def}, where we closely examine its key aspects. We recognize that there is no single, unified way to define fairness; we have included an extensive discussion with our thoughts about how we came up with this definition in Section~\ref{sec:discussion}.
}

In this paper, our primary objective is to incorporate the principle of individual fairness into the most commonly used algorithms solving the quantity-based revenue management problem, including primal methods such as Deterministic Linear Programming with Probabilistic Assignment (\AlgDLP), Resolving Deterministic Linear Programming with Probabilistic Assignment (\AlgRDLP), dual methods such as  Static Bid Price Control (\AlgSBPC), Dynamic Bid Price Control (\AlgDBPC), quota-based methods such as Booking Limit (\AlgBL), and Nesting (\AlgN). We modify each algorithm one by one to meet individual fairness criteria as defined in Definition \ref{def:ifmetrics}. In addition to implementing individual fairness, we also strive for revenue optimization, ensuring that incorporating fairness does not lead to any loss compared to the original algorithm's performance. 
 By adapting these well-known algorithms, we aim to ensure that any revenue management problems addressed by them also uphold the principles of individual fairness.

\subsection{Main Contributions}

\xhdr{``Grace Period'' Design.} In Section \ref{sec:gpd}, we introduce the \emph{Grace Period} which we find to be the workhorse concept for achieving individual fairness in revenue management. At a simplified level, the Grace Period works as follows: 
As the seller's available resources approach depletion, an arriving customer is either accepted or rejected based on the treatment of the preceding customer. If the previous customer was accepted, the succeeding customer is likewise accepted with a probability of $(1-\alpha)$, where 
$\alpha$ is a small fairness parameter capturing the allowable probability of unequal treatment; and the rigorous definition can be found in Definition \ref{def:ifmetrics} in Section \ref{sec:def}. However, if the previous customer was rejected, the incoming customer is also declined.
This design (which gets utilized only when the resources are almost depleted) forms a grace period during which the probability of unequal treatment is confined. We remark that the Grace Period for customers of type $i$ can (and will in general) be different from the Grace Period for customers of type $j$ (for $i \neq j$). The tricky part is to guarantee that despite these distinct grace periods, the seller does not sell resources beyond their allowable capacity. We showcase how the Grace Period can be used via an example analysis of the First-Come-First-Serve ($\FCFS$) algorithm.


The rationale for the Grace Period Design lies in its mitigation of unfairness. If we consider a scenario where the Grace Period Design isn't applied and we adhere to a first-come-first-serve model, the pattern of acceptances and rejections is distinctly binary; the early arrivals are accepted, followed by a series of rejections as resources are exhausted. In such a scenario, the initial group of customers who are declined would be unfairly treated (per Definition~\ref{def:ifmetrics}). However, the application of the Grace Period Design spreads this perception of unfairness more equally among earlier arrivals. Hence, this design maintains individual fairness among every customer with high probability.

\xhdr{Grace-Period-Enhanced Algorithms for Revenue Management.} In Sections \ref{sec:stochatic} and \ref{sec:adversarial}, we integrate the Grace Period design into six well-known algorithm structures frequently used in revenue management (RM): \AlgDLP, \AlgRDLP, \AlgSBPC, \AlgDBPC, \AlgBL and \AlgN. The first four \AlgDLP, \AlgRDLP, \AlgSBPC, \AlgDBPC are used for stochastic customer arrivals, while \AlgBL and \AlgN are used for adversarial arrivals. For completeness, we outline each of these algorithms together with the explanation of why they are not individually fair in Section \ref{subsec:notfair}. Based on the different structure of these algorithms, we employ different methods to incorporate the grace period design into them. For example, in \AlgDLP, each type $i$ customer is accepted with certain probability $p_i$. This can lead to unfair outcomes: consider two consecutive type \( i \) customers—if the first is accepted, the second will be rejected with probability \( 1 - p_i \), which is the probability that the second customer perceives unfair treatment. The reverse situation also applies. When \( p_i \) is small, the overall probability that customers experience disparate treatment increases, amplifying perceived unfairness. To address this, we avoid random decisions for individual arrivals. Instead, we divide the timeline into segments, accepting a random number of customers in a first-come-first-serve manner per segment. Grace periods \emph{inside} each segment ensure fairness when decisions change from acceptance to rejection within each segment; grace periods \emph{between each segment} guarantee fairness when decisions change from rejection to acceptance between the segments. 
For clarity, we refer to the new algorithms as \textsf{GP-Enhanced}.

\xhdr{Minimal Trade-off between Revenue Maximization and Individual Fairness in an Asymptotic Regime.}  
Contrary to the common belief that there is a fundamental trade-off between maximizing revenue and ensuring individual fairness, we show that standard RM algorithms, when augmented with the Grace Period design, can achieve fairness with minimal revenue loss in the asymptotic regime.

In addition to the fairness parameter \( \alpha \) discussed above, our fairness definition (see Definition \ref{def:ifmetrics}) includes a second parameter \( \delta > 0 \), which captures the stochastic nature of the arrival process. Specifically, fairness is required to hold with probability at least \( 1 - \delta \), allowing the algorithm to remain robust against rare, extreme arrival patterns. In the following performance summaries, we set \( \delta = 1/T \), where \( T \) is the length of the time horizon. This choice ensures fairness holds with high probability while enabling a fair comparison across algorithms.  

In the stochastic arrival setting, algorithm performance is evaluated using \emph{regret}, defined as the expected gap between the revenue earned by the algorithm and the revenue of the offline optimal policy with full knowledge of demand. Table \ref{table:stochastic} summarizes the regret bounds for our fairness-aware \textsf{GP-Enhanced} algorithms and their original counterparts. All results in the table assume \( \delta = 1/T \), and for any fixed \( \alpha \in (0,1) \), the asymptotic regret remains unchanged—demonstrating that fairness can be achieved without sacrificing long-run revenue. For settings where the customer arrival process is adversarial, the performance of RM algorithms is measured in terms of the \emph{competitive ratio}; that is, the worst-case ratio (across all possible arrival instances) between the revenue generated by the algorithm and the hindsight optimal (i.e., the maximum revenue assuming full knowledge of the realized demand). Table \ref{table:adversarial} contains a summary of the competitive ratio for \textsf{GP-Enhanced} algorithms for adversarial arrivals versus the traditional non-individually fair counterparts.worst-case environments.


\begin{table}[ht]
\small
\caption{Regret of Original Algorithms and {\textsf{GP-Enhanced}} Algorithms under Stochastic Arrivals}\label{table:stochastic}
\begin{center}
\begin{tabular}{ | m{3.6cm} | r | r | }
\hline
{\bf Original Algorithm}       & {\bf Regret of Original Algorithm}    & {\bf Regret of {\textsf{GP-Enhanced}} Algorithm} \\
\hline
\AlgDLP & $O(\sqrt{T})$ & $O(\sqrt{T})$ (Theorem \ref{thm:RDLPrevise}) \\
\hline
\AlgRDLP (resolve once) & $O(T^{1/3})$ & $O(T^{1/3})$ (Theorem \ref{thm:RDLPrevise})  \\
\hline
\AlgSBPC & $O(\sqrt{T})$ & $O(\sqrt{T})$ (Theorem \ref{thm:FBPCrevise})  \\
\hline
\AlgDBPC & $O(\sqrt{T})$ & $O(T^{2/3} \log T)$ (Theorem \ref{thm:reviseogd})   \\
\hline
\end{tabular}
\end{center}
\end{table}

\begin{table}[ht]
\small
\centering 
\caption{Competitive Ratio of Original Algorithms and \textsf{GP-Enhanced} Algorithms under Adversarial Arrivals. $C \in [0,1]$ is a setting-specific value that is specified in the analysis, $m$ is the number of resources}.
\label{table:adversarial}
\begin{threeparttable}
\begin{tabular}{ | m{3.5cm} | >{\raggedleft\arraybackslash}m{5cm} | >{\raggedleft\arraybackslash}m{7cm} | }
\hline
{\bf Original Algorithm} & {\bf Competitive Ratio of Original Algorithm} & {\bf Competitive Ratio of {\textsf{GP-Enhanced}} Algorithm} \\
\hline
\AlgBL & $C$ & $C-O(\log m / m)$ (Theorem \ref{thm:revisebooking}) \\
\hline
\AlgN & $C$ & $C-O(\log m / m)$ (Theorem \ref{thm:revisenesting}) \\
\hline
\end{tabular}
\end{threeparttable}
\end{table}

\xhdr{\emph{No} Additional \emph{Computational} Cost for Implementing Individual Fairness.} One might conjecture that simultaneously achieving optimal revenue and being individually fair might necessitate additional computational cost for the algorithm; but this is \emph{not} the case. In all \textsf{GP-Enhanced} algorithms, we have preserved the algorithms' original structure to the maximum extent possible, maintaining the order of computational complexity equivalent to that of the original algorithm. 


\xhdr{Survey Highlighting the Significance of Individual Fairness in RM.} We conducted an online survey between October and November 2023 on Prolific about people's perception of unfair treatment in RM. Our findings suggest: (i) When customers witness behavior in a store that is considered ``unfair" (as described in Definition \ref{def:ifmetrics}) before they shop, their inclination to buy decreases. (ii) A customer who has bought from a store in the past and then experiences what they perceive as unfair treatment is likely to decrease their purchasing from that store in the future. (iii) Individuals perceive different treatment happening in the near term as more unfair compared to those in the distant past. With these observations, we can evaluate the importance of individual fairness to business and validate the mathematical definition of individual fairness (Definition \ref{def:ifmetrics}).

{We have included a discussion on additional related work in Appendix \ref{append:intro}, where we discuss the time variant individual fairness metrics that have been traditionally used in the literature. Importantly, we categorize most previous definitions via an axiomatic approach. We also provide a comprehensive literature review on modern revenue management models, price-based revenue management, and the group fairness in revenue management.}

\section{{Classical Network Revenue Management Model \& Preliminaries}}\label{sec:classicalnrm} 

{ 
For ease of exposition, we present the classical \emph{network revenue management model}\footnote{There is also a special case where the decision-maker only has one type of resource, and the special model is called as \emph{single-leg revenue management}.}. In the main text, we will operate under the assumption that all algorithms are applied to this classical model. } Next, we review the classical network revenue management model.

In classical network revenue management, there are $L$ types of indivisible resources, each with a capacity of $m_j, j \in [L]$. All capacities $m_j, \forall j \in [L]$, scale consistently, i.e., there exist constants $q_j$ for every $j \in [L]$ such that $m_j=q_j m$. There is a total of $T$ ---where $T$ is also scaling with $m$--- customers, arriving one by one. There are $n$ customer types. When a customer of type $i \in [n]$ arrives, they may request multiple units from several resource types $j$; we denote this as $A_{ij}$. Matrix $\A \in \bbR_{+}^{n \times L}$ is called the \emph{demand} matrix. Without loss of generality, we assume that $A_{ij} \leq \bar a$. When a type $i$ customer arrives in the system, the decision-maker has two choices: either accept the customer, allocate $A_{ij}$ units of resource $j$, and earn $r_i$ in revenue, where $r_i$ are known constants; or decline the customer's request. The primary objective is to \emph{maximize cumulative revenue}.

We study two variants of the above model based on the \emph{customers' arrival process}. For the first variant (Section~\ref{sec:stochatic}), we assume that customers arrive sequentially at random; i.e., each customer of type $i$ has an associated arrival rate $\lambda_i$. In every round, at most one customer will arrive, with the probability of a type $i$ customer arriving being $\frac{\lambda_i}{\sum_{i'=0}^{n}\lambda_{i'}}$\footnote{Typically, we normalize all arrival rates so that $\sum_{i'=0}^{n} \lambda_{i'}=1$.}. Here, $\lambda_0$ represents the rate for no arrivals. In Sections \ref{subsec:dlp}, and \ref{subsec:bid}, we study the case where the arrival rates $\lambda_i$ are known; Section~\ref{subsec:algdbpc} studies the case where the rates $\lambda_i$ are unknown. Under stochastic arrivals, \textit{regret} is commonly used for measuring the performance of an online algorithm. For the second variant (discussed in Section \ref{sec:adversarial}), we study the case where an adversary determines both the quantity and sequence of each type of customer, commonly referred to as the worst-case/adversarial scenario. Under adversarial arrivals, the \textit{competitive ratio} is the metric to study the performance of an online algorithm. 

\subsection{Regret and Competitive Ratio Analysis} \label{subsec:regret}

We first focus on the \emph{stochastic} arrival setting. Let $\Lambda_i(T)$ be the random variable corresponding to the total number of type $i$ customers arriving between $[0,T]$. The \textit{hindsight optimum} denotes the optimal revenue achievable if the exact number of arrivals for each customer type is known in advance. For the classic network RM, the hindsight optimum is defined as
\begin{align} \label{eq:defho}
    \HO = \max_{\mathbf{x}} \Big\{ \sum_{i\in [n]}r_ix_i \quad \text{ s.t. } \sum_{i\in [n]}\mathbf{A}_ix_i \leq \mathbf{m} \text{, } 0 \leq x_i \leq \Lambda_i(T), \forall i \in [n] \Big\},
\end{align}
where in Eq.~\eqref{eq:defho}, $\mathbf{x}$ is the decision variable vector representing the quantity of each type of customer to accept, $\mathbf{A}_i$ is the $i^{\text{th}}$ column of preference matrix $\mathbf{A}$, $\textbf{m}$ is the vector containing the capacity of each resource $m_j$, and $\Lambda_i(T)$ is a known value representing the exact number of type $i$ arrivals.

Given that the hindsight optimum is privy to the precise count of each customer type's arrivals, it is typically utilized as a benchmark to evaluate the online algorithms. The following definition introduces the concept of \emph{regret}.

\begin{definition} \label{def:regret}
Given $\Revp$ as the expected revenue generated by the online algorithm $\pi$, the regret of algorithm $\pi$ is defined as $\Reg(\pi) := \E[\HO]-\Revp$.
\end{definition}

Note that since the hindsight optimum is informed of the precise number of each type's arrivals, the resulting revenue is an upper bound for the revenue of any online algorithm. Hence, $\Reg \geq 0$, and the smaller the regret, the better.

Next, we consider the \emph{adversarial} arrival setting. Let $\mathcal{I}$ represent the set of all potential arrival instances. For any $I \in \mathcal{I}$, let $\HO(I)$ denote the maximum revenue achievable for arrival instance $I$. We denote $\text{Rev}({\pi}(I))$ as the expected revenue elicited by an online algorithm $\pi$ under the specific arrival instance $I$. 

\begin{definition} \label{def:cr}
The \emph{competitive ratio} of an algorithm $\pi$ is defined as $\CR(\pi) = \inf_{I \in \mathcal{I}}\frac{\text{Rev}({\pi}(I))}{\HO(I)}$.
\end{definition}

Note that competitive ratio is \emph{not} equivalent to regret. If the competitive ratio of an online algorithm $\pi$ equals $0$, this implies that a sequence exists in which the algorithm $\pi$ fails to generate \emph{any} revenue. However, generally speaking, it is possible that $\pi$ could perform efficiently on the majority of sequences, and if the arrival process is stochastic, $\pi$ could also incur diminishing regret.

In the next two subsections, we outline the most popular algorithmic structures for RM with \emph{stochastic} and \emph{adversarial} arrivals. The pseudo-code for all the algorithms introduced below can be found in the Appendix.

\subsection{Review: Popular Algorithm Structures under Stochastic Arrivals} \label{subsec:sto}

{
In this section, we provide a review of existing algorithmic frameworks designed for stochastic arrival processes. These algorithms can be broadly categorized into two main approaches:  
\begin{enumerate}
    \item \textit{Primal Methods (Section \ref{subsubsec:primal}):  } These algorithms rely on the primal solution of the linear programming formulation to guide decision-making.  
    \item \textit{Dual Methods (Section \ref{subsubsec:dual}): } These approaches leverage the dual solution of the linear programming formulation to inform allocation decisions.  
\end{enumerate}
}

\subsubsection{Primal Methods} \label{subsubsec:primal}

\text{ }


\xhdr{Algorithm \AlgDLP.} The Deterministic Linear Programming with Probabilistic Assignment algorithm (\AlgDLP) solves the deterministic linear programming at the beginning and uses the optimal primal solution to make probabilistic decisions, where the DLP formulation is derived by replacing all random variables with their expected values in the hindsight formulation Eq.~\eqref{eq:defho}. In the network RM model, since $\lambda_i$ is the likelihood of type $i$ arrival in each time epoch, the total number of type $i$ customers arriving between $[0,T]$ is $\Lambda_i(T) \sim \text{Bin}(T,\lambda_i)$. The DLP is defined as:
\begin{align} \label{eq:defdlp}
    \text{DLP}^{\star} = \max_{\mathbf{x}} \Big\{ T\sum_{i\in [n]}r_ix_i \quad \text{ s.t. } \sum_{i\in [n]}\mathbf{A}_ix_i \leq \frac{\mathbf{m}}{T} \text{, } 0 \leq x_i \leq \lambda_i, \forall i \in [n] \Big\},
\end{align}

We denote the optimal solution of Eq.~\eqref{eq:defdlp} as $\mathbf{x}^{*}$ (assuming it is unique). Algorithm \AlgDLP then accepts each type $i$ customer with a probability of $\frac{x_i^{*}}{\lambda_i}$, given sufficient capacity. The details can be found in Algorithm \ref{alg:DLP} in Appendix \ref{appendix:algs}. The asymptotic regret of Algorithm \AlgDLP for classical network RM is $O(\sqrt{T})$ \citep{gallego1994optimal,gallego1997multiproduct}.



\xhdr{Algorithm Resolving DLP (\AlgRDLP).} 
The \AlgRDLP algorithm addresses DLP's limitation of not accounting for the demand's randomness after $t = 0$. At a high level, \AlgRDLP solves the DLP and employs its optimal primal solution, denoted as $\mathbf{x}^{*}$, to make probabilistic assignments. However, at time point $t^{*}$, the applicability of the initial optimal solution $\mathbf{x}^{*}$ for making decisions may be compromised due to the random noise of the stochastic arrival process and probabilistic decision behavior.
Hence, we re-solve the DLP with the remaining capacity and remaining time horizon, and continue making probabilistic assignments by the new optimal solution $\mathbf{\tilde{x}^{*}}$ until the end. \cite{reiman2008asymptotically, jasin2012re, bumpensanti2020re} apply this approach with different resolving times in revenue management and achieve $o(\sqrt{T})$ regret.

{
An alternative version of \AlgRDLP, proposed by \cite{vera2019bayesian}, takes a different approach to make online decisions after resolving the DLP at every single time period. Instead of using probabilistic assignments, this version employs the compensated coupling technique, making deterministic accept/reject decisions based on the resolved optimal primal solutions. This method achieves a regret bound comparable to the standard \AlgRDLP. We will discuss in Section \ref{subsec:notfair} its inability to alleviate our fairness concerns.
} 

\subsubsection{Dual Methods} \label{subsubsec:dual}

\text{ }

Primal methods in revenue management use the dual solution of a linear program to guide allocation decisions. 

\xhdr{Algorithm Static Bid Price Control (\AlgSBPC).} In contrast to the \AlgDLP and \AlgRDLP algorithms, bid price control is a threshold-based policy that makes deterministic decisions for every arriving customer. 
A Static Bid Price Control (\AlgSBPC) uses a consistent threshold price for each leg, which remains unchanged over time. A request is accepted if the revenue it will provide the supplier exceeds the threshold price and is otherwise rejected.

To obtain the optimal fixed threshold price, \cite{williamson1992airline, talluri1998analysis, talluri2004theory} propose solving the dual problem of the DLP at the beginning, denoting the optimal dual variable as $\mathbf{\theta}^{*} \in \mathbb{R}^L$ (assuming uniqueness). When each customer of type $i$ arrives, if the revenue $r_i$ exceeds the aggregated bid price $\sum_{j=1}^L \theta_j^{*}A_{ij}$, the customer is accepted; if not, the customer is rejected. The regret of \AlgSBPC for network RM is $O(\sqrt{T})$.

{ 
\xhdr{Algorithm Dynamic Bid Price Control (\AlgDBPC).}
Similarly to primal methods, dual-based approaches can also dynamically update the dual variables at each time period. However, unlike primal methods, which often require resolving an LP, dual methods are well-known for their computational efficiency. Rather than recomputing the optimal dual solution at every step, these methods rely on first-order online optimization techniques to update the bid price dynamically. For example, \cite{li2020simple, sun2020near, balseiro2023best} propose a dynamic bid price control approach, leveraging online gradient descent or online mirror descent (\AlgDBPC). By continuously adjusting the bid prices, these algorithms set a dynamic threshold price for each resource, making deterministic accept/reject decisions for every customer.  

Compared to \AlgSBPC, \AlgDBPC achieves the same \( O(\sqrt{T}) \) regret bound under stochastic arrivals while offering two key advantages: (i) No prior knowledge of arrival rates: the algorithm remains effective even when the arrival rates \( \lambda_i \) are entirely unknown. (ii) No need to solve any primal or dual LP. However, as we discuss in Section \ref{subsec:notfair}, \AlgDBPC does not satisfy our fairness desiderata. 
}

\subsection{Review: Popular Algorithm Structures under Adversarial Arrivals}


\xhdr{Booking Limit (\AlgBL).} In the \AlgBL algorithm~\citep{williamson1992airline}, the supplier assigns a fixed quota/booking limit $\mathbf{b}=\{b_1,b_2,\ldots,b_n\}$ for each customer type. In the asymptotic setting, all booking limits scale with the resource quantity; hence, $b_i=\Theta(m)$ for $i \in [n]$. The algorithm accepts any type $i$ customers provided that there is sufficient resource capacity and the number of accepted type $i$ customers has not yet reached $b_i$. Beyond this point, all type $i$ customers are rejected. 

\xhdr{\AlgN.}
The \AlgN algorithm~\citep{williamson1992airline} is applicable for settings with a single type of resource and where customers can be categorized and ranked based on the revenue $r_i$ that they add to the supplier. Unlike the approach of assigning individual booking limits to each customer type (which is suboptimal) the \AlgN algorithm assigns booking limits to aggregated categories of customer types. For instance, in a situation with three customer types ranked as $r_1 > r_2 > r_3$, the algorithm sets a booking limit $b_1$ for type 1 customers and $b_2$ for the combined group of type 1 and type 2 customers. This method demonstrates superior performance compared to traditional booking limits and is considered optimal in the context of single-resource scenarios. However, a significant limitation of the \AlgN algorithm is its inapplicability in multi-resource environments, where ranking customers becomes impossible.


The \AlgN framework shares some similarities with the \AlgBL algorithm. Initially, $n$ quotas $\mathbf{b}=\{b_1,b_2,\ldots,b_n\}$ need to be established. Following this, $b_i$ is assigned as a quota for the total number of type $i$, $i+1$, $\ldots$, $n$ customers. 

{After introducing the prevalent algorithmic frameworks, it is imperative to underscore that, albeit our discourse primarily navigates through the classical network revenue management model for elucidatory convenience, our findings retain their applicability across a broad spectrum of more realistic revenue management models delineated in Section \ref{subsec:other}. This is attributed to our methodological refinement of the aforementioned algorithmic frameworks, thereby rendering these algorithms amenable to addressing a wide range of modern revenue management challenges. }

{
\section{Definition of Individual Fairness Metrics} \label{sec:def}

In Section \ref{sec:intro}, we conceptually introduced individual fairness in quantity-based revenue management from the customer’s perspective. Recall, a customer does not perceive unfairness simply from being denied a resource; rather, unfairness arises when some customers receive the resource while others, requesting the same resource under similar conditions, do not. To quantify this disparity in treatment, we use probability as a measure of fairness. Ensuring individual fairness requires minimizing the probability that two similar customers receive different treatment. Having established the classical quantity-based network revenue management model in Section \ref{sec:classicalnrm}, we now formally define the individual fairness metric.

\begin{definition}[Individual Fairness in Revenue Management] \label{def:ifmetrics}
Let $n$ be the number of customer types, with each type $i \in [n]$ having $n_i$ customers arriving. Let $[n_i]$ be the set which contains all type $i$ customers. An algorithm $\mathcal{A}$ is $(\alpha,\delta)$-fair for any choice of $\alpha, \delta \in (0,1)$ if for each customer of type $i$, with a probability at least $1-\delta$, for any $u, v \in [n_i]$, 
\begin{equation}\label{eq:tempfair} 
\mathbb{P}\left[i(v) \succ_{\calA} i(u) \right] \leq \alpha \cdot d(u,v),
\end{equation}
where $i(u), i(v)$ denote the $u$-th and $v$-th customer's of type $i$, and \( i(v) \succ_{\calA} i(u) \) indicates that algorithm \( \mathcal{A} \) does not accept \( i(u) \) given that it accepts customer \( i(v) \), namely, $\mathbb{P}\left[i(v) \succ_{\calA} i(u) \right] = \mathbb{P}\left[i(v) \text{ is accepted} |  i(u) \text{ is rejected}\right]$. \( d(u,v) \) is a general time-distance function that satisfies the following properties:  

\begin{enumerate}  
    \item \( d(u,v) \) is increasing with respect to \( |u-v| \).  
    \item \( d(u,v) = 1 \) if \( |u-v| = 1 \).
    \item For any $\alpha \in (0,1)$ chosen by the seller, there exist $u, v$ within the seller-buyer interaction horizon $T$ such that $d(u,v) = 1/\alpha$. Specifically, let $w_\alpha >1$ (independent of $T$) be the smallest positive integer such that $|u - v| = w_\alpha$ and $d(u,v)\geq 1/\alpha$,
\end{enumerate} 
\end{definition}

This definition ensures that fairness constraints hold with high probability, allowing for more flexible and practical revenue management policies while mitigating inefficiencies caused by extreme arrival sequences. At a high level, property 3 implies that customers do not really care about being treated differently with customers that are very far into the future. To better understand the implications and design choices behind this fairness metric, we include an extensive discussion in Section~\ref{sec:discussion}. 

\subsection{Unfairness in Classical RM Algorithms} \label{subsec:notfair}

After introducing the fairness metric in Definition \ref{def:ifmetrics}, we show that all classical revenue management (RM) algorithm structures can violate individual fairness for certain parameter choices of \( (\alpha, \delta) \). We classify the sources of unfairness into three categories:

\xhdr{(i) Unfairness due to resource capacity limits.  } This form of unfairness arises from the fundamental constraint that resources are limited. It affects all algorithms: for example, consider two type \( i \) customers who arrive close to each other. If the earlier customer arrives when the resource is already depleted, they will be rejected, while the later customer may have been accepted had they arrived earlier. Even though these customers arrive close in time and request the same resource, one may be accepted while the other is inevitably rejected simply because the resource becomes unavailable in between their arrivals.

\xhdr{(ii) Unfairness due to randomized individual decisions. }  
Algorithms such as \AlgDLP and \AlgRDLP fall into this category. These algorithms assign each type \( i \) customer an acceptance probability of \( \frac{x_i^*}{\lambda_i} \), independently of others. While this appears fair from the decision-maker’s perspective, it results in observable disparities. For instance, two consecutive type \( i \) customers may receive different outcomes—one accepted, the other rejected—with a non-negligible probability. If \( \alpha \) is smaller than this disparity probability ($1-\frac{x_i^*}{\lambda_i}$), the algorithm violates the fairness constraint.

\xhdr{(iii) Unfairness due to deterministic individual decisions. } This issue arises in algorithms that make accept/reject decisions for each individual based on deterministic rules. Notable examples include the alternative version of \AlgRDLP proposed in \cite{vera2019bayesian} and the \AlgDBPC algorithm. These algorithms resolve a linear program (either primal or dual) at each time step and deterministically decide whether to accept or reject a customer. As a result, consecutive customers of the same type may experience alternating treatment patterns—e.g., one accepted, the next rejected—even when they arrive in close succession. This deterministic oscillation leads to a violation of the fairness metric for any \( \alpha < 1 \).

\subsection{Objective: Enhancing Classical RM Algorithms to Satisfy Fairness Constraints with Minimal Efficiency Loss} \label{subsec:goal}

Having introduced the fairness metric in Definition~\ref{def:ifmetrics}, we now formalize the main goal of this paper: to enhance classical revenue management (RM) algorithms so that they satisfy the fairness constraint without significantly sacrificing asymptotic performance.

Specifically, we aim to enhance the algorithmic structures discussed in Section~\ref{sec:classicalnrm}—including primal-based, dual-based, and booking-limit-based methods. Roughly speaking, we define ``enhancement'' as the ability to maintain the same computational complexity as the original algorithm while also satisfying our fairness definition. We aim for our enhanced algorithms to maintain near-optimal regret for stochastic settings, and a similar-to-the-original competitive ratio.



}

\section{The ``Grace Period''} \label{sec:gpd}

In this section, we present the concept of the ``grace period'' and present the analysis of its incorporation into the First-Come-First-Serve algorithm (\FCFS). This will serve as a building block for the \textsf{GP-Enhanced} algorithms that we will follow. 
\FCFS works both in the case of \emph{stochastic} and \emph{adversarial} arrival sequences in network revenue management problems and is rather simple: it keeps accepting every arriving customer until the capacity of the requested resources is not enough. The primary sources of unfairness in RM settings (and hence also in the case of \FCFS) are: (i) limited resources, and (ii) the inherent structure of the algorithm. Note that when the available resources exceed the total demand, then \FCFS is always \emph{fair} since it accepts all customers; the problem arises when there are limited resources compared to the demand faced. 

At a high level, the \emph{grace period} is a time interval during which the algorithm (i.e., \FCFS in this case) decides when to stop accepting customers gracefully using randomness.

{
\begin{definition}(Decreasing Grace Period)
The \textit{decreasing grace period $[t_1(i),t_2(i)]$} for type $i$ customers is defined as: for \emph{any} $i(k)$ customer that arrives between round $t_1(i)$ and $t_2(i)$, $i(k)$ is accepted with a probability of $1-\beta$ if customer $i(k-1)$ was accepted, and $i(k)$ is rejected if $i(k-1)$ was rejected, where $\beta = 1 - (1-\alpha)^{1/w_{\alpha}}$ and $w_{\alpha}$ is defined in the third property of $d(u,v)$ in Definition \ref{def:ifmetrics}.\footnote{Note that $w_\alpha$ is a \emph{parameter} of the problem \emph{controlled} by the seller. Specifically, the seller originally chooses the target $\alpha$ together with the distance metric $d(u,v)$ and then just solves the equation $d(u,v) = 1/\alpha$ to identify $w_\alpha$.} In the decreasing grace period design, each customer's acceptance depends only on the outcome of the immediately preceding customer, and is otherwise probabilistically independent. 
\end{definition}
}
Note that the grace period is defined \emph{per customer type}, i.e., in general, each customer type will have a different grace period. The exact mathematical instantiation of the parameters of the grace period (i.e., the interval $[t_1(i), t_2(i)]$) as well as the fairness parameters $\alpha, \delta$ are \emph{algorithm} specific. To illustrate this, we mathematically instantiate the grace period for the \FCFS algorithm below.

\subsection{Decreasing Grace Period in Network Revenue Management under \FCFS} \label{subsec:classicalnrm}

To achieve $(\alpha,\delta)$-fairness in network revenue management under \FCFS, we can utilize the decreasing grace period as follows: for any $\alpha, \delta$, let $\gamma = \log_{1-\beta} \delta$, and assign a decreasing grace period $[t_1(i), t_2(i)]$, where $t_1 (i) = \inf\{t: \min_{j \in [L]}m_j(t) \leq \bar a n \gamma\}$ and $t_2(i) = T$ to each type $i$ customer within the range of $[n]$. Here, $m_j(t)$ denotes the remaining capacity of resource $j$ at time $t$. In other words, in order to address the unfairness arising from resource scarcity, it is sufficient for \FCFS to have a common decreasing grace period for all customer types. As a result, for \FCFS we will drop the dependence on the customer type and denote the grace period as $[t_1, t_2]$. The following theorem states that, by implementing the decreasing grace period as above, we satisfy $(\alpha,\delta)$-fairness, and the revenue loss in comparison to the \FCFS algorithm is bounded.

{
\begin{theorem} \label{thm:FCFSnetwork}
    Assigning a decreasing grace period $[t_1, t_2]$, where $t_1 = \inf\{t: \min_{j \in [L]}m_j(t) \leq \bar a n \gamma\}$ and $t_2 = T$ to each type $i$ customer, where $\gamma = \log_{1-\beta} \delta$, $\beta = 1 - (1-\alpha)^{1/w_{\alpha}}$ and $w_{\alpha}$ is defined in the third property of $d(u,v)$ in Definition \ref{def:ifmetrics}. \FCFS is $(\alpha, \delta)$-fair for network revenue management. Moreover, compared to the \FCFS algorithm without grace period, the revenue loss is bounded by $\frac{\bar a}{\underline a} n \gamma \bar r$, where $\bar{r}=\sup_{i \in [n]} r_i$, $\bar a = \sup_{i,j} A_{ij}$, and $\underline a = \inf_{i,j} A_{ij}$.
\end{theorem}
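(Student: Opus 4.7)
The plan is to prove the two claims—$(\alpha,\delta)$-fairness and the revenue-loss bound—by exploiting the fact that \FCFS and its grace-period-enhanced variant coincide on $[0, t_1)$ and differ only inside the grace window $[t_1, T]$.

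For fairness, I would fix a type $i$ and two indices $u<v$. The event $i(v) \succ_\calA i(u)$ has probability zero: either $i(u)$ arrives pre-grace-period and is thus accepted with certainty, or both $i(u), i(v)$ lie in the grace window and the monotone ``once rejected, always rejected'' rule forces $i(v)$ to be rejected whenever $i(u)$ is. Thus only the reverse quantity $\mathbb{P}[i(u) \succ_\calA i(v)]$ requires work. Letting $Y_1, Y_2, \ldots$ be i.i.d.\ Bernoulli$(1-\alpha)$ coins that drive the type-$i$ grace period, a case analysis on where $u, v$ fall relative to $t_1$ reduces the event to ``all preceding coins equal $1$ and some later coin equals $0$'' over an index range of length at most $v-u$, and a Bernoulli-inequality calculation yields the bound $1-(1-\alpha)^{v-u} \leq \alpha|u-v|$.

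The above geometric calculation is faithful to the true dynamics only on the event that no resource is depleted during the grace period, since otherwise a type-$i$ customer could be rejected for capacity reasons rather than by the coins. To handle this, I would bound $K_i$, the number of type-$i$ customers accepted in $[t_1, T]$, by a Geometric$(\alpha)$ variable, so that $\mathbb{P}[K_i > \gamma] \leq (1-\alpha)^\gamma = \delta$ under the stated $\gamma = \log_{1-\alpha}\delta$. Since the bottleneck resource holds at most $\bar a n\gamma$ units at $t_1$ and each customer consumes at most $\bar a$ of any resource, the event $\{\sum_i K_i \leq n\gamma\}$ keeps every resource positive throughout the grace window; a union bound over the $n$ types controls its complement by $n\delta$, and reparameterizing $\delta$ delivers $(\alpha,\delta)$-fairness.

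For the revenue loss, the two algorithms coincide on $[0, t_1)$, so the gap is entirely due to customers that vanilla \FCFS accepts on $[t_1, T]$. The bottleneck resource holds at most $\bar a n\gamma$ units at $t_1$ and each customer consumes at least $\underline a$ of every resource, so vanilla \FCFS accepts at most $\bar a n\gamma / \underline a$ additional customers on $[t_1, T]$, each contributing at most $\bar r$; hence the revenue gap is at most $\frac{\bar a}{\underline a} n\gamma \bar r$, as claimed. The main obstacle is the coupling between the geometric stopping rule and the capacity constraints: the per-pair fairness bound relies on a clean Geometric$(\alpha)$ analysis that is only faithful on the high-probability resource-survival event, so the Geometric-tail plus union-bound step is the essential quantitative ingredient; once that event is controlled, both claims follow by direct counting.
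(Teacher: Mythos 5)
Your proposal is correct and follows the same overall strategy as the paper's proof: control the no-depletion event via a geometric tail bound combined with pigeonhole/union bound over the $n$ types, argue fairness from the coin structure inside the grace window, and bound the revenue gap by counting at most $\frac{\bar a}{\underline a} n\gamma$ extra acceptances worth at most $\bar r$ each. The fairness step, however, is routed differently. The paper reduces to adjacent pairs via Lemma \ref{lem:nextto}, whereas you compute the pairwise probability directly: you observe that $\mathbb{P}\left[i(v) \succ_{\calA} i(u)\right]=0$ for $v>u$ by the once-rejected-always-rejected monotonicity, and bound the reverse event by $1-(1-\alpha)^{v-u}\le \alpha|u-v|$. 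This is arguably cleaner here, since the paper's lemma is proved by writing the joint probability as the product $p_v(1-p_u)$, which presumes independence of the two acceptance events that the grace-period dynamics do not satisfy; your direct argument sidesteps that. You are also more explicit than the paper about the union bound over types, which yields failure probability $n\delta$ rather than $\delta$ when $\gamma=\log_{1-\alpha}\delta$ (the paper's displayed chain silently drops this factor of $n$); your suggestion to reparameterize, i.e.\ take $\gamma=\log_{1-\alpha}(\delta/n)$, is the honest fix and only perturbs the loss bound by an additive $O(\log n)$ term inside $\gamma$. One small slip: in the fairness paragraph you state that the bottleneck resource holds \emph{at most} $\bar a n\gamma$ units at $t_1$; what the argument needs (and what the first-hitting-time definition of $t_1$ gives, up to an additive $\bar a$) is that every resource still holds \emph{at least} roughly $\bar a n\gamma$ units, so that $n\gamma$ further acceptances cannot deplete it --- ``at most'' is the right direction only in your revenue-loss count, where it matches the paper's argument exactly.
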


\begin{proof}{Proof of Theorem \ref{thm:FCFSnetwork}}
    We first show that without the resource capacity constraint, for any $u, v \in [n_i]$, the decreasing grace period can guarantee that
\begin{equation} \label{eq:tempcons1}
\mathbb{P}\left[i(v) \succ_{\calA} i(u) \right] \leq \alpha \cdot d(u,v).
\end{equation}
W.L.O.G, we assume that $u>v$ and denote $x = u-v$. If $x \geq w_{\alpha}$, Inequality \eqref{eq:tempcons1} holds since $\alpha d(u,v) \geq 1$; this is because $d(u,v)$ is increasing in $|u - v|$ and $d(u,v) = 1/\alpha$ when $|u - v| = w_\alpha$. If $x < w_{\alpha}$, recall the definition of the decreasing grace period: if the previous customer is accepted, the next is accepted with probability $1 - \beta$; otherwise, the next is rejected. Thus, if customer $i(v)$ is accepted and $i(u)$ is rejected, it must be that some customer in $\{i(v+1), \ldots, i(u)\}$ is the first to be rejected after a sequence of accepted customers. Specifically, for each $k \in \{v+1, \ldots, u\}$, we consider the case where $i(v+1), \ldots, i(k-1)$ are accepted and $i(k)$ is rejected. We then sum the probabilities of all such scenarios:

\begin{align*}
    \mathbb{P}\left[i(v) \succ_{\calA} i(u) \right] &= \mathbb{P} \left[ \text{one of the customers in } \{i(v+1), \ldots, i(u)\} \text{ is the first to be rejected}| i(v) \text{ is accepted}    \right]\\&\leq \sum_{j=0}^{x-1}\mathbb{P}\left[ i(v+1), i(v+2), \ldots, i(v+j)  \text{ is accepted and } i(v+j+1) \text{ is rejected} | i(v) \text{ is accepted} \right] \\&= \sum_{j=0}^{x-1}\prod_{\ell=1}^{j} \mathbb{P}\left[ i(v+\ell) \text{ is  accepted} \,\middle|\, i(v), i(v+1), \ldots, i(v+\ell-1) \text{ is accepted} \right] \\ &\quad\quad \cdot \mathbb{P}\left[ i(v+j+1) \text{ is rejected} \,\middle|\, i(v), i(v+1), \ldots, i(v+j) \text{ is accepted} \right] \\&=\sum_{j=0}^{x-1}\prod_{\ell=1}^{j} \mathbb{P}\left[ i(v+\ell) \text{ is accepted} \,\middle|\,  i(v+\ell-1) \text{ is accepted} \right] \\ &\quad\quad \cdot \mathbb{P}\left[ i(v+j+1) \text{ is rejected} \,\middle|\, i(v+j) \text{ is accepted} \right] \\&= \sum_{j=0}^{x-1} (1-\beta)^{j}\beta \\&= \beta \frac{1-(1-\beta)^x}{1-(1-\beta)} \\&= 1-(1-\beta)^x.
\end{align*}

Since $1-(1-\beta)^x$ is increasing in $x$, we have for $x < w_{\alpha}$,
\begin{equation} \label{eq:tempcons2}
1-(1-\beta)^x<1-(1-\beta)^{w_{\alpha}} = 1-\left(1-\left(1-(1-\alpha)^{1/w_{\alpha}}\right)\right)^{w_{\alpha}} = 1-(1-\alpha)=\alpha.
\end{equation}

As $\alpha d(u,v)$ is also an increasing function with $x=u-v$ and $\alpha d(u,v)=\alpha$ when $x=1$, by Equation \eqref{eq:tempcons2}, we have
\[
1-(1-\beta)^x<\alpha \leq \alpha d(u,v).
\]
Therefore, we have for any $u, v \in [n_i]$, the decreasing grace period satisfies Equation \eqref{eq:tempcons1}. If $u<v$, a symmetric argument also holds and we omit the proof.

Define event $E$ as $E:= \{\text{no resource is depleted in the time interval } [0, T] \}$. The above analysis shows that if $E$ happens, then the fairness constraint holds. Next, we show that $E$ happens with probability at least $1-\delta$ by showing that the complement of $E$ happens with probability at most $\delta$.
\begin{align}
    \mathbb{P}(E ^{\mathsf{C}}) \nonumber&\leq \mathbb{P}(\text{more than $\gamma n$ customers are accepted in the grace period}) \\ \label{eq:pfthm1eq1}& \leq \mathbb{P}(\exists i \in [n], \text{ s.t. the number of type } i \text{ customers accepted in the grace period } \geq \gamma) \\  \nonumber &=1 - \mathbb{P}(\exists i \in [n], \text{ s.t. the number of type } i \text{ customers accepted in the grace period } < \gamma) \\ \label{eq:pfthm1eq2}&= (1-\beta)^{\gamma} \\ \nonumber &= (1-\beta)^{\log_{1-\beta}\delta} = \delta,
\end{align}
where \eqref{eq:pfthm1eq1} is due to the pigeonhole principle, and \eqref{eq:pfthm1eq2} is because for each type $i$, the number of type $i$ customers accepted after the grace period starts is a geometric random variable with success probability $\beta$. Therefore, the cdf is $1-(1-\beta)^{\gamma}$.

To complete the proof of the theorem, we need to show that the revenue loss is bounded by $\frac{\bar a}{\underline a} n \gamma \bar r$. In the worst case for the revenue, the algorithm rejects all customers after $t_1$. By the definition of $t_1$, the remaining units for each $j$ are $\bar a n \gamma$. Since $\bar a n \gamma$ units of resource can serve at most $\frac{\bar a}{\underline a} n \gamma$ customers, we have that the revenue loss is at most $\frac{\bar a}{\underline a} n \gamma \bar r$.
\Halmos
\end{proof}
}

We highlight the intuition of why the revised \FCFS is $(\alpha, \delta)$-fair. In the revised algorithm, every customer arriving during $[1,t_1 -1]$ is accepted, but starting from the period $t_1$, the acceptance probability for each customer type progressively decreases. This decrease is conditioned on the preceding customer's realization. { This implies that if the resources are not depleted, the probability that customers of the same type that arrive one after the other get different treatment is bounded by $\beta=1-(1-\alpha)^{1/w_{\alpha}}$. The choice of $\beta$ ensures that, for any pair of customers $i(u)$ and $i(u + x)$ with $x \in [w_{\alpha}]$, the conditional probability that $i(u)$ is accepted and $i(u + x)$ is rejected is at most $\alpha$, thereby satisfying the fairness constraint. For $x > w_{\alpha}$, by the first and third properties of $d(u,v)$ in Definition \ref{def:ifmetrics}, we have $\alpha \cdot d(u, u+x) \geq 1$, so the fairness constraint is automatically satisfied since the probability is always upper bounded by $1$.} At round $t_1$, the remaining capacity of each resource $j \in [L]$ is at least $\bar a n \gamma$. Given that the stopping time (i.e., the time that the algorithm stops accepting them) for each customer type is a geometric random variable, we can apply standard concentration bounds to demonstrate that, with a high probability, all geometric random variables will stop before $\bar a n \gamma$ resource $j$ are exhausted for each $j \in [L]$. This concludes the proof that the revised algorithm is $(\alpha,\delta)$-fair.

To obtain a better understanding about the various parameters describing the revenue loss and the efficiency of our proposed algorithm, note that if $\delta=1/ m $, and given that $\underline a$,  $\bar a$, $n$ and $\bar r$ are constants, Theorem \ref{thm:FCFSnetwork} implies that the loss in revenue relative to the \FCFS algorithm is bounded by $O(\log m)$. The full proof of Theorem \ref{thm:FCFSnetwork} can be found in Appendix \ref{append:gpd}.


\subsection{Increasing Grace Period}

We conclude this section with a related concept, the \textit{increasing grace period}.

\begin{definition} \label{def:increasing_gp}
The \textit{increasing grace period $[t_1(i),t_2(i)]$} for type $i$ customers is defined as: for any customer $i(k)$ that arrives within the interval from $t_1(i)$ to $t_2(i)$, customer $i(k)$ is \emph{accepted} with probability $\beta$ if customer $i(k-1)$ was \emph{rejected}, and customer $i(k)$ is \emph{accepted} if customer $i(k-1)$ was \emph{accepted}.
\end{definition}

The intuition behind the increasing grace period design is that, after a sequence of customer rejections, when the algorithm is ready to start accepting customers again in order to maintain individual fairness, it should not do so abruptly. Instead, the acceptance probability should incrementally increase, with the degree of increase conditioned on the previous customer's outcome.

\section{Individually Fair RM Algorithms under Stochastic Arrivals} \label{sec:stochatic}


\begin{figure}[!tb]
\center
\includegraphics[width=\textwidth]{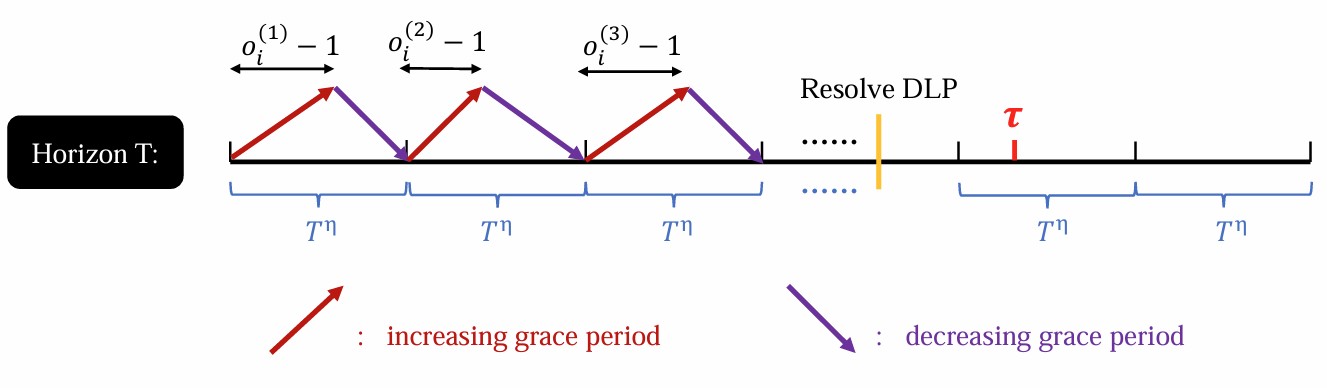}
\caption{\centering Decisions from Algorithm \ref{alg:RDLPrevise} on type $i$ customers} 
\label{fig:RDLP}
\end{figure}

{
In this section, we focus on making four algorithm structures mentioned in Section \ref{subsec:sto} (\AlgDLP, \AlgRDLP, \AlgSBPC, \AlgDBPC) individually fair without incurring significant additional regret. 
Further, as Algorithm \AlgDLP is a variant for \AlgRDLP (resolve 0 times), we analyze \AlgDLP and \AlgRDLP together. All primal and dual methods can be adapted to our revised algorithm structure since we focus only on the (re)solving of the primal/dual LP, rather than the specific decision-making process used by the original algorithms. That is, as long as different algorithms resolve the primal/dual LP at the same time, our revised approach remains consistent, regardless of their individual decision structures.
}


\subsection{Algorithm \AlgDLP and \AlgRDLP} \label{subsec:dlp}


Here, we introduce the Grace Period Enhanced \AlgRDLP algorithm (\GPRDLP). The regret of the original \AlgRDLP algorithm is $O(T^{\eta})$, where $\eta$ is a parameter that depends on the frequency of resolving the DLP. For example, $\eta=1/2$ corresponds to resolving zero times (i.e., Algorithm \AlgDLP), while $\eta = 1/3$ corresponds to resolving one or more times. 


We begin by uniformly partitioning the time horizon $[0,T]$ into $K$ segments. Each time segment has length $\max\{T^{\eta}, 2\bar a n \gamma\}$, where $\gamma = \log_{1-\beta}\delta$. $\bar a n \gamma$ is the length of the grace period. Then, we have $K=\min\{T^{1-\eta},T/(\bar a n \gamma)\}$. We split the discussion into two cases: (1) $T^{\eta}>2\bar a n \gamma$; (2) $T^{\eta} \leq 2\bar a n \gamma$.

\textit{Case 1: $T^{\eta}>\bar a n \gamma$. } In this case, the regret is greater than the grace period length, which means that we are not resolving for many times. We illustrate this case by discussing the resolving once algorithm. We solve the DLP with the initial capacity vector at $t=0$, and solve the DLP with the remaining capacity vector at $t=t^{*}$, where $t^*$ is the re-solving time point of the original \AlgRDLP algorithm. In the first time segment ($k=1$), for each type $i$ customer, we sample a random variable $y_i \sim \text{Bin}(\lambda_iT^{\eta},\frac{x_i^{*}}{\lambda_i})$. Our objective is to accept approximately $y_i$ customers of type $i$ in this interval, akin to the original \AlgRDLP algorithm's acceptance of $\text{Bin}(\lambda_iT^{\eta},\frac{x_i^{*}}{\lambda_i})$ customers of type $i$. However, different from \AlgRDLP's probabilistic allocation, we initially accept $y_i-\bar a n \gamma -1$ customers of type $i$, where $\gamma = \log_{1-\beta} \delta$. Subsequently, we implement a decreasing grace period for type $i$ customers until the end of the first interval. According to Theorem \ref{thm:FCFSnetwork}, this ensures the individual fairness metrics while accepting $z_i^{(1)} = \Theta(\log T)$ fewer customers of type $i$ compared to Algorithm \AlgRDLP.

Starting from the second time segment $(k \geq 2)$, due to the probable rejection of the last few customers in segment $k-1$, an immediate acceptance of customers in segment $k$ is unfair. Thus, we introduce an increasing grace period, progressively shifting from rejection to acceptance. Notably, in segment $k-1$, as $z_i^{(k-1)}$ fewer customers of type $i$ are accepted compared to Algorithm \AlgRDLP, we compensate by accepting $z_i^{(k-1)}$ additional type $i$ customers in segment $k$. { Here, $z_i^{(k)}$ are independently distributed for different $k$, but in segment $k$, we can use the realization of $z_i^{(k-1)}$ to draw} $y_i \sim z_i^{(k-1)}+\text{Bin}(\lambda_iT^{\eta},\frac{x_i^{*}}{\lambda_i})$ and implementing an increasing grace period until the arrival of the $(y_i-\bar a n \gamma)^{\text{th}}$ customer of type $i$. Subsequently, a decreasing grace period is applied until the conclusion of the time segment. 

Lastly, following the methodology introduced in Section \ref{subsec:classicalnrm}, as a resource approaches depletion, an additional decreasing grace period is introduced to conclude the algorithm. The details can be found in Algorithm \ref{alg:RDLPrevise}, and the intuition is illustrated in Figure \ref{fig:RDLP}.

{
\textit{Case 2: $T^{\eta} \leq \bar a n \gamma$.} Suppose the resolving algorithm achieves a regret smaller than the grace period length, such as the \( O(1) \) regret bound shown in \cite{vera2019bayesian, bumpensanti2020re}. In this case, we partition the time horizon into segments of length at least twice the grace period length. However, if the algorithm resolves too frequently—e.g., \cite{vera2019bayesian} resolves the DLP at every time step—this frequency is too high for a  grace period to be maintained.

To address this, we accept a modest sacrifice in revenue in order to maintain the grace period design and enforce the fairness constraint. Consider, for example, the algorithm proposed by \cite{vera2019bayesian}, which resolves the deterministic linear program (DLP) at every time step. Our enhanced version only resolves the DLP at periodic time points \( \eta \bar{a} n \gamma \), where \( \eta \in \left[ T / (\bar{a} n \gamma) \right] \). This modification reduces the computational complexity and ensures fairness through the grace period design, albeit with the possibility of slightly reduced performance compared to the original algorithm. Nevertheless, similar to the analysis in Case 1, the regret of the enhanced algorithm remains bounded by the grace period length \( O(\bar{a} n \gamma) = O(\log_{1-\beta} \delta) \).

}


\begin{algorithm}[h]
\caption{\GPRDLP}
\label{alg:RDLPrevise}

\DontPrintSemicolon 
\KwIn{Preference matrix $\textbf{A}$. Arriving rate vector $\mathbf{\lambda}$. Capacity vector $\textbf{m}$. Resolving time point $t^{*}$. Regret bound of input R-DLP algorithm $O(T^{\eta})$. Fairness parameters $(\alpha,\delta)$. $\gamma=\log_{1-\beta}\delta$.}
\KwOut{Optimal solution vectors $\mathbf{x}^{*}$ and $\tilde{\mathbf{x}}^{*}$.}
\textbf{Initialize:} Vector $\mathbf{z}^{(0)}=0$. $K=T/(\max\{T^{\eta},2\log_{1-\beta}\delta\})$.\;
Solve DLP based on the initial capacity, and denote the optimal primal solution as $\mathbf{x}^{*}$.\;
\For{$k \in \{1,2,...,\lfloor t^{*}/(\max\{T^{\eta},2\log_{1-\beta}\delta\})\rfloor \}$}{
    Run Algorithm \ref{alg:RDLPreviseaux} with $t_1=(k-1)(\max\{T^{\eta},2\log_{1-\beta}\delta\})$, $t_2=k(\max\{T^{\eta},2\log_{1-\beta}\delta\})$, $\mathbf{z}=\mathbf{z}^{(k-1)}$, $\mathbf{x}^{*}$. Get the output $\mathbf{z}^{(k)} \gets \mathbf{w}$.\;
}
Solve DLP based on the remaining capacity and time, and denote the optimal solution as $\tilde{\mathbf{x}}^{*}$.\;
\For{$k \in \{\lceil t^{*}/(\max\{T^{\eta},2\log_{1-\beta}\delta\}) \rceil,\lceil t^{*}/(\max\{T^{\eta},2\log_{1-\beta}\delta\}) \rceil +1,\ldots, K  \}$}{
    Run Algorithm \ref{alg:RDLPreviseaux} with $t_1=(k-1)(\max\{T^{\eta},2\log_{1-\beta}\delta\})$, $t_2=k(\max\{T^{\eta},2\log_{1-\beta}\delta\})$, $\mathbf{z}=\mathbf{z}^{(k-1)}$, $\tilde{\mathbf{x}}^{*}$. Get the output $\mathbf{z}^{(k)} \gets \mathbf{w}$.\;
}
\end{algorithm}

\begin{algorithm}[h]
\caption{Auxiliary Algorithm for Algorithm \ref{alg:RDLPrevise}}
\label{alg:RDLPreviseaux}

\DontPrintSemicolon 
\KwIn{Arriving rate vector $\mathbf{\lambda}$. Capacity vector $\textbf{m}$. Time horizon $(t_1,t_2)$. Fair parameters $(\alpha,\delta)$. Vector $\mathbf{z}$. Optimal primal solution $\mathbf{x}^{*}$.}
\KwOut{Vector $\mathbf{w}$, indicating the sum of rejected type $i$ customers.}
Sample $y_i \sim z_i+\text{Bin}((t_2-t_1)\lambda_i,\frac{x^{*}_i}{\lambda_i})$.\;
\If{$t_1 \neq 1$}{
    Give an increasing grace period $[t_1,o_i-1]$ to type $i$ customer, where $o_i$ is the time when the $(y_i-\bar a n\gamma)^{\text{th}}$ type $i$ customer arrives.\;
}
Give a decreasing grace period $[o_i,t_2]$ to type $i$ customer.\;
Return vector $\mathbf{w}$, whose $i^{\text{th}}$ element is the sum of number of rejected type $i$ customers in the increasing grace period $[t_1,o_i-1]$ and in the decreasing grace period $[o_i,\tilde{o_i}]$, where $\tilde{o_i}$ is the time when the $y_i^{\text{th}}$ type $i$ customer arrives.\;
\For{$t \in \{t_1,t_1+1,\ldots,t_2\}$}{
    \If{$\min_{j \in [L]}m_j(t) - \bar a n \gamma \leq 0$}{
        Give a decreasing grace period $[t,T]$ to all types of customers.\;
    }
}
\end{algorithm}

{
\begin{theorem} \label{thm:RDLPrevise}
{Given any $\alpha, \delta \in (0,1)$, and any input \AlgRDLP algorithm that attains a regret of $O(T^{\eta})$, Algorithm \GPRDLP: (i) is $(\alpha,\delta)$-fair, and (ii) incurs regret $O(\max\{T^{\eta}, \log_{1-\beta} \delta\})$.} 
\end{theorem}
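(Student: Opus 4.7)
The statement breaks into two independent claims---$(\alpha, \delta)$-fairness and the $O(T^\beta)$ regret bound---which I would prove after coupling \GPRDLP with the baseline \AlgRDLP so that the two algorithms share the binomial randomness driving the per-segment targets.

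\textbf{Fairness.} The plan is to invoke Lemma~\ref{lem:nextto} and verify, for each type $i$, that consecutive customers $i(u), i(u+1)$ receive disparate treatment with probability at most $\alpha$. A transition between ``accept'' and ``reject'' can only occur inside one of three grace-period regions: the increasing grace period $[t_1, o_i-1]$ starting every segment $k \geq 2$, the decreasing grace period $[o_i, t_2]$ ending every segment, and the capacity-triggered decreasing grace period $[t,T]$ that fires when some resource drops below $\bar a n \gamma$. By construction, inside any grace period the one-step disparate-treatment probability is already bounded by $\alpha$, so all that remains is to ensure that each grace period actually completes its intended accept$\to$reject (or reject$\to$accept) switch within its window. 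Completion is governed by a geometric random variable with success probability $\alpha$, which stays below $\gamma$ arrivals with probability $1-(1-\alpha)^\gamma = 1-\delta$, exactly as in the proof of Theorem~\ref{thm:FCFSnetwork}. A union bound over the $K=T^{1-\beta}$ segments (absorbing the extra $\log K$ factor into $\gamma = \Theta(\log T)$, which does not affect the asymptotic regret) closes the fairness argument.

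\textbf{Regret.} Let $N_i^{\mathrm{RDLP}}$ and $N_i^{\mathrm{GP}}$ be the number of type $i$ customers accepted by \AlgRDLP and \GPRDLP, respectively, under the coupled binomials. The key observation is that the compensation $y_i = z_i^{(k-1)} + \mathrm{Bin}(\lambda_i T^\beta, x_i^*/\lambda_i)$ is designed so that any shortfall $z_i^{(k-1)}$ incurred in segment $k-1$ is carried into segment $k$'s target. On the good event of the fairness analysis, combined with a Chernoff bound showing that the number of type $i$ arrivals in each segment concentrates around $\lambda_i T^\beta$, I would show inductively that every $z_i^{(k)}$ stays $O(\gamma) = O(\log T)$. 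Hence $|N_i^{\mathrm{RDLP}} - N_i^{\mathrm{GP}}| = O(\log T)$ per type, and summing over the $n$ types together with the $O(\log T)$ revenue lost to the capacity-triggered grace period (as in Theorem~\ref{thm:FCFSnetwork}) contributes only an additive $O(\log T) \leq O(T^\beta)$ to the final regret, yielding $\Reg(\GPRDLP) = O(T^\beta)$.

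\textbf{Main obstacle.} The delicate piece is a \emph{uniform} (not merely in-expectation) bound on the shortfalls $\{z_i^{(k)}\}_{k=1}^K$. Without care, a large $z_i^{(k-1)}$ could inflate the target $y_i$ past the actual number of type $i$ arrivals in segment $k$, truncate the grace periods, and cascade into an even larger $z_i^{(k)}$. Coupling the Chernoff concentration of per-segment arrivals with the geometric concentration of grace-period stopping times, and closing an induction over segments that keeps $z_i^{(k)} \leq c\gamma$ for some absolute constant $c$, is the technical crux on which both the fairness union bound and the $O(T^\beta)$ regret bound rest.
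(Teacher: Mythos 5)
Your overall skeleton (shared-randomness coupling with the baseline, per-segment compensation of shortfalls, Lemma~\ref{lem:nextto} plus the geometric/Theorem~\ref{thm:FCFSnetwork} concentration for fairness) matches the paper's, but your regret argument has a genuine gap: it never deals with the re-solving step at $t^{*}$, which is precisely what makes this \AlgRDLP rather than \AlgDLP. Your coupling assumes the two algorithms ``share the binomial randomness driving the per-segment targets,'' but after $t^{*}$ the binomial \emph{parameters} are no longer the same: \GPRDLP re-solves the DLP from a remaining capacity that differs from \AlgRDLP's (by the accumulated grace-period shortfall), so its re-solved solution $\tilde{\mathbf{y}}^{*}$ differs from $\tilde{\mathbf{x}}^{*}$ and the naive identification of per-segment targets breaks. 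The paper bridges this with an extra step you are missing: the capacity discrepancy at $t^{*}$ is $O(\log T)$, hence $|\tilde{\mathbf{x}}^{*}-\tilde{\mathbf{y}}^{*}|=O(\log T/T)$ (an LP-sensitivity statement), which over the remaining $O(T)$ rounds contributes only $O(\log T)$ to the revenue gap; in addition the paper simply concedes up to $T^{\beta}$ loss in each of the two segments containing $t^{*}$ and the depletion time $\tau$, rather than arguing (as you do, without justification) that the terminal capacity-triggered grace period costs only $O(\log T)$ relative to \AlgRDLP. Without some version of the $t^{*}$ sensitivity step, the claim that per-type acceptance counts differ by $O(\log T)$ over the whole horizon does not follow from your coupling.

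Two further points of comparison. First, the ``main obstacle'' you flag (a uniform, non-cascading bound on the shortfalls $z_i^{(k)}$) is dissolved structurally in the paper: because segment $k$'s target is $y_i \sim z_i^{(k-1)}+\mathrm{Bin}(\lambda_i T^{\beta},x_i^{*}/\lambda_i)$, the coupled differences telescope, $\sum_{k\le s}\bigl(u_i(k)-Y_i(k)\bigr)=w_i(s)\le 2\tfrac{\bar a}{\underline a}\gamma$, so the cumulative gap at any time equals only the \emph{current} segment's grace-period rejections; no induction with per-segment Chernoff bounds is needed. Second, your fairness argument is more conservative than necessary: inside the ordinary segments the one-step disparate-treatment probability is at most $\alpha$ \emph{by construction} of the increasing/decreasing grace periods (whether or not a grace period ``completes'' within its window), so the only probabilistic failure event is resource depletion during the final capacity-triggered grace period, bounded by $\delta$ exactly as in Theorem~\ref{thm:FCFSnetwork}; your union bound over $K=T^{1-\beta}$ segments is not needed and, as stated, would require inflating $\gamma$ beyond the value $\log_{1-\alpha}\delta$ that the algorithm actually uses.
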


\begin{remark}
    Specifically, when setting \( \delta = 1/T \), if \AlgRDLP incurs regret \( T^{\eta} = \omega(\log T) \), then \GPRDLP maintains the same regret bound of \( T^{\eta} \). Conversely, if \AlgRDLP has regret \( T^{\eta} = O(\log T) \), then \GPRDLP achieves a regret bound of \( \Theta(\log T) \). This result shows that the revenue loss introduced by \GPRDLP remains minimal.
\end{remark}
}


\paragraph{Proof Sketch.}
Here, we provide a proof sketch for the case where \RDLP \text{ }re-solves the LP once. Algorithm \GPRDLP is $(\alpha, \delta)$-fair since it introduces a grace period every time that a decision transitions from acceptance to rejection and vice versa. 
To confirm that the regret of Algorithm \GPRDLP is $O(T^{\eta})$, we have
\begin{align*}
\Reg &= \HO - \Revp = \HO - \RDLP + \RDLP -\Revp \\&= O(T^{\eta}) + \RDLP -\Revp,
\end{align*}
where $\RDLP$, $\Revp$ are revenue generated by Algorithm \AlgRDLP and Algorithm \GPRDLP respectively. It suffices to show that 
\[
\RDLP -\Revp = O(\max\{T^{\eta}, \log 1/\delta\})
\]

Let $\tau$ be a random variable indicating the stopping time at which some resource becomes exhausted (triggering the termination of the algorithm), with $\tau$ belonging to time segment $k(\tau)$. Ignoring the grace period and accepting $y_i$ type $i$ customers in each segment, the expected revenue of Algorithm \GPRDLP almost matches that of Algorithm \AlgRDLP for $t \in [0,(k(\tau)-1)T^{\eta}]$.

Furthermore, although the application of both increasing and decreasing grace periods in each time segment may lead to the additional loss of up to $\Theta(\log 1/\delta)$ customers of each type, we accept an equivalent number of customers of each type in the following time segment. By doing this, before $t^{*}$, the expected difference in the number of accepted customers of each type between Algorithm \AlgRDLP and Algorithm \GPRDLP stems only from the last time segment before $t^{*}$, amounting to $\Theta(\log 1/\delta)$. Therefore, the expected difference in remaining capacity between Algorithm \AlgRDLP and Algorithm \GPRDLP is bounded by $\Theta(\log 1\delta)$ at time $t^{*}$. As such, the disparity in the re-solved optimal solutions at $t^{*}$ between Algorithm \AlgRDLP and Algorithm \GPRDLP is negligible. Consequently, we eliminate all revenue loss due to the grace period for all $k(\tau)-1$ time segments. For the time segment $k(\tau)$ of size $T^{\eta}$, we may lose a maximum of $O(T^{\eta})$ in revenue compared to the \AlgRDLP 
algorithm. Hence, we have
\[
\RDLP-\Revp = O(T^{\eta})+O(\log 1/\delta),
\]
The full proof can be found in Appendix \ref{append:Sec5}. 
\qed

\subsection{Algorithm \AlgSBPC} \label{subsec:bid}

In the algorithm \AlgSBPC, an \FCFS approach is employed for each customer type. 
Drawing intuition from our treatment of \FCFS, to guarantee fairness, we only need to offer a decreasing grace period to those types whose bid prices are greater than the aggregated bid prices. From Theorem \ref{thm:FCFSnetwork}, the extra revenue loss of the fair variant is $O(\log 1/\delta)$ for any $\alpha, \delta \in (0,1)$. 

\begin{theorem} \label{thm:FBPCrevise}
Given any $\alpha, \delta \in (0,1)$, and any input \AlgSBPC algorithm with a regret of $O(\sqrt{T})$, by giving a decreasing grace period $[\min_{j \in [L]}m_j - \bar a n \gamma,T]$ to each type $i$ customer (for $i$ such that $r_i > \sum_{j=1}^L \theta_j^{*} A_{ij}$): (i) the algorithm is $(\alpha,\delta)$-fair, (ii) incurs regret $O(\max\{\sqrt{T},\log_{1-\beta} \delta\})$. 
\end{theorem}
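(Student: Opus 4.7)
The proof will mirror the structure of Theorem~\ref{thm:FCFSnetwork}, exploiting the fact that \AlgSBPC makes deterministic per-type decisions (\FCFS within each ``accept'' type) and therefore the only source of unfairness is the transition from acceptance to rejection once a resource is nearly depleted. I split the argument into (i) fairness and (ii) regret.

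\textbf{Step 1: Fairness.} Partition the customer types into the ``reject'' set $R=\{i : r_i \leq \sum_{j=1}^L \theta_j^{*} A_{ij}\}$ and the ``accept'' set $A=\{i: r_i > \sum_{j=1}^L \theta_j^{*} A_{ij}\}$. For $i\in R$, every arriving customer is rejected, so $\mathbb{P}[i(v)\succ_{\calA} i(u)]=0$ trivially. For $i\in A$, before time $t_1:=\inf\{t:\min_{j\in[L]} m_j(t)\leq \bar a n\gamma\}$ every type-$i$ customer is accepted, and from $t_1$ onward the decreasing grace period applies. By the definition of the decreasing grace period, adjacent same-type customers differ in treatment with probability at most $\alpha$, so by Lemma~\ref{lem:nextto} the fairness inequality in Definition~\ref{def:ifmetrics} holds conditional on the event $E=\{$no resource is depleted during the grace period$\}$. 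A verbatim repetition of the pigeonhole + geometric-tail calculation from the proof of Theorem~\ref{thm:FCFSnetwork} shows $\mathbb{P}(E^{\mathsf{C}})\leq (1-\alpha)^{\gamma}=\delta$, which completes the fairness part.

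\textbf{Step 2: Regret decomposition.} Write
\begin{align*}
\Reg(\text{\GPBL}) \;=\; \E[\HO]-\Rev(\text{GP-\AlgSBPC})
\;=\; \underbrace{\bigl(\E[\HO]-\Rev(\text{\AlgSBPC})\bigr)}_{=\,O(\sqrt{T})} \;+\; \underbrace{\bigl(\Rev(\text{\AlgSBPC})-\Rev(\text{GP-\AlgSBPC})\bigr)}_{\text{grace-period loss}}.
\end{align*}
The first term is the known $O(\sqrt{T})$ regret of \AlgSBPC. The second term is the additional revenue that may be lost because the grace period can turn some acceptances into rejections relative to plain \AlgSBPC. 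By the same counting argument as in Theorem~\ref{thm:FCFSnetwork}, once the grace period begins each resource $j$ still has at least $\bar a n\gamma$ units, so the grace period can forgo at most $\tfrac{\bar a}{\underline a} n\gamma$ customers in total, for a revenue loss of at most $\tfrac{\bar a}{\underline a} n\gamma \bar r$. Substituting $\delta=1/T$ gives $\gamma=\log_{1-\alpha}(1/T)=\Theta(\log T)$, hence the grace-period loss is $O(\log T)$.

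\textbf{Step 3: Conclude.} Adding the two contributions yields $\Reg(\text{GP-\AlgSBPC})=O(\sqrt{T})+O(\log T)=O(\sqrt{T})$, proving (ii). The main (and essentially only) subtlety is Step~1: verifying that restricting the decreasing grace period to the ``accept'' types $A$ still suffices. This is immediate because individual fairness is defined only among customers of the same type, so types in $R$ (which are uniformly rejected) require no grace period at all, and the geometric-tail argument applied to each $i\in A$ is unchanged from the \FCFS case. No other technical obstacle arises; in particular no re-solving or reoptimization of the bid prices is needed, and the $O(\log T)$ grace-period loss is absorbed by the $O(\sqrt{T})$ term from the original \AlgSBPC analysis.
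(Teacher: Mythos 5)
Your proof is correct and follows essentially the same route the paper takes: reduce GP-enhanced \AlgSBPC to the \FCFS-per-type analysis of Theorem~\ref{thm:FCFSnetwork} (grace period only for types with $r_i > \sum_j \theta_j^{*}A_{ij}$, trivial fairness for uniformly rejected types), reuse its pigeonhole/geometric-tail bound for the event that no resource depletes, and absorb the $O(\log T)$ grace-period loss into the original $O(\sqrt{T})$ regret via the triangle-inequality decomposition. The only blemish is notational: in Step~2 you write $\Reg(\text{\GPBL})$ where you mean the GP-enhanced \AlgSBPC algorithm.
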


\begin{remark}
    Specifically, for $\delta = O\left( (1-\alpha)^{\sqrt{T}}\right)$, the regret is $O(\sqrt{T})$.
\end{remark}

{
\subsection{Algorithm \AlgDBPC} \label{subsec:algdbpc}

In the Algorithm \AlgDBPC, instead of using a fixed bid price, it dynamically updates the bid price with some first order optimization methods, such as online gradient descent or online mirror descent. The details of \AlgDBPC can be found in Algorithm \ref{alg:ogd}. \AlgDBPC can be markedly unfair. To see this, let us simplify the setting and assume that there is only one type of customer for $t = 1,2,\ldots,T$. Also assume that there is only one type of resource and each arriving customer requests a single unit of this resource. Subsequently, if we follow steps 15 and 16 in Algorithm \ref{alg:ogd}, we find that $\theta^{(t)}$ is a deterministic process that can be described using the following formula:
\begin{align} \label{eq:deterministicogd}
\theta^{(t+1)} &= \left\{\begin{matrix}
\; \theta^{(t)}-\frac{D}{G\sqrt{T}}\frac{m(t)}{T}+\frac{D}{G\sqrt{T}}, & \text{ if $\theta^{(t)} < r$ }, &\\  
\; \theta^{(t)}-\frac{D}{G\sqrt{T}}\frac{m(t)}{T}, & \text{ if $\theta^{(t)} \geq r$ }, & \\ 
\end{matrix}\right. 
\end{align}
Here, $r \in (0,1)$, $m(t)$ denotes the number of resources remaining at time $t$, with $m(0)=m$, and
\begin{align*} 
m(t+1) &= \left\{\begin{matrix}
\; m(t)-1, & \text{ if $\theta^{(t)} < r$ }, &\\  
\; m(t), & \text{ if $\theta^{(t)} \geq r$ }, & \\ 
\end{matrix}\right. 
\end{align*}

The deterministic process $\theta^{(t)}$, as defined in Equation \eqref{eq:deterministicogd}, allows for the precise computation of how frequently decisions switch between accepting and rejecting adjacent customers. Regardless of the constants $D$, $G$, and $r$, this decision-making flips $\Theta(T)$ times between acceptance and rejection. This behavior highlights the algorithm's significant unfairness, as it implies that almost every customer is treated differently compared to their immediate predecessor or successor.

To modify \AlgDBPC to comply with $(\alpha,\delta)$-fairness while minimally impacting revenue, we first reject all arrivals in the period $t \in [0,T^{2/3}]$. Meanwhile, we operate \AlgDBPC as an auxiliary algorithm, documenting its decisions. We designate $u_i$ as the number of accepted type $i$ customers by \AlgDBPC within the $[0,T^{2/3}]$ time horizon, and $\Lambda_i$ as the count of arriving type $i$ customers within $[0,T^{2/3}]$. Then, we obtain that the Euclidean distance between the rate of accepted type $i$ customers by \AlgDBPC ($\frac{u_i}{\Lambda_i}$), and that by \AlgDLP ($\frac{x_i^{*}}{\lambda_i}$), is bounded with high probability. Thus, even without access to the $\lambda_i$ arrival rate and with the prohibition of linear programming, we can utilize the initial $T^{2/3}$ periods to approximate the optimal DLP solution $\textbf{x}^{*}$. Finally, for $t \in [T^{2/3},T]$, we execute the \GPRDLP Algorithm with a $T^{2/3}$ time segment length, setting $x_i^{*}=\frac{u_i}{T^{2/3}}$ and $\lambda_i=\frac{\Lambda_i}{T^{2/3}}$, culminating in a $(\alpha,\delta)$-fair algorithm with regret $O(T^{2/3} \log T)$ if $T^{2/3} \log T = \Omega(\log_{1/(1-\beta)}1/\delta)$. Otherwise, the regret is the length of the grace period, which is $O(\log_{1-\beta}\delta)$. Full details can be found in Algorithm \ref{alg:revisedogd}, and the following theorem summarizes the results. The proof can be found in Appendix \ref{append:Sec5}.

\begin{algorithm}[t!]
\caption{\GPDBPC} \label{alg:revisedogd}

\DontPrintSemicolon 
\KwIn{Preference matrix $\textbf{A}$. Reward vector $\textbf{r}$. Capacity vector $\textbf{m}$. OGD Parameter: $G$, $D$, $\bar \theta$. Fairness parameters $(\alpha,\delta)$.}
\KwOut{Update of dual variable $\mathbf{\theta}$ and acceptance decisions over time.}
\textbf{Initialize:} Dual variable $\mathbf{\theta}^{(0)} \gets \mathbf{0}$.\;
Reject all customers arriving between $t \in [0,\max\{T^{2/3},2\log_{1-\beta}\delta\}]$.\;
Denote $\Lambda_i$ as the number of arriving customers of type $i$ between $t \in [0,\max\{T^{2/3},2\log_{1-\beta}\delta\}]$, $i \in [n]$.\;
Run Algorithm \AlgDBPC as an auxiliary algorithm for $t \in [0,\max\{T^{2/3},2\log_{1-\beta}\delta\}]$.\;
Denote $u_i$ as the number of accepted customers of type $i$ by \AlgDBPC, $i \in [n]$.\;
Set $K=(T/\max\{T^{2/3},2\log_{1-\beta}\delta\})-1$, $\lambda_i=\frac{\Lambda_i}{\max\{T^{2/3},2\log_{1-\beta}\delta\}}$ and $x_i^{*}=\frac{u_i}{\max\{T^{2/3},2\log_{1-\beta}\delta\}}$, $\textbf{z}^{(0)}=\textbf{0}$.\;
\For{$k \in \{1,2,...,K \}$}{
    Run Algorithm \ref{alg:RDLPreviseaux} with $t_1=k\max\{T^{2/3},2\log_{1-\beta}\delta\}$, $t_2=(k+1)\max\{T^{2/3},2\log_{1-\beta}\delta\}$, $\mathbf{z}=\mathbf{z}^{(k-1)}$, optimal primal solution $\mathbf{x}^{*}$. Denote the output vector $\mathbf{w}$ as $\mathbf{z}^{(k)}$.\;
}
\end{algorithm}

\begin{theorem} \label{thm:reviseogd}
Given any $\alpha, \delta \in (0,1)$, and any input \AlgDBPC algorithm that attains a regret of $O(\sqrt{T})$, under a mild non-degenerate assumption (assumption 3 in \cite{agrawal2014dynamic}), Algorithm \GPDBPC: (i) is $(\alpha,\delta)$-fair, (ii) incurs regret $O(\max\{T^{2/3} \log T, \log_{1-\beta}\delta\})$.
\end{theorem}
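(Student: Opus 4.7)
\medskip

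\noindent\textbf{Proof proposal for Theorem~\ref{thm:reviseogd}.} The plan is to verify the two claims separately, treating the first $T^{2/3}$ rounds (pure exploration with \AlgDBPC run as a silent simulation) and the remaining $T-T^{2/3}$ rounds (one copy of the \GPRDLP routine with learned parameters) as two independent phases connected by an increasing grace period.

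For the fairness claim, I would note that Phase~1 rejects every arriving customer, so within Phase~1 the algorithm trivially satisfies individual fairness for every type. The boundary between the two phases is benign because the first call to Algorithm~\ref{alg:RDLPreviseaux} has $t_1=T^{2/3}\neq 1$, which activates an increasing grace period for each type: this smoothly transitions each type's acceptance profile from ``always reject'' to the target pattern, and Lemma~\ref{lem:nextto} shows that type-by-type this yields $(\alpha,\delta)$-fairness on the transition. Within Phase~2 the argument is identical to the one used in Theorem~\ref{thm:RDLPrevise}: every switch from accept to reject is cushioned by a decreasing grace period, every switch back is cushioned by an increasing grace period, and a final decreasing grace period is inserted near resource depletion, exactly as in Theorem~\ref{thm:FCFSnetwork}. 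A union bound over the $n$ types and the $K$ segments, tuned with $\gamma=\log_{1-\alpha}\delta$ and $\delta=1/T$, yields the desired individual fairness globally.

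For the regret bound I would decompose
\begin{align*}
\Reg \;=\; \bigl(\mathbb{E}[\HO]-\DLP^{\star}\bigr) \;+\; \bigl(\DLP^{\star}-\Rev(\GPDBPC)\bigr),
\end{align*}
where $\DLP^{\star}$ is the population DLP value scaled to horizon $T$. The first term is $O(\sqrt{T})$ by the standard fluid relaxation argument. For the second term, I would split it into (i) the $\bar r\, T^{2/3}$ revenue sacrificed by rejecting everything in Phase~1; (ii) the loss from using the empirical rates $\hat x_i^{\star}=u_i/T^{2/3}$ and $\hat\lambda_i=\Lambda_i/T^{2/3}$ instead of the true DLP rates; and (iii) the grace-period and stopping-segment losses inside Phase~2. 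For (iii) I would reuse the coupling argument of Theorem~\ref{thm:RDLPrevise} verbatim with segment length $T^{2/3}$ and $K=T^{1/3}-1$ segments: the compensation vector $\mathbf{z}^{(k)}$ keeps the cumulative shortfall at $O(\log T)$ per type up to the stopping segment, and the stopping segment itself loses at most $O(T^{2/3})$; so (iii) contributes $O(T^{2/3}+\log T)=O(T^{2/3})$.

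The crux is term (ii), and it is where I expect the main obstacle to be. Here I would invoke Assumption~3 of \cite{agrawal2014dynamic}, which states (in the non-degenerate regime) that the dual iterates produced by the OGD-based bid-price procedure after $n$ periods concentrate around the true DLP dual at rate $\tilde O(n^{-1/2})$, and that the implied primal action rates inherit this rate. Applying this with $n=T^{2/3}$ gives $|\hat x_i^{\star}-x_i^{\star}|+|\hat\lambda_i-\lambda_i|=\tilde O(T^{-1/3})$ with probability at least $1-1/T$, where the $\tilde O$ absorbs a $\sqrt{\log T}$ from the Hoeffding/Azuma concentration tuned to confidence $1/T$. Feeding these perturbed parameters into the Phase~2 analysis contributes $(T-T^{2/3})\cdot \tilde O(T^{-1/3})=\tilde O(T^{2/3})$ to the gap, with the extra $\log T$ factor (from $\gamma=\Theta(\log T)$ and from the concentration) producing the final $O(T^{2/3}\log T)$ bound. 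The delicate point, and the step I would spend the most effort on, is verifying that the non-degeneracy hypothesis translates cleanly from the bid-price action distribution of \AlgDBPC to the primal acceptance rates that \GPRDLP's auxiliary routine consumes, including for the marginal case where some $x_i^{\star}=\lambda_i$ or $x_i^{\star}=0$; under the stated assumption these boundary types remain stable and the regret from misclassifying them is absorbed by the same $\tilde O(T^{2/3})$ term.
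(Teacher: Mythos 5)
Your proposal is correct and follows essentially the same route as the paper: reject-all in the first $T^{2/3}$ rounds gives trivial fairness and at most $O(T^{2/3})$ loss, Phase~2 inherits fairness and the coupling/stopping-segment analysis from Theorem~\ref{thm:RDLPrevise}, and the estimation error feeds in at $O(T^{-1/3}\log T)$ per period to give $O(T^{2/3}\log T)$. The only caveat is attributional: the per-type concentration $|u_i - x_i^{*}T^{2/3}| = O(T^{1/3}\log T)$ is not the content of Assumption~3 of \cite{agrawal2014dynamic} itself (which is only a non-degeneracy condition); the paper first derives the revenue-weighted aggregate bound from the two $O(\sqrt{T})$ regret guarantees and then invokes \cite{sun2020near} under that assumption to upgrade it to the per-type bound, which is exactly the ``delicate point'' you flagged.
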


\begin{remark}
    For $\delta = O\left((1-\beta)^{T^{2/3} \log T}\right)$, the regret of \GPDBPC is $O(T^{2/3} \log T)$.
\end{remark}
}

\section{Individually Fair RM Algorithms under Adversarial Arrivals} \label{sec:adversarial}

In this section, we focus on RM problems with adversarial arrivals, where an adversary dictates both the \emph{quantity} and the \emph{sequence} of arriving customers of each type. The two prevalent algorithm structures used in this setting are Booking Limit(\AlgBL) and \AlgN. Next, we introduce how to make these algorithms individually fair.

\subsection{Algorithm \AlgBL}

Algorithm \AlgBL sets quotas for each customer $i$, and then makes decisions in an \FCFS manner. Consequently, our method of ensuring individual fairness mirrors the modification of \FCFS in Section \ref{sec:gpd}. Algorithm \GPBL provides a decreasing grace period as the number of accepted type $i$ customers approaches the quota $b_i$, and extends this decreasing grace period to all customer types if the remaining capacity is almost depleted. Details can be found in Algorithm \ref{alg:bookingrevised}.

\begin{algorithm}[t!]
\caption{\GPBL}
\label{alg:bookingrevised}

\DontPrintSemicolon 
\KwIn{Preference matrix $\mathbf{A}$. Capacity vector $\textbf{m}$. Booking limit $\mathbf{b}$. $\gamma=\log_{1-\beta}\delta$.}
\KwOut{Decision on accepting customers over time.}
\textbf{Initialize:} Number of accepted customers $s_i=0$, $i \in [n]$.\;
\For{$t \in \{1,2,...,T\}$}{
    Observe customer of type $i \in [n]$.\;
    \uIf{$\min_{j \in [L]}m_j(t) - \bar a n \gamma \leq 0$}{
        Give a decreasing period $[t,T]$ to all types of customers.\;
    }
    \Else{
        \uIf{$s_i < b_i - \gamma$}{
            Accept the arriving customer, and set $s_i \gets s_i + 1$.\;
        }
        \Else{
            Give a decreasing period $[t,T]$ to type $i$ customers.\;
        }
    }
}
\end{algorithm}

\begin{theorem} \label{thm:revisebooking}
Given any $\alpha, \delta \in (0,1)$,resource capacity $\mathbf{m}$ where $m_j=\Theta(m)$ for $j \in [L]$, and any input \AlgBL algorithm that attains a competitive ratio of $C$, \GPBL achieves the following: (i) is $(\alpha,\delta)$-fair, (ii) reaches a competitive ratio of $C-O\left(\frac{\log_{1/(1-\beta)} 1/\delta}{m}\right)$.
\end{theorem}

Theorem \ref{thm:revisebooking} implies that \GPBL incurs negligible additional loss asymptotically, even under the worst case instance. The detailed proof can be found in Appendix \ref{append:Sec6}. To execute a competitive ratio analysis, we first observe that, given $b_i=\Theta(m)$, if the aggregate number of arrivals is $o(m)$, both the original and the revised booking limit algorithms yield a competitive ratio of $1$ since all customers are accepted.

In the event that the total number of arrivals is $\Omega(m)$, the offline optimal revenue is $\Theta(m)$ as the resource capacity scales with $\Theta(m)$. Additionally, any extra loss incurred by \GPBL in comparison to \AlgBL is from the decreasing grace period. In the worst scenario, we may lose at most $(\bar a+1) n \gamma = \Theta(\log 1/\delta)$ customers. Hence, 
\[
\inf_{I \in \mathcal{I}}\frac{\text{Rev}(\pi(I))}{\HO(I)} \geq \inf_{I \in \mathcal{I}}\frac{\text{Rev}(\textsf{BL}(I))-(\bar a+1) n  \gamma \bar r}{\HO(I)} = C-O\left(\frac{\log_{1/(1-\beta)} 1/\delta }{m}\right).
\]

\subsection{Algorithm \AlgN}

\AlgN ranks the customer types based on the revenue they bring to the supplier. Compared to \AlgBL, \AlgN also sets quotas and accepts customers in a \FCFS manner. Differently, each quota $b_i$ is the upper bound of number of customers can be accepted with type $i,i+1,\ldots,n$. As such, we still provide a decreasing grace period as the cumulative number of $i$, $i+1$, $\ldots$, $n$ customers approaches the quota $b_i$. The rationale behind this modification aligns precisely with that used in \AlgBL. Details can be found in Algorithm \ref{alg:revisenesting}.

\begin{algorithm}[t!]
\caption{\GPN}
\label{alg:revisenesting}
\DontPrintSemicolon 
\KwIn{Capacity $m$. Nesting quota $\mathbf{b}$.}
\KwOut{Decision on accepting customers over time.}
\textbf{Initialize:} Number of accepted customers $s_i=0$, $i \in [n]$. $\gamma=\log_{1-\beta}\delta$.\;
\For{$t \in \{1,2,...,T\}$}{
    Observe customer of type $i \in [n]$.\;
    \uIf{$m(t) -  n \gamma \leq 0$}{
        Give a decreasing period $[t,T]$ to all types of customers.\;
    }
    \Else{
        \uIf{$\sum_{j=i}^{n}s_j < b_{i} - n\gamma$}{
            Accept the arriving customer, and set $s_i \gets s_i+1$.\;
        }
        \Else{
            Give a decreasing period $[t,T]$ to type $i$ customers.\;
        }
    }
}
\end{algorithm}

\begin{theorem} \label{thm:revisenesting}
Given any $\alpha, \delta \in (0,1)$, and any input \AlgN algorithm that attains a competitive ratio of $C$, \GPN: (i) is $(\alpha,\delta)$-fair, (ii) reaches a competitive ratio of $C-O\left(\frac{\log_{1/(1-\beta)} 1/\delta}{m}\right)$.
\end{theorem}

The proof is almost identical to that of Theorem \ref{thm:revisebooking} and hence we omit it.

\section{Numerical Studies}

In this section, we conduct a numerical study to evaluate the efficacy of our grace period-enhanced algorithms, using \GPDLP as a representative example, in boosting revenue compared to baseline algorithm \AlgDLP. As discussed in Section \ref{sec:stochatic}, \AlgDLP is characterized as unfair while \GPDLP is considered fair. In this section, we investigate: 
\begin{enumerate}
    \item the expected cumulative revenue for both algorithms without accounting for the negative impacts of unfairness;
    \item the expected cumulative revenue when considering the negative impacts of unfairness.
\end{enumerate}

\xhdr{Data Description} To build our semi-synthetic dataset, we use a transnational dataset encompassing all transactions recorded from December 1, 2010, to December 9, 2011, for a UK-based, online retail business operating outside traditional store formats. The company specializes in unique gifts suitable for various occasions and primarily serves a wholesale customer base. The dataset is publicly available at \url{https://www.kaggle.com/datasets/tunguz/online-retail}. For our analysis, we extracted the following attributes from the dataset:
\begin{itemize}
    \item \textit{Description} (nominal): The name of the product (item) involved in the transaction.
    \item \textit{Quantity} (numeric): The number of units of each product (item) involved in a transaction. 
    \item \textit{UnitPrice} (numeric): The price per unit of the product in GBP.
    \item \textit{CustomerID} (nominal): A unique 5-digit identifying integer assigned to each customer. 
\end{itemize}
The dataset comprises over 500,000 transactions, featuring transactions from approximately 5,000 distinct customers, identified by unique \textit{CustomerID}s. Customers are classified into 20 distinct categories based on their purchasing patterns by K-means clustering, where $K=20$ and we cluster according to the \textit{Description} and \textit{Quantity} of products purchased. As the \textit{Description} is a nominal variable, we apply one-hot encoding to convert it into a numeric variable. Additionally, the dataset contains over 4,000 unique items listed under the \textit{Description} column. To clarify, we perform K-modes clustering, where $K=40$, to group all inventories into 40 distinct product types.

\xhdr{Simulation Setup} In this section, we focus on comparing the performances of the \AlgDLP Algorithm with the \GPDLP Algorithm (\GPRDLP only solves the DLP once). Initially, we construct the preference matrix $\textbf{A}$, where each element $A_{ij}$ represents the average quantity of product $j$ purchased by customer type $i$, for $i \in [20]$ (customer types) and $j \in [40]$ (product types). Furthermore, we introduce a reward vector $\textbf{r}$, with each component $r_i$ reflecting the average profit generated by customer type $i$, for $i \in [20]$. We estimate the arrival rate $\lambda_i$ for each customer type $i$ by dividing the total count of type $i$ customers by the overall number of customers recorded in the dataset. As the original dataset omits inventory details for each product, we deduce the available units for each product type $j$ by calculating the total consumed units and adjusting this figure with a random scaling factor $q_j = \text{Unif}(0.5,0.9)$ for each $j \in [40]$.
Utilizing these parameters, we proceed to solve the DLP as defined in equation \eqref{eq:defdlp}, and we can obtain the optimal primal solution $\textbf{x}^{*}$. We assess the model over a time horizon $T$, which varies from 50,000 to 500,000 in increments of 50,000, to examine the dynamics over different operational scales.

We examine two distinct scenarios in our analysis. In the first scenario, individual fairness is disregarded in terms of its impact on revenue. This means that perceived unfairness towards a customer does not influence the revenue generated. Through this scenario, we aim to evaluate the revenue differences between the two algorithms, supporting our claim that the revenue loss associated with \GPDLP is minimal, as asymptotically suggested in Theorem \ref{thm:RDLPrevise}. 

{
To implement \GPDLP, we evaluate the following parameter settings: \( (\alpha, \delta) = \{(0.05, 1/T), (0.05, 1/\sqrt{T}), (0.3, 1/T), (0.3, 1/\sqrt{T}) \} \). A smaller \( \alpha \) corresponds to a more balanced allocation, enhancing fairness. Since (based on our definition of fairness) a fair algorithm ensures fairness with at least probability \( 1 - \delta \), a smaller \( \delta \) increases confidence in the fairness guarantee.}

In the second scenario, we consider a situation where individual fairness directly affects revenue. According to our survey detailed in Section \ref{sec:experiment}, 56\% of participants reported experiencing perceived individual unfairness in real. Thus, we model a scenario where a customer who is rejected, while a same type customer immediately before or after them is accepted, perceives this as unfair with a 56\% likelihood. Reflecting the consequences of perceived unfairness, we integrate a revenue penalty based on our survey findings: 62\% of such customers would stop purchasing from the store, and 69\% would not recommend the store to others, signifying potential revenue loss. In this model, we reduce the total revenue by an amount $cr_i$ for each type $i$ customer who perceives unfairness, where $c$ represents the projected impact of the loss, essentially the expected frequency of visits by a type $i$ customer. For the purposes of our simulations, we set $c$ to values of $\{0.5, 1, 2\}$, to cover a range of possible loss intensities. We also fix $(\alpha,\delta)=(0.05,1/T)$.

\xhdr{Results} { In the first scenario, we disregard individual fairness to evaluate its impact on revenue outcomes. As shown in Figure \ref{fig:revenue1}, the difference in total revenue between the two algorithms across all parameter choices \( (\alpha, \delta) \) remains marginal, aligning with the \( O(\sqrt{T}) \) bound from Theorem \ref{thm:RDLPrevise}.  

Furthermore, the near overlap of curves corresponding to different parameter settings suggests that \GPDLP is robust to parameter choice. The intuition behind this low sensitivity is that in \GPDLP, \( (\alpha, \delta) \) only influences the length of the grace period, given by \( \log_{1-\alpha} \delta \). Even if this grace period is long—leading to a higher rejection rate of customers within a time segment \( [(k-1)\sqrt{T}, k\sqrt{T}] \)—the algorithm compensates for this loss in the subsequent segment \( [k\sqrt{T}, (k+1)\sqrt{T}] \) by accepting an equivalent number of additional customers. As a result, any shortfall in one segment is recovered in the next. The only potential revenue loss occurs in the final segment or in the period when resources are fully depleted. However, this loss in one segment is negligible in comparison to the total revenue, confirming that \GPDLP maintains strong performance regardless of parameter selection.
}
\begin{figure}[!tb]
\center
\includegraphics[width=0.48\textwidth]{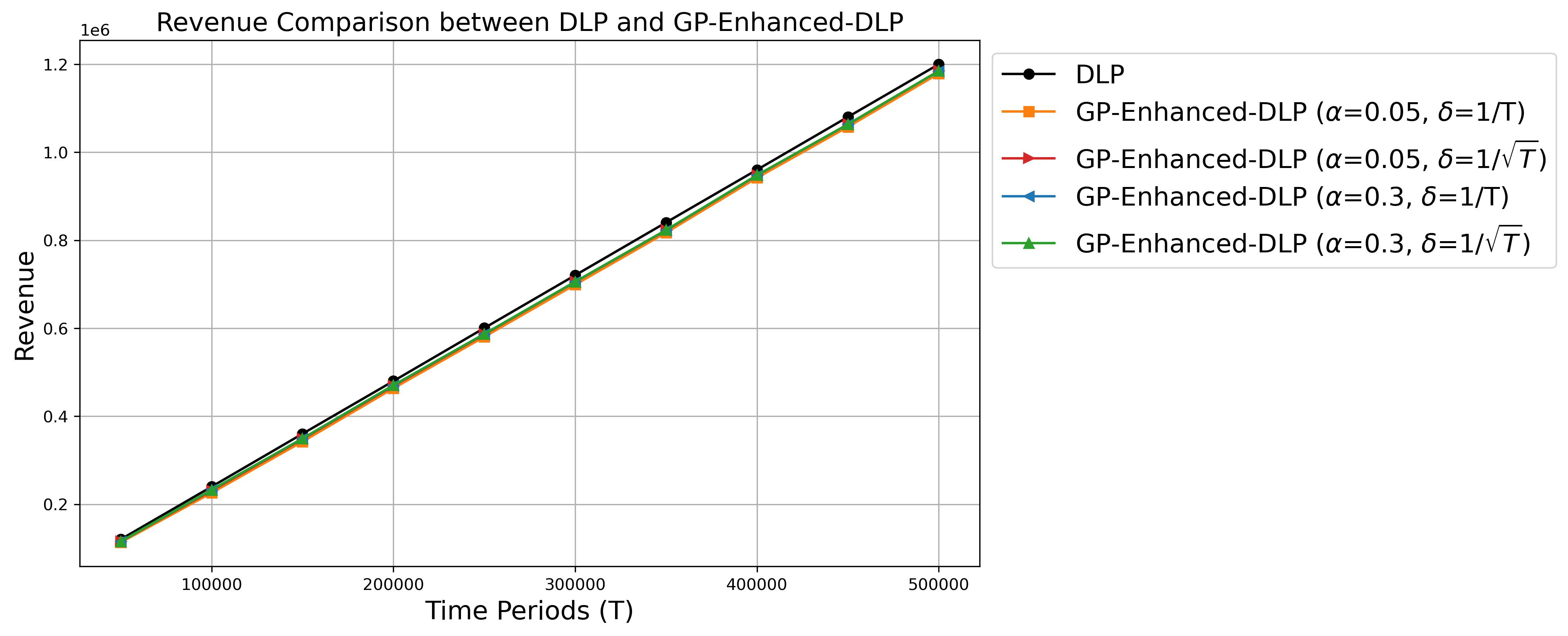}
\includegraphics[width=0.48\textwidth]{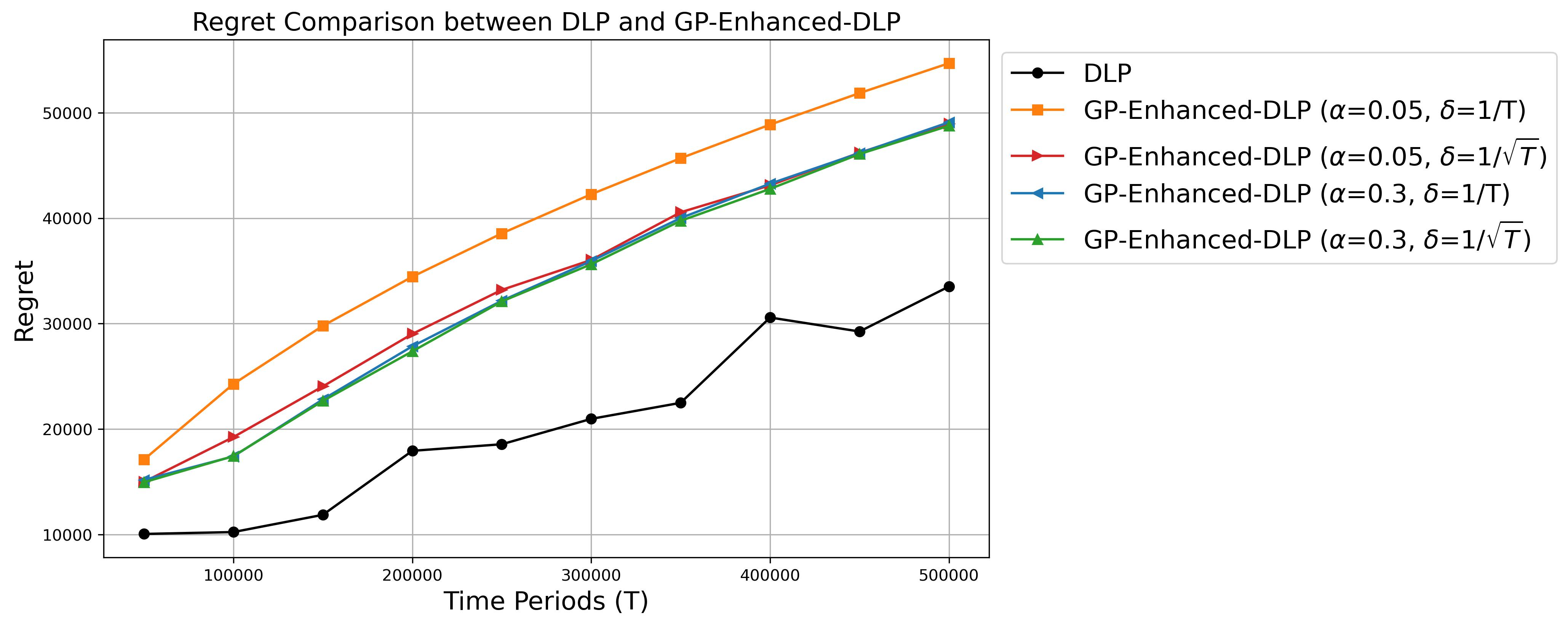}
\caption{\centering Comparison between \AlgDLP and \GPDLP (\textbf{left:} total revenue; \textbf{right:} regret).} 
\label{fig:revenue1}
\end{figure}

Conversely, in the second scenario, revenue loss is incurred whenever a type $i$ customer perceives unfairness, quantified as $cr_i$. The results, illustrated in Figure \ref{fig:revenue2}, indicate that \GPDLP consistently outperforms \AlgDLP under these conditions. Notably, the disparity in total revenue between the algorithms widens as $c$ (the scale of perceived unfairness) and $T$ (the time horizon) increase. This pattern underscores the significance of individual fairness in retail environments characterized by repeated customer transactions. Specifically, if a customer is expected to make three purchases (illustrated here when $c=2$), the revenue generated by \GPDLP can be nearly double that of \AlgDLP when T=500,000, highlighting the critical importance of addressing individual fairness to sustain and enhance revenue in settings where repeat business is common. Besides the numerical results, we can even derive some theory in this setting:

\begin{figure}[!tb]
\center
\includegraphics[width=0.8\textwidth]{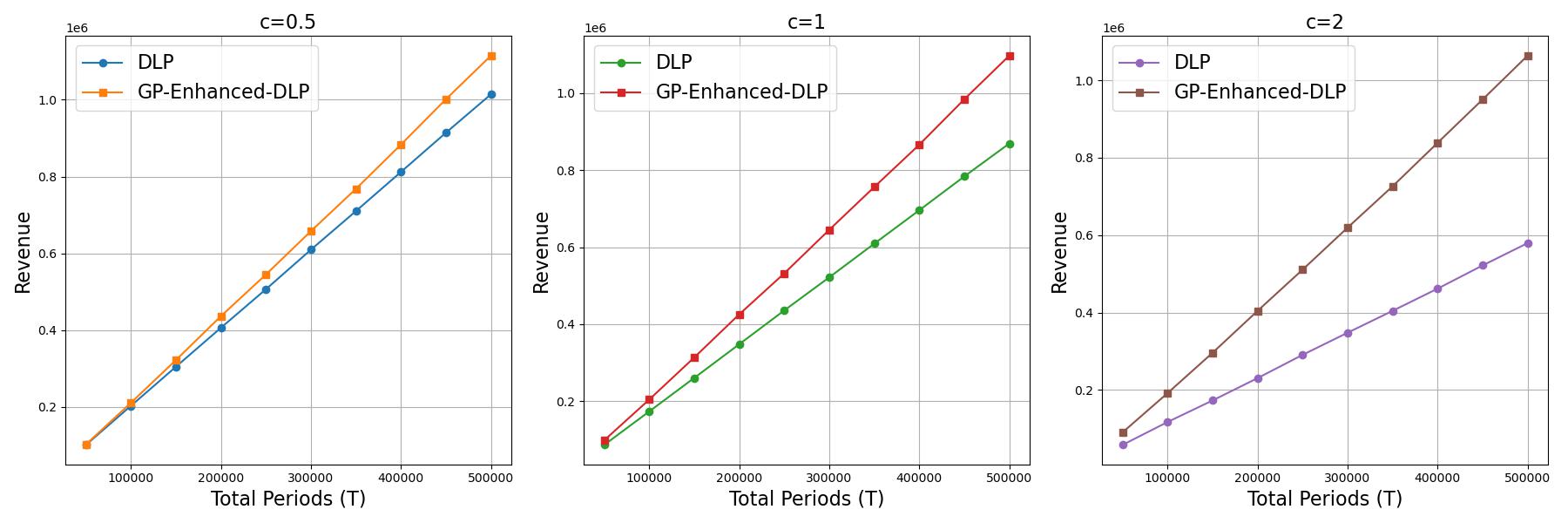}
\caption{\centering Comparison between \AlgDLP and \GPDLP (\textbf{left:} $c=0.5$; \textbf{middle:} $c=1$; \textbf{right:} $c=2$).} 
\label{fig:revenue2}
\end{figure}

\begin{proposition} \label{thm:numerical}
    Consider a time horizon $[1, T]$, during which a customer arrives randomly at each time $t \in [1, T]$. For each customer type $i$, where $i \in [1, n]$, if a customer of type $i$ is rejected while either the preceding or succeeding customer of the same type $i$ is accepted, this customer has a probability $q$ of detecting the discrepancy. Upon detection, this results in a total revenue loss of $cr_i$. Let $\tilde{R}(\AlgDLP)$ and $\tilde{R}(\pi)$ represent the expected cumulative revenue generated by Algorithm \AlgDLP and Algorithm \GPDLP, respectively. Then, the difference in expected cumulative revenue between the two algorithms is bounded by:
    \[
    \tilde{R}(\pi)-\tilde{R}(\AlgDLP)=O(T).
    \]
\end{proposition}

\proof{Proof of Proposition \ref{thm:numerical}}
Recall that $R(\AlgDLP)$ and $R(\pi)$ denote the expected cumulative revenue generated by Algorithm \AlgDLP and Algorithm \GPDLP respectively without considering the unfair effect. By Theorem \ref{thm:RDLPrevise}, we have
\[
R(\AlgDLP) - R(\pi) = O(\sqrt{T}).
\]

Since Algorithm \AlgDLP accepts each type $i$ customer with probability $\frac{x_i^{\star}}{\lambda_i}$, for any type $i$ customer, the probability that the customer is rejected but one of the preceding or succeeding type $i$ customer is accepted is $2\frac{x_i^{\star}}{\lambda_i}\left(1-\frac{x_i^{\star}}{\lambda_i}\right)$. Therefore, in expectation, each type $i$ customer will cause a revenue loss of $2qcr_i\frac{x_i^{\star}}{\lambda_i}\left(1-\frac{x_i^{\star}}{\lambda_i}\right)$, which is a positive constant. Therefore, we can obtain
\[
R(\AlgDLP) - \tilde{R}(\AlgDLP) = O(T).
\]

For Algorithm \GPDLP, in each $\sqrt{T}$ time segments, at most two customers can feel unfair, where one is in the decreasing grace period and the other is in the increasing one. Therefore, the expected revenue loss is at most $2\sqrt{T}qc\bar r$. Therefore,
\[
R(\pi)-\tilde{R}(\pi)=O(\sqrt{T}),
\]
which implies that 
\[
\tilde{R}(\pi) - \tilde{R}(\AlgDLP) = O(T).
\]
\Halmos
\endproof

\section{Survey} \label{sec:experiment}

In this section, we delineate the primary findings derived from our survey which was conducted between October and November 2023 on Prolific\footnote{\url{https://www.prolific.com/}}. We collected responses from 150 respondents. The primary goal was to evaluate three hypotheses:

\begin{enumerate}
\item[{\bf H1.}] When buyers observe ``unfair'' (defined according to our Definition \ref{def:ifmetrics}) treatment in a store prior to shopping there, their purchasing intent diminishes.
\item[{\bf H2.}] A customer who has previously made a purchase from a store and subsequently perceives unfair treatment is likely to exhibit adverse future buying behaviors.
\item[{\bf H3.}] Individuals perceive disparate treatments occurring in the near term as more unfair compared to those in the distant past.
\end{enumerate}

Comprehensive demographic data is available in Appendix \ref{subapp:demo}. The respondents were all based in the United States, fluent in English, and mostly gender-balanced (about a half of them are identifying as male and about half as female). The majority identified as white. To ensure the quality of the responses, we incorporated three attention checks. Only the data from respondents who correctly answered at least two of these questions were considered valid. From the initial 150 participants, 140 responses met this criterion.

The goal of our survey was to assess individuals' buying behavior when they observed other customers receiving discounts of roughly 10\% but they themselves did not. We recognize the significance of the specific discount value in this context; people's valuation of money varies, and a discount perceived as negligible by some might not induce feelings of unfairness to them. To address this, we conducted a comparative survey, wherein the context incorporated a more substantial discount (specifically, 30\%). This supplementary survey, which also comprised of 150 participants, is documented in Appendix \ref{subapp:30}. In all the scenarios we considered, we posit that there are two stores \emph{TrendThread} and \emph{Stitch\&Style} which the survey respondent is considering. The products sold by the two places are \emph{identical} in terms of quality and overall characteristics. The only difference is with regards to the pricing and the discounts that each stores offers. Additionally, at the survey's conclusion, we inquired if participants had encountered or noticed such unfairness in the real world, and 56\% of the respondents affirmed they had.

\subsection{Buying Behavior and Observed Pre-Purchased Unfairness}

In this section, we seek to validate hypothesis {\bf H1}. We begin by presenting two distinct scenarios designed to evoke feelings of unfairness \emph{before} making a purchase among consumers.

The first scenario revolves around a consumer's interaction with social media posts describing others' purchasing experiences. Imagine you come across a social media post wherein a user claims to have a $10\%$ discount on a hoodie from the store \textit{TrendThread}. You promptly visit the online shopping website, but no such discount is extended to you from \textit{TrendThread}. \textit{Stitch\&Style} is offering the identical hoodie at exactly the same price as \textit{TrendThread}. We aim to capture whether consumers, upon encountering the absence of a promised discount at \textit{TrendThread}, opt to transfer their patronage to \textit{Stitch\&Style} for their purchase\footnote{A concise version is presented here for clarity, with the detailed questions available in Appendix \ref{subapp:question}}.

\begin{figure}[!tb]
\center
\includegraphics[width=0.7\textwidth]{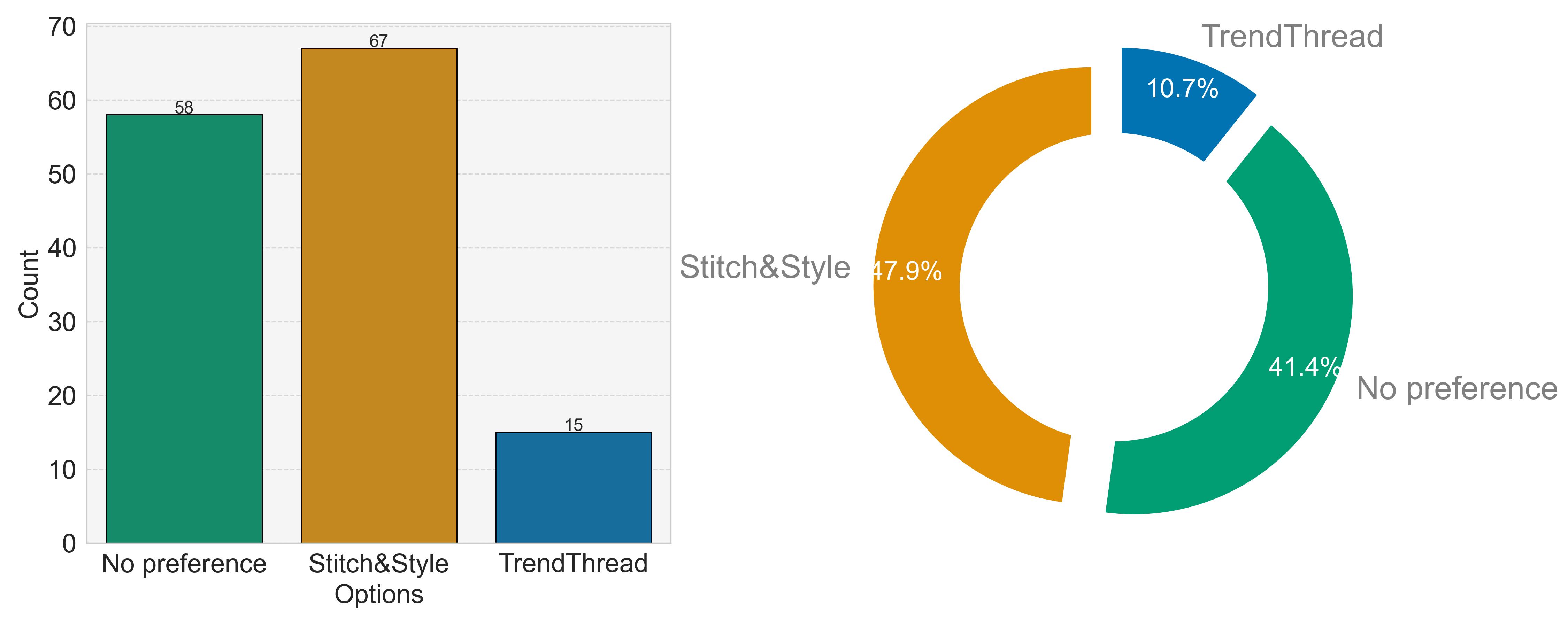}
\caption{\centering Distribution of preferences in response to exposure to discount-related posts on social media} 
\label{fig:Q1}
\end{figure}

As illustrated in Figure \ref{fig:Q1}, $47.9\%$ of the respondents perceive the different treatment as a significant factor influencing their purchasing decisions. Within the $41.5\%$ who opted for \textit{No Preference}, there emerges a question: is their indifference attributed to the perceived unfairness or the magnitude of the discount ($10\%$)? In Figure \ref{fig:Q1(2)}, of those participants who initially selected \textit{No Preference}, more than $20\%$ would divert their purchase to an alternative store if faced with a more substantial discount disparity. From this observation, we infer that approximately $30\%$ of the participants remain indifferent to differential treatment. In contrast, close to $60\%$ would reconsider their store preference upon perceiving some form of unfairness prior to purchase.

\begin{figure}[!tb]
\center
\includegraphics[width=0.7\textwidth]{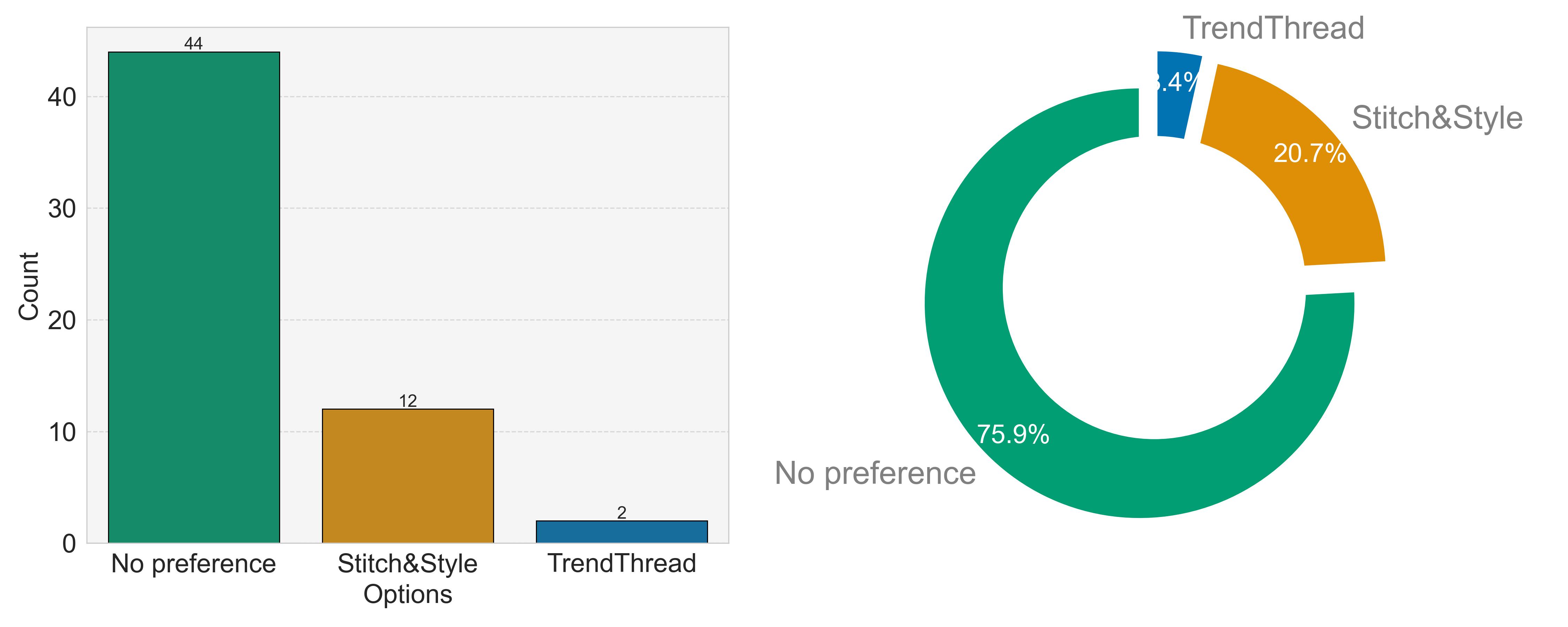}
\caption{\centering Impact of $30\%$ discount on initially indifferent consumers (i.e., ``No preference'' from Fig~\ref{fig:Q1})} 
\label{fig:Q1(2)}
\end{figure}

The second scenario tests whether perceptions of pre-purchase unfairness can also come from customer reviews associated with a store. Consider two stores offering an identical hoodie at the same price and with equivalent ratings. The first store, \textit{TrendThread}, displays one negative comment addressing discount unfairness among five positive comments\footnote{An auxiliary analysis was conducted to discern whether the placement of this negative comment has any consequential impact. Refer to Appendix \ref{subapp:question} for details.}. Conversely, the second store, \textit{Stitch\&Style}, presents three negative comments to discount unfairness. The focus is to analyze how the volume of comments related to unfair treatments influences consumer shopping behavior.

\begin{figure}[!tb]
\center
\includegraphics[width=0.8\textwidth]{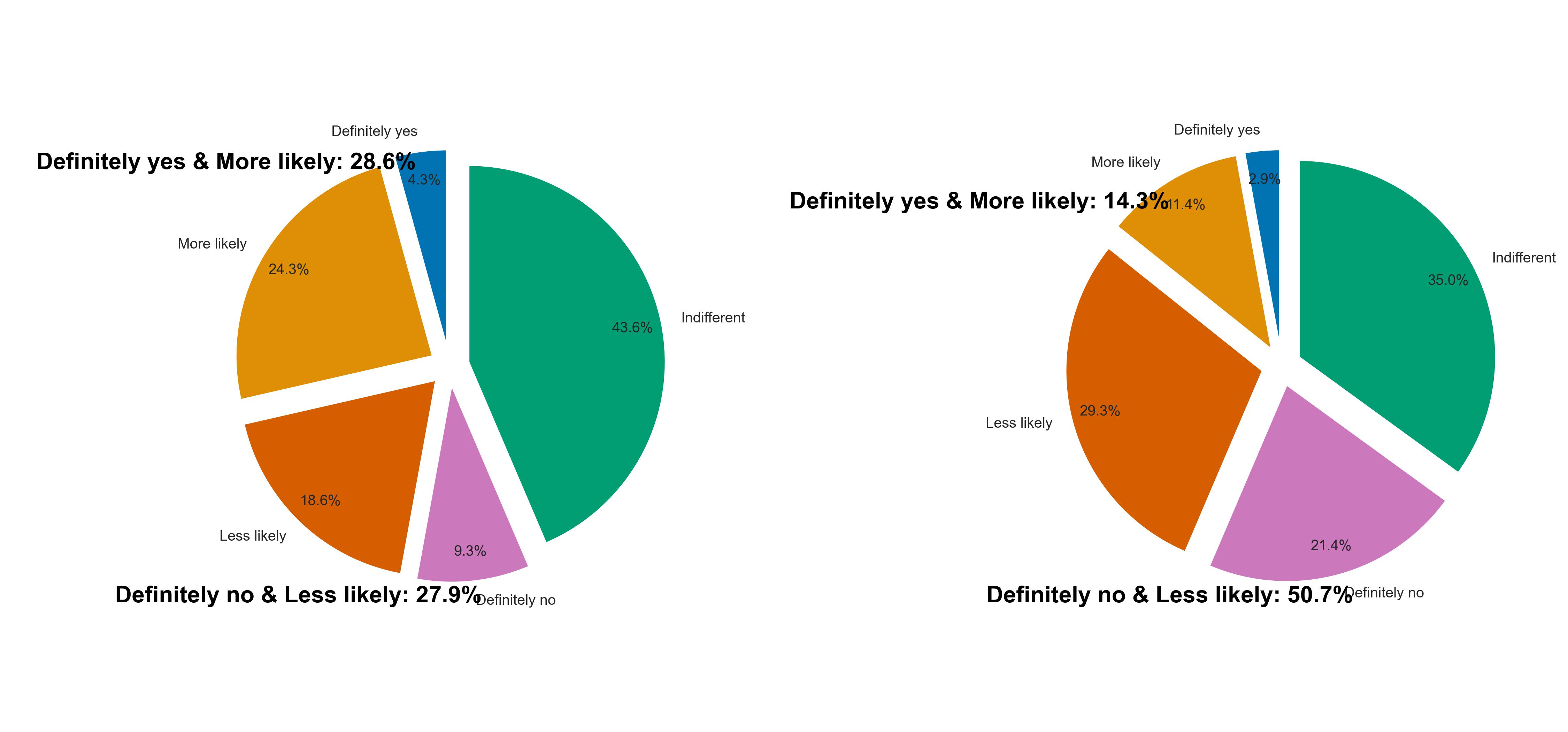}
\caption{\centering \textbf{Left, Right} Distribution of preferences for store \textit{TrendThread} and \textit{Stitch\&Style} respectively.} 
\label{fig:Q3}
\end{figure}

As depicted in Figure \ref{fig:Q3}, upon encountering few comments indicating unfair treatment, $28.6\%$ of respondents express a positive inclination towards making a purchase, whereas $27.9\%$ demonstrate a negative inclination. However, in the presence of a 
\emph{trend} of unfairness in the comments, the proportion of positive respondents diminishes to $14.3\%$, while those with a negative predisposition drastically surge to over $50\%$. 

The findings from both scenarios underscore the observation that a pronounced perception of pre-purchase unfairness can significantly influence consumer behavior, leading many consumers to avoid shopping at the store they see as unfair. This verifies our hypothesis {\bf H1}.

\subsection{Buying Behavior and Observed Post-purchased Unfairness}

Next, we test hypothesis {\bf H2}. We propose a scenario where a customer, having made a purchase at the full price, subsequently learns of another individual who got a discount for the same item shortly thereafter, potentially triggering feelings of unfairness. To assess the repercussions of such a scenario on future consumer behavior, we initially investigate whether respondents perceive unfairness in this situation. Subsequently, we probe their likelihood of repeating purchases at the same store and their propensity to recommend the store to others.

\begin{figure}[!tb]
\center
\includegraphics[width=\textwidth]{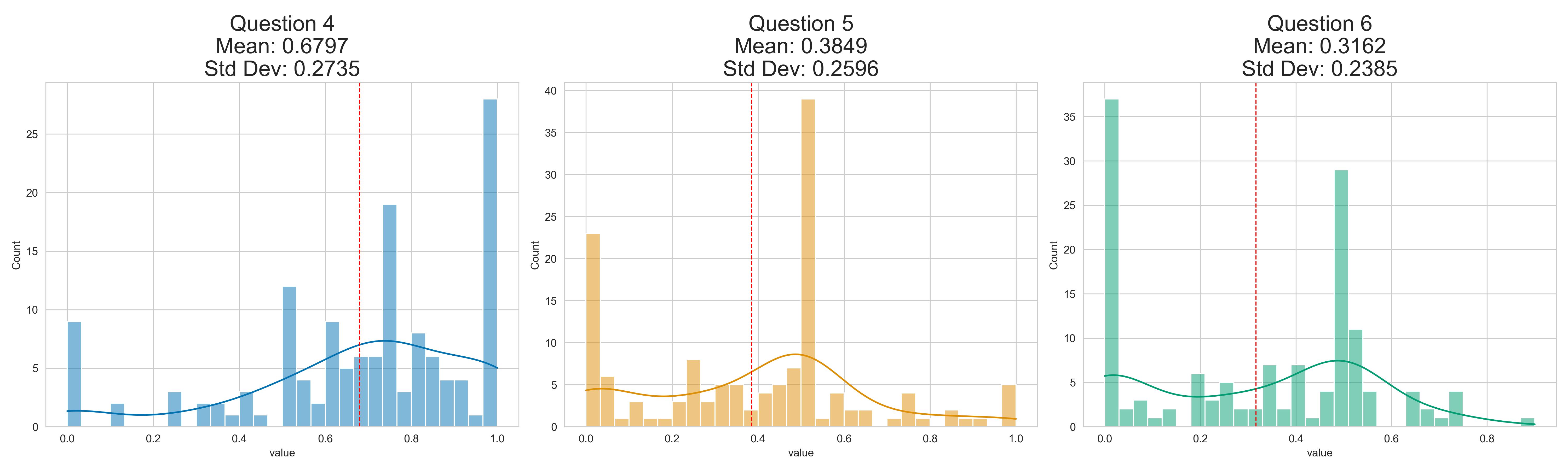}
\caption{\textbf{Left: } Perceived unfairness level (0 = Not at all, 1 = Extremely); \textbf{Middle: } Customer repurchase likelihood (0 = Definitely no, 1 = Definitely yes);  \textbf{Right: } Customer recommendation likelihood (0 = Definitely no, 1 = Definitely yes)} 
\label{fig:Q4}
\end{figure}

In Figure \ref{fig:Q4}, the left histogram depicts the perceived unfairness distribution, with $0$ indicating no perceived unfairness and $1$ denoting extreme unfairness. Note that in order to obtain this information, we asked respondents to choose how unfairly treated they feel, by moving a needle. We ran the experiment with different initial placements of the needle, to abstract away from biases. For more details, we refer the reader to Appendix~\ref{append:survey}.

A significant number of respondents perceive high levels of unfairness, evidenced by an average response of approximately $0.68$. This indicates that post-purchase differential treatment significantly influences perceptions of unfairness. The middle histogram illustrates respondents' willingness to repurchase from the store, where $0$ represents a definite refusal and $1$ a definite intention to repurchase. The mean response, around $0.38$, suggests a reduced likelihood of repeat patronage due to observed disparities in post-purchase treatment. The right histogram, portrays the willingness to recommend the store with $0$ indicating an unwillingness to recommend and $1$ the opposite; it reveals a predominant reluctance to recommend among participants. This trend towards non-recommendation may adversely affect the store's potential revenue. These findings underline the importance of maintaining post-purchase individual fairness as a key factor influencing customer retention and advocacy, with direct implications for revenue streams.

\subsection{Continuous Time Metrics of Individual Fairness}

In this part, our objective is to validate hypothesis {\bf H3}. This investigation also serves to corroborate the mathematical definition of individual fairness metrics presented in Definition \ref{def:ifmetrics}. To assess whether individuals are more sensitive to recent disparities in treatment as opposed to those in the more distant past, we propose the following scenario: Imagine encountering two identical posts, each stating that a user got a $10\%$ discount from \textit{TrendThread}. The only distinction between these posts is their post time: one was created 5 minutes ago, while the other dates back to a month ago.

\begin{figure}[!tb]
\center
\includegraphics[width=0.5\textwidth]{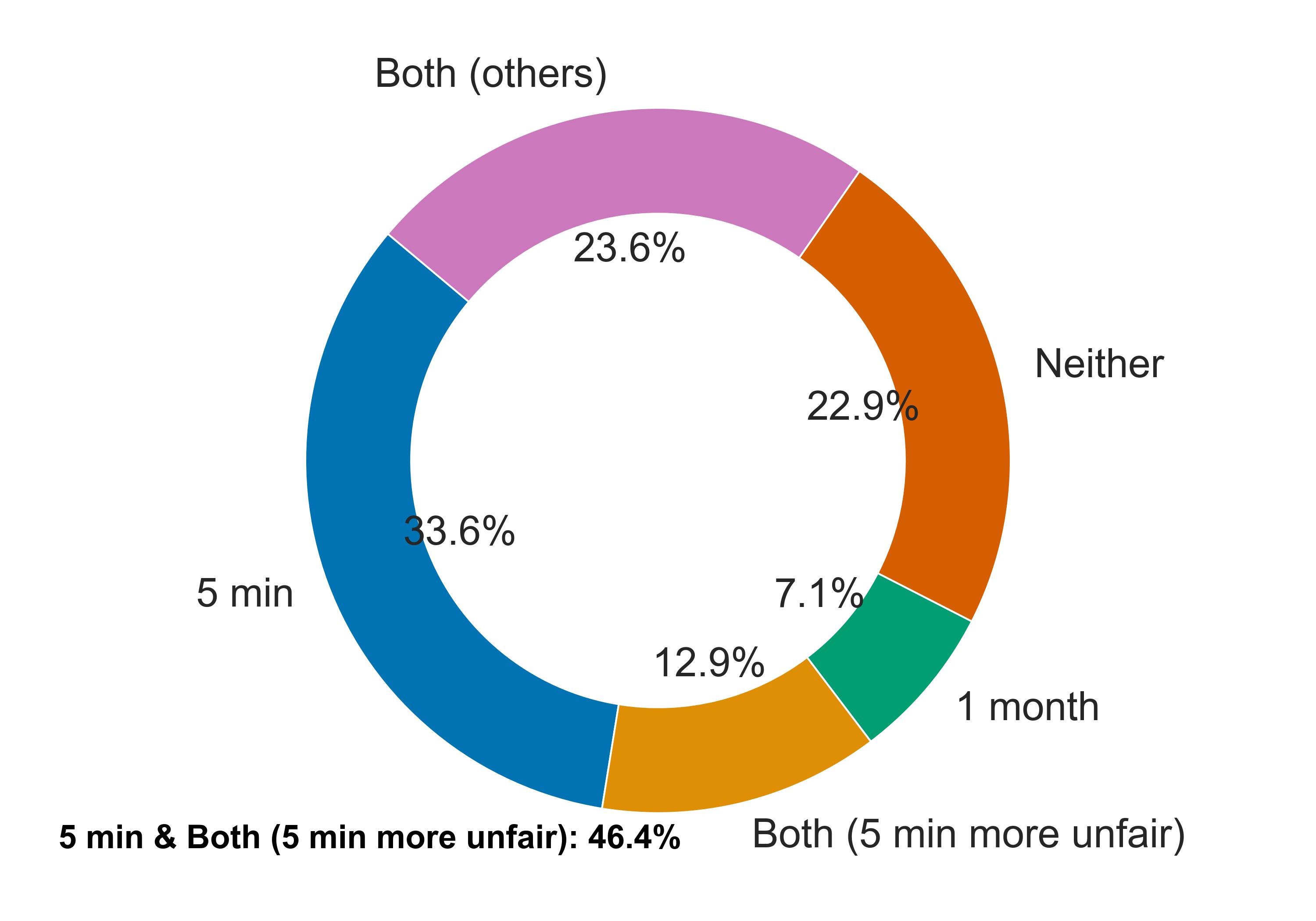}
\caption{Distribution of opinions regarding which post makes the respondents feel more unfairly treated} 
\label{fig:Q7}
\end{figure}

As illustrated in Figure \ref{fig:Q7}, the predominant initial choice among respondents is the \textit{5-minute} post, accounting for $33.6\%$ of the selections. Incorporating those respondents who initially chose \textit{Both} and subsequently indicated that the \textit{5-minute} post evoked stronger feelings of unfairness, we find that $46.4\%$ of participants perceive recent different treatments as more unfair than ones set further back in time. With this $46.4\%$ surpassing the proportions of other available choices, our continuous metrics hypothesis is empirically validated.

\section{Extension: Price-based Revenue Management}
\label{sec:extension}


Price-based RM represents another significant class of RM problems~\cite{gallego1994optimal, gallego1997multiproduct, talluri2004revenue}. In this paradigm, a seller markets one or multiple products over a finite horizon with a given initial resource inventory. Unlike the quantity-based RM, which determines whether to accept or reject an incoming customer, the decision in price-based RM involves setting a price for each arriving customer. Typically, we assume the existence of a bijective mapping between the price and the purchasing probability of each type of customer, whereby a higher price correlates with a lower purchasing probability. The objective is to maximize the total revenue over the given finite time horizon. We first define individual fairness for price-based RM, drawing upon a similar intuition to Definition \ref{def:ifmetrics}, designed for quantity-based RM.

\begin{definition} \label{def:ifmetricsprice}
Assume the algorithm $\mathcal{A}$ assigns a price $p_u^{(i)}$ to the $u^{\text{th}}$ customer of type $i$. An Algorithm $\mathcal{A}$ is $(\alpha,\delta)$-fair if for all $i \in [n]$, $u,v \in [n_i]$, where $n$ is the number of customer type and $n_i$ is the number of type $i$ customers who will arrive, with probability at least $1-\delta$, we have
\[
\mathbb{P}(p_u^{(i)} < p_v^{(i)} ) \leq d(u,v).
\]
\end{definition}

Traditional methods for solving price-based revenue management, for instance, those proposed by \citet{gallego1994optimal, gallego1997multiproduct}, use the \AlgDLP algorithm structure, while \citet{jasin2012re,wang2022constant} employs the \AlgRDLP algorithm structure. These papers offer static or dynamic pricing strategies aimed at minimizing regret under a stochastic arrival setting. It is evident that neither static nor dynamic pricing strategies satisfy the individual fairness metrics as introduced in Definition \ref{def:ifmetricsprice}. In the case of static pricing, as the capacity depletes, we assign a price of $+\infty$ to arriving customers. For dynamic programming, its inherent structure is entirely not individually fair. 

First, we employ the concept of a decreasing grace period, adjusting it to suit the price-based context. Specifically, a decreasing grace period to type $i$ customer within the interval $[t_1,t_2]$ means that for every $u^{\text{th}}$ type $i$ customer arriving within the interval $[t_1,t_2]$, we assign them a price $p_{u-1}^{(i)}$ with a probability of $1-\beta$, and a price of $+\infty$ with a probability of $\beta$. This is analogous to the decreasing grace period for quantity-based RM. 

Subsequently, we use the modification of the static pricing algorithm as an example to illustrate how to apply a decreasing grace period to guarantee individual fairness. A static pricing algorithm assigns a price $p^{(i)}$ to all type $i$ customers as long as sufficient resources exist, and assigns a price of $+\infty$ to all customers once resources are exhausted. The following theorem summarizes the results and the proof can be found in Appendix \ref{append:Sec8}:

\begin{theorem} \label{thm:pricebased}
    For the price-based revenue management problem where the price $p^{(i)}$ is given to type $i$ customers, by assigning a decreasing grace period $[t_1(i), t_2(i)]$, where $t_1 (i) = \inf\{t: \min_{j \in [L]}m_j(t) \leq \bar a n \gamma\}$ and $t_2(i) = T$ to each type $i$ customer,  where $\gamma=\log_{1-\beta}\delta$, the $(\alpha,\delta)$-fair metrics is satisfied. Moreover, compared to the original static pricing algorithm, the revenue loss is bounded by $\frac{\bar a}{\underline a} n\gamma \bar{p}$, where $\bar{p}=\sup_{i \in [n]} p^{(i)}$.
\end{theorem}

{
\section{Discussion} \label{sec:discussion}

In this paper, we addressed fairness from the customer's side and we provided a way for standard RM algorithms to satisfy our notion with almost no revenue loss. Defining the ``correct'' notion of fairness is a tall task, one which cannot be definitively solved by a single paper. That said, in an effort to move the discussion on fairness definitions in RM forward, we provide some more details around the properties of \emph{our} fairness definition ---and how we came to it--- in this section.

\subsection{Fairness Within and Across Customer Types} \label{subsec:sametype}

In Section \ref{sec:classicalnrm}, we established that in classical revenue management problems, customers are categorized based on the type and quantity of resources they request, rather than demographic characteristics or other personal attributes. In this setting, customers within the same type are homogeneous—they request the same resource(s) in the same amount and generate the same revenue—and our definition of individual fairness naturally applies within each type.

In real-world settings, however, it is common for different customer subgroups to request the same resource but generate different revenues. For example, airlines often sell the same economy-class seat at varying prices depending on the customer's profile, and hotels may charge corporate clients lower rates than tourists for identical rooms. To model such cases, we extend the classical setting by allowing each type \( i \in [n] \) to consist of \( K \) subgroups, where each subgroup \( k \in [K] \) requests the same resource in identical quantity but yields a different reward \( r_{ik} \).

To maintain a unified framework, we show that this extended model can be equivalently transformed back into the classical network revenue management form. Specifically, let \( \lambda_{ik} \) denote the arrival rate of subgroup \( k \) within type \( i \), satisfying \( \sum_{k \in [K]} \lambda_{ik} = \lambda_i \). We define the weighted reward for type \( i \) as

\[
\hat{r_i} = \frac{\sum_{k \in [K]} \lambda_{ik} r_{ik}}{\lambda_i}.
\]

By aggregating all subgroups into a single type using this weighted reward, we can treat them uniformly for the purposes of revenue optimization and fairness. As a result, the fairness mechanisms and algorithms developed in the main body of the paper can be directly applied by replacing the rewards of all subgroups within type $i$ with the aggregated weighted reward $\hat{r_i}$.

\subsection{Time-Dependent Fairness: Ensuring Stronger Guarantees for Close Arrivals} \label{subsec:time}

Through our survey in Section~\ref{sec:experiment}, we examined how humans perceive fairness over time. Our findings indicate that people view disparate treatment occurring in close temporal proximity as significantly more unfair than similar disparities occurring in the distant past. This is intuitive: for instance, a customer is unlikely to care whether another customer receives preferential treatment 100 or 101 days later, whereas they would perceive a fairness concern if the disparity occurred today versus tomorrow.  

To reflect this intuition, our fairness definition is \emph{time-dependent}. Specifically, fairness should be locally strong—ensuring a low probability of observing disparate treatment for customers who arrive close in time—while allowing for weaker guarantees as the time gap increases. The three properties of $d(u,v)$ from Definition~\ref{def:ifmetrics} make sure that we satisfy these desiderata. Property 1, which lets $d(u,v)$ be an increasing function with respect to $|u-v|$, ensures local fairness by imposing stricter constraints on customers who arrive closer together than those farther apart. Property 2 sets $d(u,v) = 1$ when $|u - v| = 1$, providing a consistent benchmark for evaluating fairness at the smallest time separation. Property 3 introduces a threshold $w_{\alpha}$ such that when $|u - v| = w_{\alpha}$, the fairness constraint becomes $\mathbb{P}\left[i(v) \succ_{\calA} i(u) \right] \leq \alpha \cdot d(u,v) = \alpha \cdot 1/\alpha=1$, effectively placing no restriction on the algorithm. This reflects the idea that customers may not expect or require fairness comparisons with others who arrive significantly earlier or later. In practice, this property is reasonable—due to factors such as limited visibility into past transactions, changes in context or inventory, or simply diminishing concern over time, customers often do not compare their treatment to that of others far removed in time.




\subsection{Incorporating 
Arrival Uncertainty: Fairness with High Probability} \label{subsec:probability}

Our fairness definition is probabilistic. This is because directly enforcing fairness constraints in revenue management poses challenges due to the uncertainty in the arrival process. In particular, we will demonstrate next how such uncertainty can lead to highly inefficient allocation when fairness is imposed rigidly (i.e., with probability 1) by Equation \eqref{eq:tempfair}. To illustrate this, we present a simple impossibility result: no algorithm that satisfies our fairness definition with probability 1 (i.e., $\delta = 0$) can achieve a sub-linear regret even if there is only one type of customer!

\xhdr{Impossibility Result. }  Consider the simplest revenue management (RM) setting: a single type of customer arrives sequentially with rate $\lambda$, and a single type of resource is available with a total capacity of \( M \) units. Each customer requests 1 unit of the resource and yields a revenue of 1 if accepted. Without fairness constraints, the optimal policy is trivial: accept every customer until the resource is fully depleted. There is no tradeoff—maximizing revenue simply means admitting as many customers as possible.

However, under the fairness constraint of Equation~\eqref{eq:tempfair}, the situation changes dramatically. \textbf{No algorithm} satisfying the fairness constraint can achieve a sub-linear regret (see Appendix~\ref{append:def}). To see this, consider the case where only one unit of resource remains at some time \( t \). If the algorithm accepts the arriving customer at time \( t \), the next customer must be rejected due to depleted capacity. This immediate rejection creates a sharp disparity between two neighboring customers, violating the fairness constraint. As a result, any feasible algorithm must avoid fully depleting resources and must behave conservatively even when capacity is available.

To see the impact of this constraint on regret, consider the first customer whose acceptance probability is strictly positive. To satisfy the fairness constraint in Equation~\eqref{eq:tempfair}, the next customer must be accepted with probability at least \( (1 - \alpha) \) times that of the previous customer. This cascading requirement implies that acceptance probabilities cannot be reduced quickly. In fact, it takes infinite time steps to decay the acceptance probability from nearly 1 to 0, which means that the resource depletion must be earlier than the acceptance probability becoming $0$. Hence, any algorithm satisfying the fairness constraint must set the acceptance probability of the first customer to be $0$, which implies that the total revenue is $0$.



This observation highlights a fundamental issue: enforcing fairness deterministically for every possible arrival sequence leads to highly inefficient allocation policies. In effect, the algorithm must behave as though the arrival process is adversarial—even when it is in fact stochastic—simply to avoid violating fairness in rare, extreme cases.

To illustrate this, consider again the basic revenue management setting with a single type of customer and one type of resource. Suppose the arrival rate is \( \lambda = \frac{M}{2T} \), meaning that the expected number of customers over the horizon is only half the resource capacity. In most realizations of this process, the total number of arrivals will be well below \( M \), and thus the algorithm could safely accept every customer while satisfying fairness. However, there is still a non-zero probability that more than \( M \) customers arrive. Under the original fairness definition—which requires the constraint to hold for every realization—the algorithm cannot afford to accept all early customers, as doing so might unfairly exclude future ones in this rare event. So it must preemptively reject some customers even when the resource limit is not reached, simply to hedge against low-probability events. 

To address this, we introduce an additional parameter \( \delta \), which allows the fairness constraint to be enforced with high probability, rather than absolute certainty. Specifically, we require that Equation~\eqref{eq:tempfair} holds with probability at least \( 1 - \delta \). This formulation preserves fairness in the overwhelming majority of cases, while avoiding pathological behavior in rare scenarios. The parameter \( \delta \) thus provides a principled way to balance fairness with efficiency in stochastic environments. 
}

\newpage

\bibliographystyle{ACM-Reference-Format}
\bibliography{reference}

\ECSwitch

\ECHead{Online Appendix}

\section{Survey Appendix} \label{append:survey}

\subsection{Demographics} \label{subapp:demo}
We conducted our survey through Prolific, ensuring gender balance—with an equal representation of male and female respondents—and fluency in English. Detailed demographic statistics are presented in Figure \ref{fig:demo}. The majority of our participants, over 50\%, belong to the age group of 18-35 years. Approximately 30\% are between 35 and 54 years old. Notably, more than 80\% of the respondents identify as white or Caucasian. Our respondent population is representative of the Prolific respondent base (see e.g., \cite{douglas2023data}).

The respondents of our survey come from diverse educational backgrounds. About 60\% have a bachelor's degree or have undertaken some form of graduate study. The rest have completed high school, earned an associate degree, or have some college education without obtaining a degree.

From an economic perspective, our sample spans various income brackets. Roughly 30\% report an income exceeding \$100K annually, while nearly 60\% fall within the \$25K-\$100K income range. Additionally, over 85\% of our participants identify as the primary shopper within their households. 

\begin{figure}[!tb]
\center
\includegraphics[width=\textwidth]{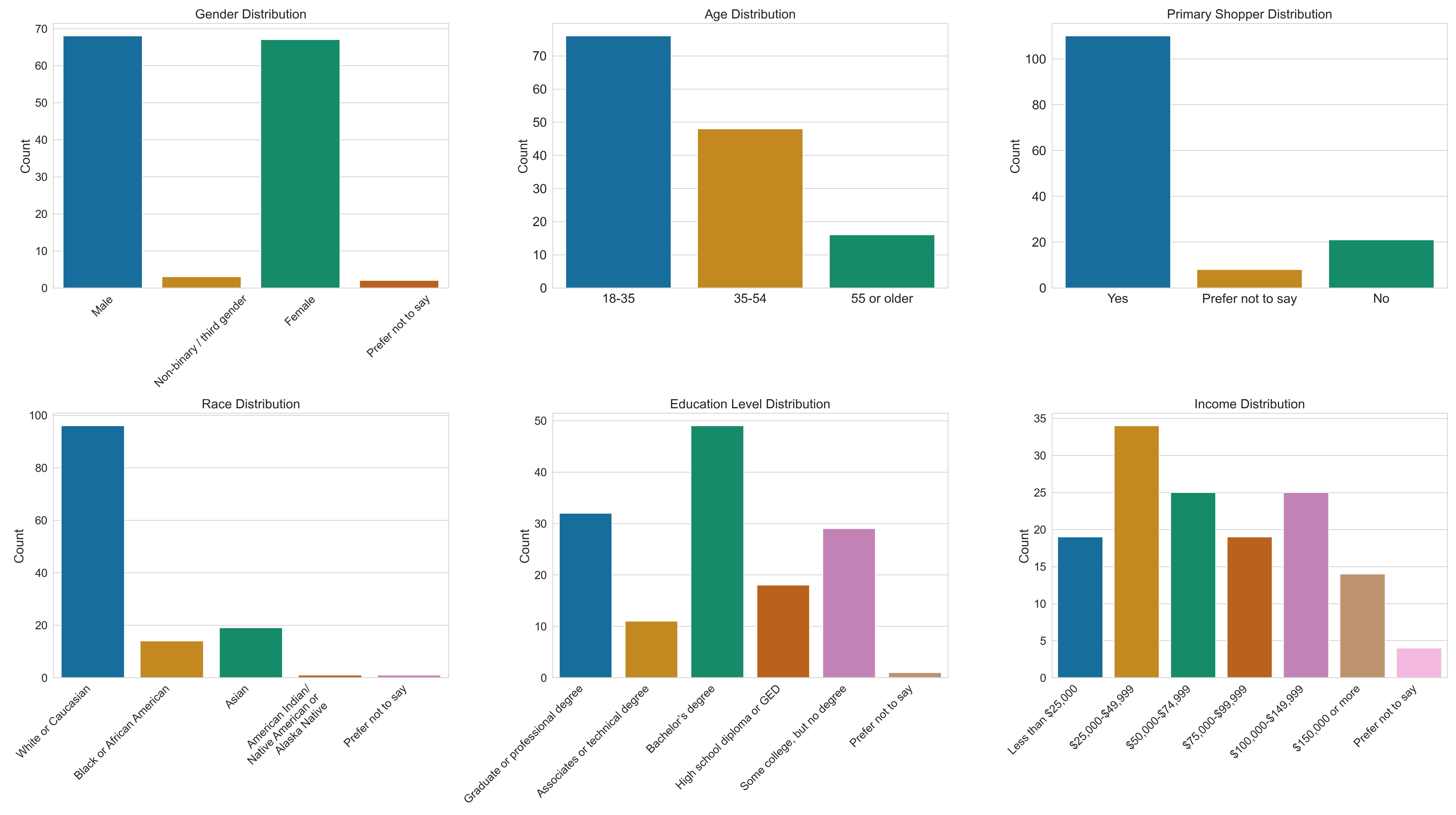}
\caption{Demographics of survey participants. The x axis corresponds to the categories for each
plot and the y axis reports the number of participants per category} 
\label{fig:demo}
\end{figure}

\subsection{Survey Questions} \label{subapp:question}

In this section, we display the questions and their respective distributions that were not introduced in Section \ref{sec:experiment}. \textbf{`Question 1'}, aimed at verifying the authenticity of participants' shopping behavior, seeks to determine if respondents would, all other factors being equal, opt for a cheaper option. The related distribution, depicted in Figure \ref{fig:Que1}, reveals that over 94\% of participants chose the less expensive item, thereby indicating a general propensity for rational economic behavior.

\begin{figure}[!tb]
\center
\includegraphics[width=0.4\textwidth]{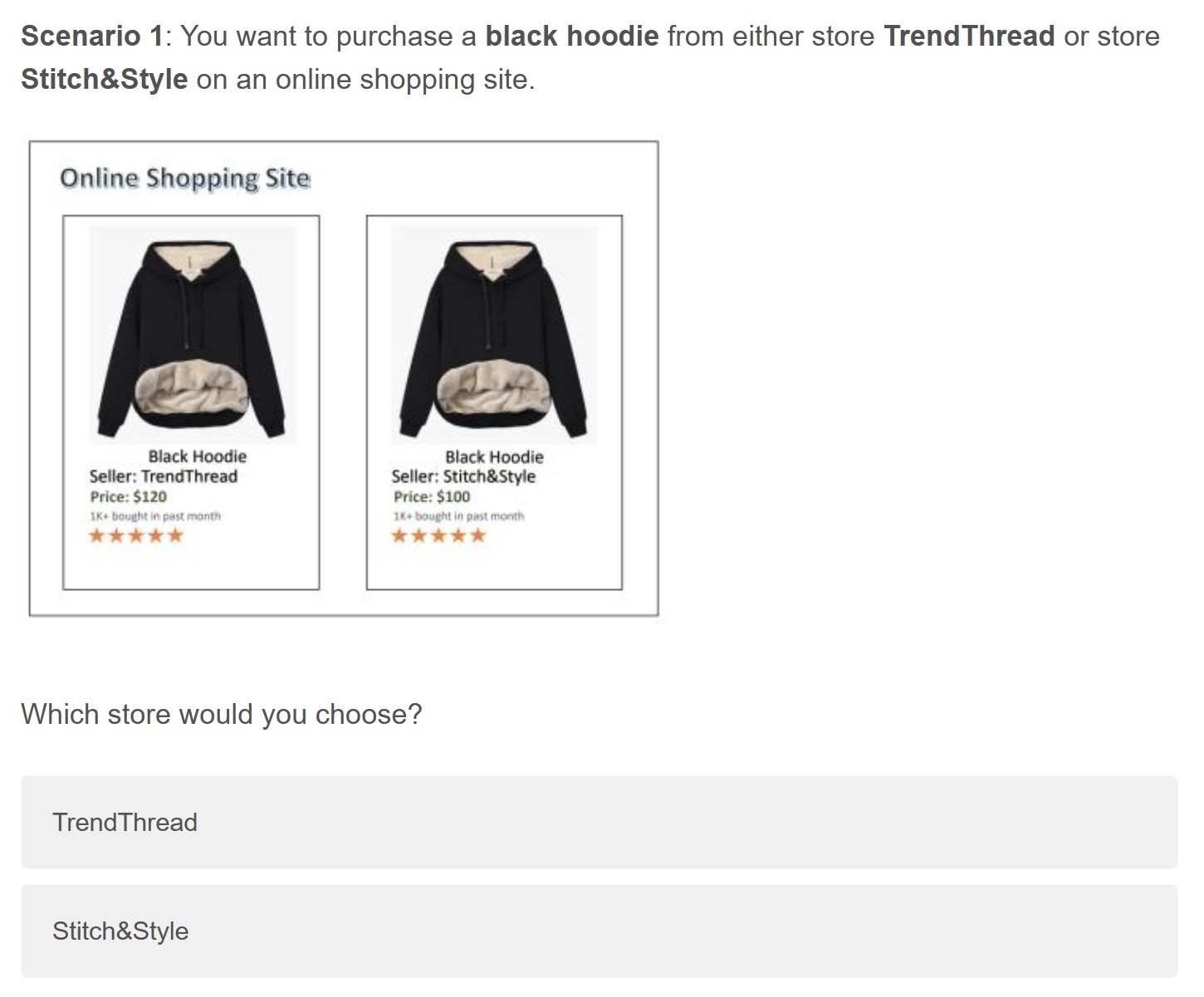}
\includegraphics[width=0.55\textwidth]{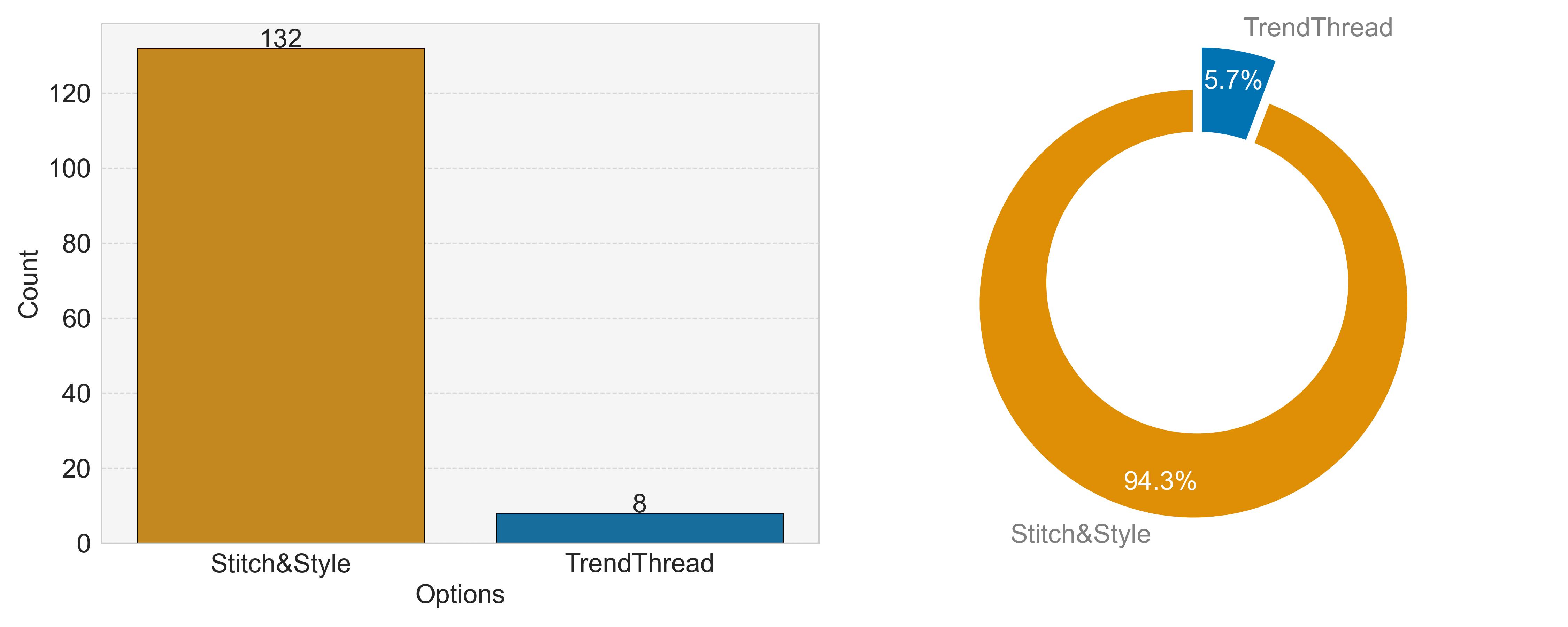}
\caption{Question 1 and its Answer Distribution} 
\label{fig:Que1}
\end{figure}

\textbf{`Question 2'} and \textbf{`Question 3'} are aiming to test hypothesis {\bf H1} mentioned in Section \ref{sec:experiment}. In these questions, we examine a situation focusing on an individual's engagement with social media content that depicts others' experiences with purchases. The detailed question can be found in Figure \ref{fig:Que23}, and the distribution can be found in Figures \ref{fig:Q1} and \ref{fig:Q1(2)}.

\begin{figure}[!tb]
\center
\includegraphics[width=\textwidth]{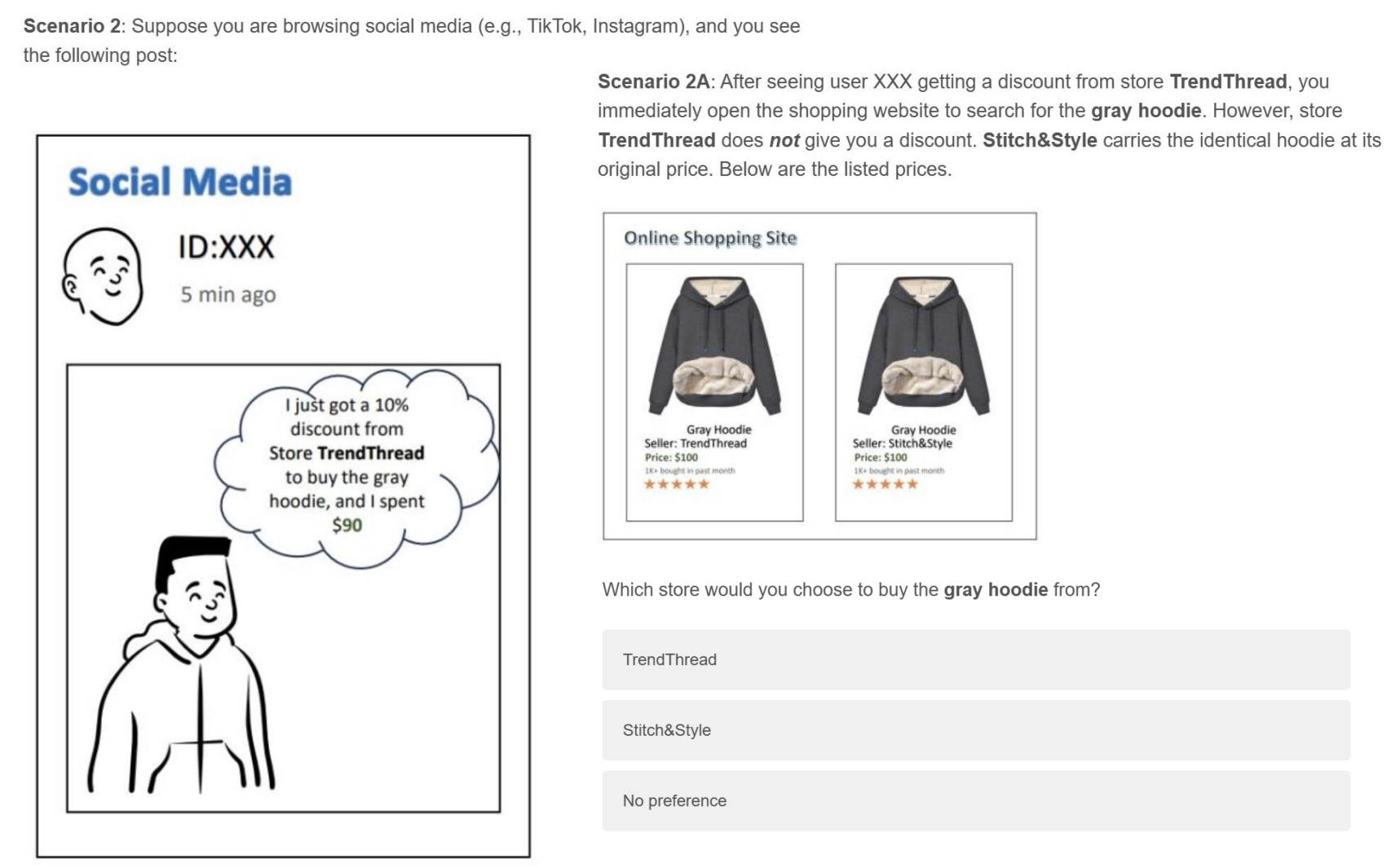}
\includegraphics[width=\textwidth]{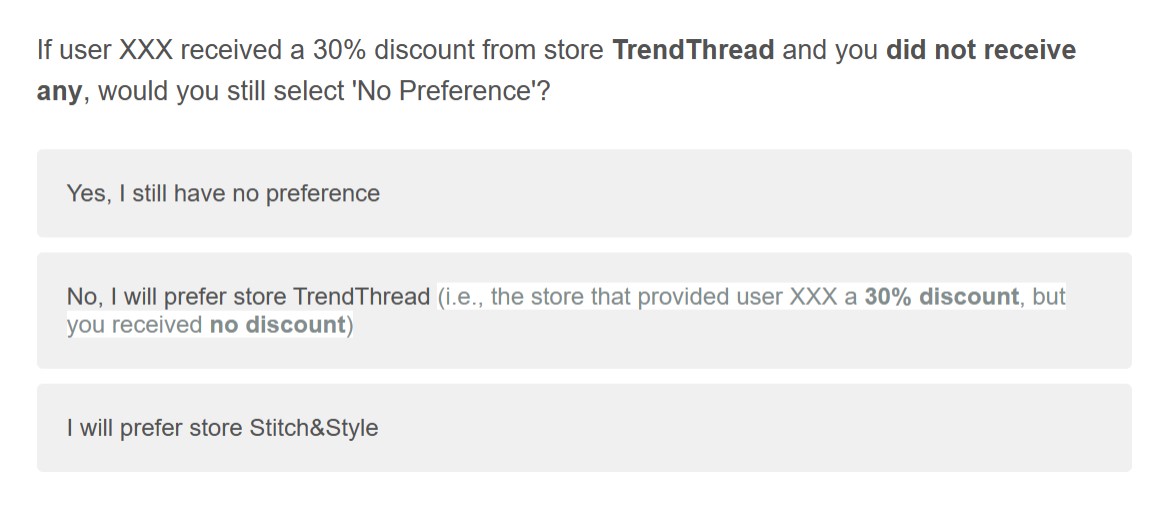}
\caption{Question 2 and Question 3} 
\label{fig:Que23}
\end{figure}

\textbf{`Question 4'} still tests hypothesis {\bf H1}. Contrasting the second and third questions, we highlight that pre-purchase unfairness can arise from customer reviews. We explore a scenario where two shops sell identical hoodies, each accompanied by varying review comments. The question can be found in Figure \ref{fig:Que4}. Additionally, we sought to investigate whether the positioning of such comments influences consumer reactions and potentially introduces bias. To this end, we randomized the placement of comments from User 5, ensuring they appeared with equal frequency at any point from the first to the fifth position. The aggregated distribution is presented in Section \ref{sec:extension}, Figure \ref{fig:Q3}. For a more detailed analysis on the order of comments, the distribution corresponding to each of the five cases is depicted in Figure \ref{fig:question3}.

\begin{figure}[!tb]
\center
\includegraphics[width=\textwidth]{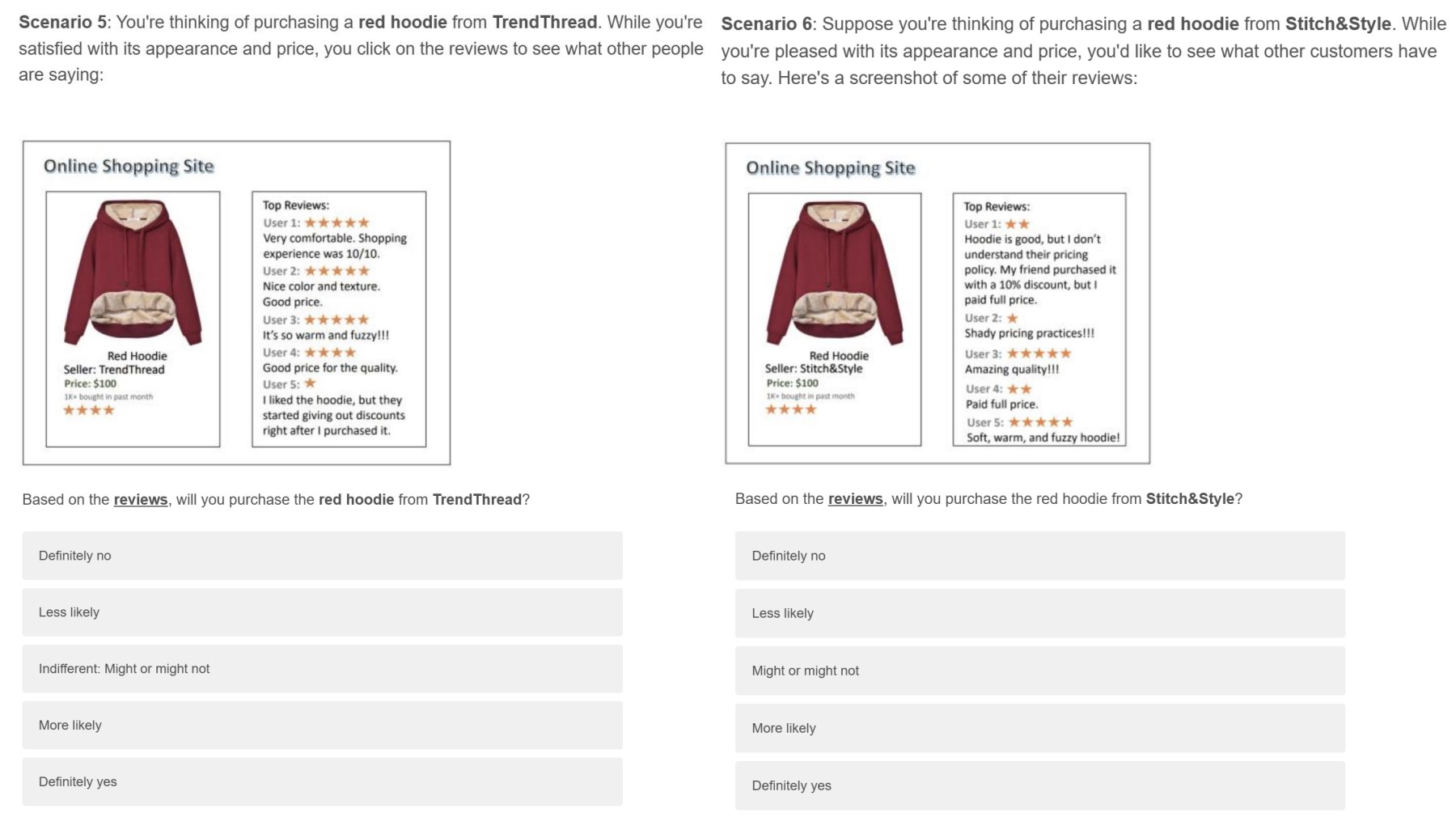}
\caption{Question 4} 
\label{fig:Que4}
\end{figure}

\begin{figure}[!tb]
\center
\includegraphics[width=\textwidth]{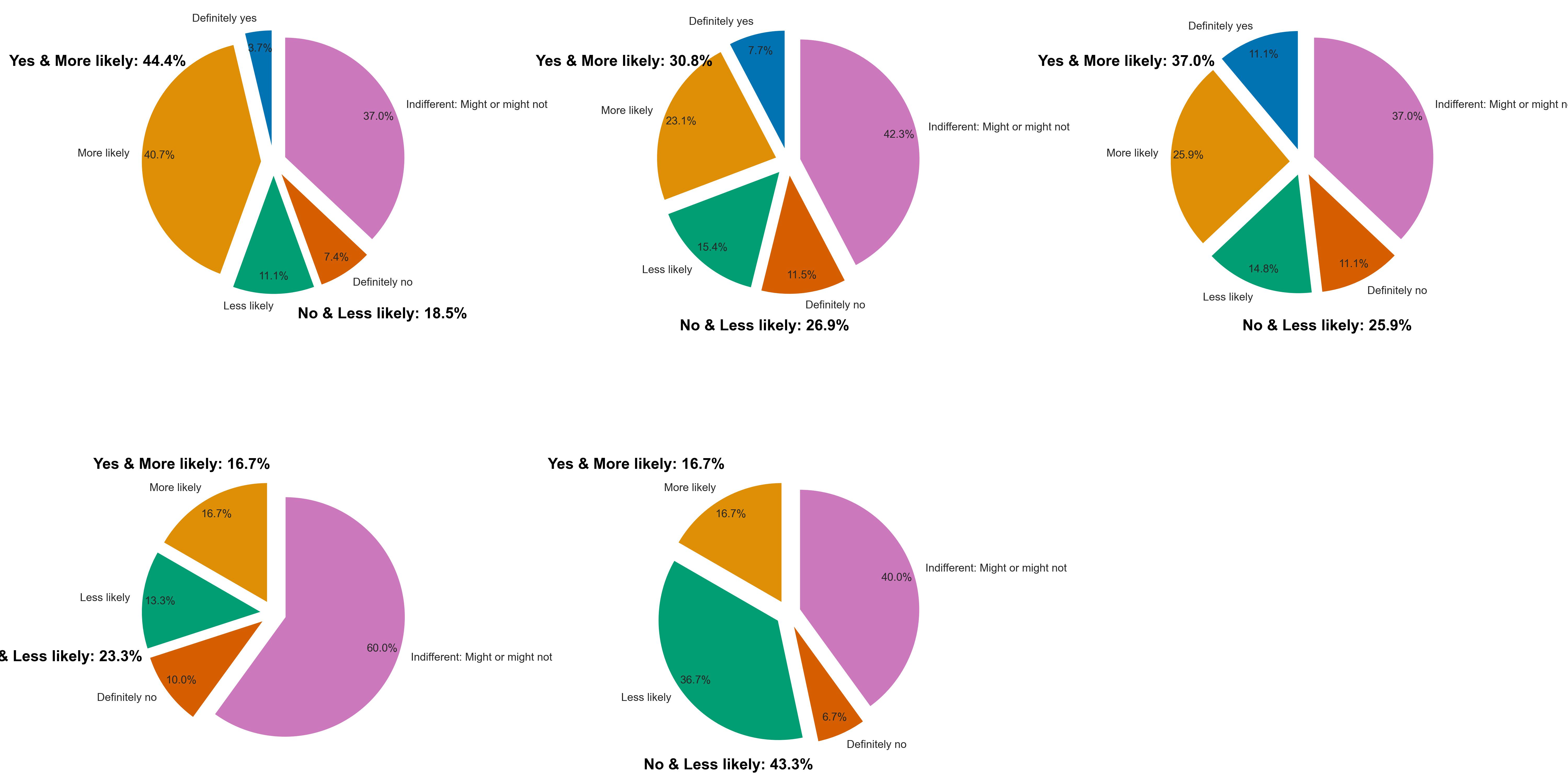}
\caption{Answer Distribution under Random Order of Comments} 
\label{fig:question3}
\end{figure}

In Figure \ref{fig:question3}, a significant bias is evident in the order of the comments. When comparing the same five comments, where the sole negative comment is placed last, the first diagram reveals a positive purchase incentive exceeding $44\%$ and a negative purchase incentive of approximately $18\%$. However, if the negative comment is positioned at the beginning, the positive responses drop to merely $16\%$, while the negative responses surge to over $43\%$. This intriguing observation may open the door to numerous compelling research questions.

\textbf{`Question 5'}, \textbf{`Question 6'}, and \textbf{`Question 7'} are used to test hypothesis {\bf H2} mentioned in Section \ref{sec:experiment}. We suggest a situation in which a customer, who has bought an item at its full price, later discovers that another person obtained the same product at a discounted rate, possibly leading to a sense of perceived unfairness. The detailed questions can be found in Figure \ref{fig:Que5}, and the distribution is in Figure \ref{fig:Q4}.

\begin{figure}[!tb]
\center
\includegraphics[width=\textwidth]{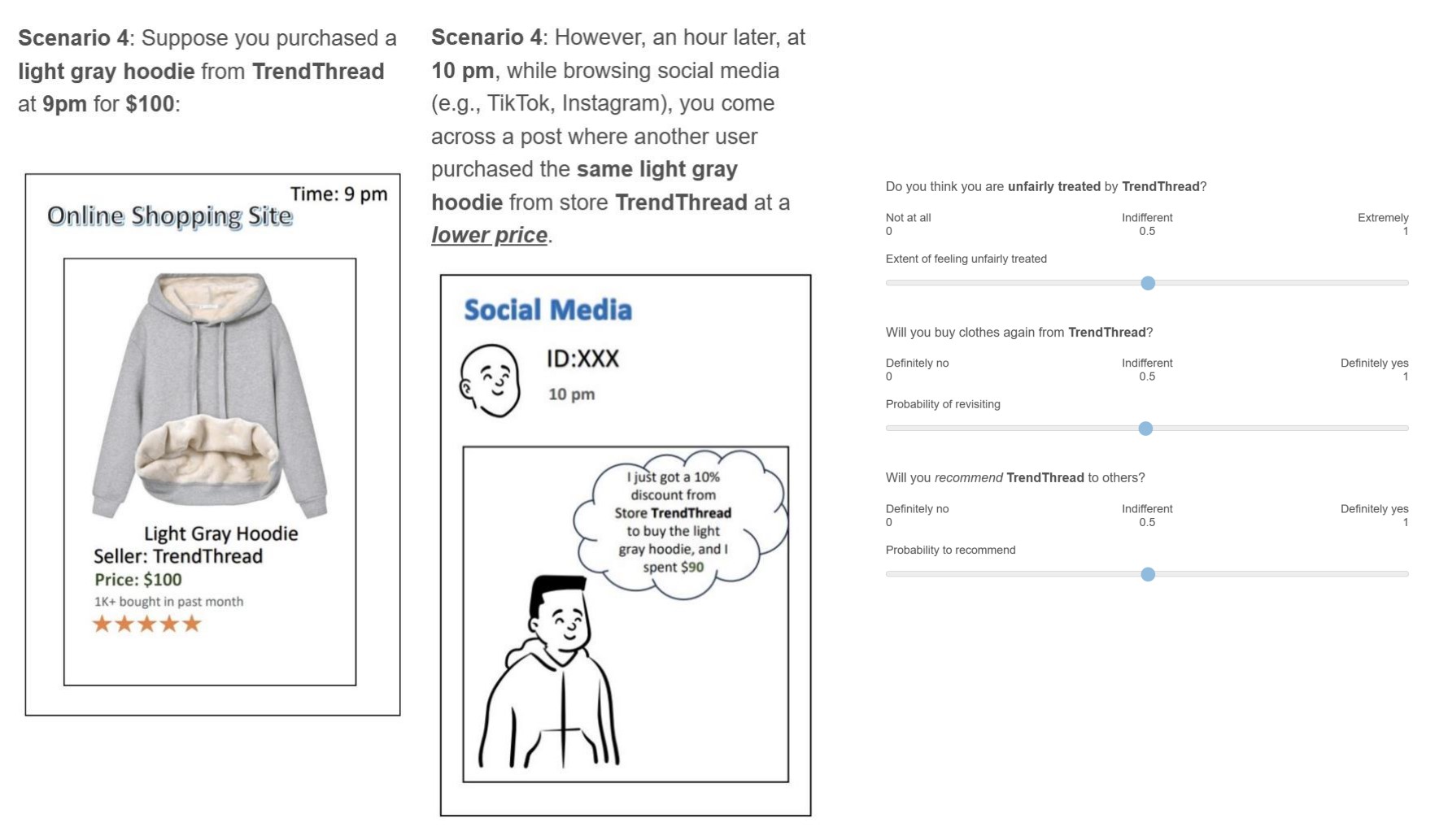}
\caption{Details of Questions 5-7} 
\label{fig:Que5}
\end{figure}

An aspect of our investigation focused on the potential bias introduced by the initial positioning of the slider in three questionnaire items. To assess this, the sliders were initialized in four distinct scenarios, each distributed equally among participants:

\begin{itemize}
\item All sliders beginning at the midpoint (0.5).
\item All sliders beginning at the maximum value (1).
\item All sliders beginning at the minimum value (0).
\item All sliders beginning at random positions.
\end{itemize}

The response distributions for each initialization scenario are illustrated in Figure \ref{fig:randomq4}, which suggests that there is no significant bias resulting from the initial location of the slider.

\begin{figure}[!tb]
\center
\includegraphics[width=\textwidth]{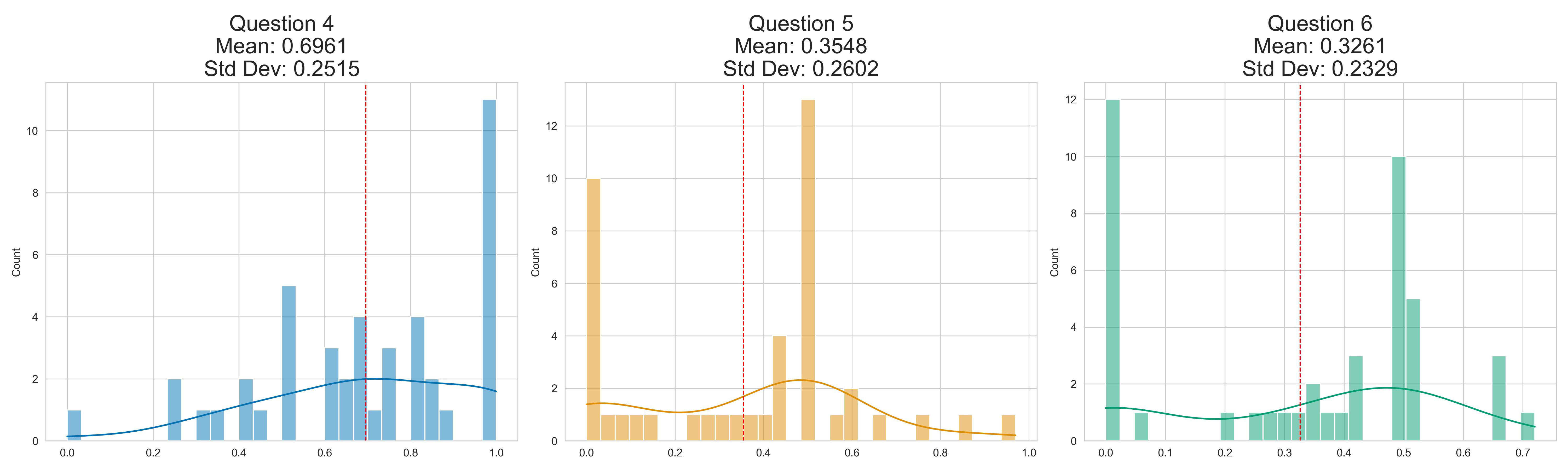}
\includegraphics[width=\textwidth]{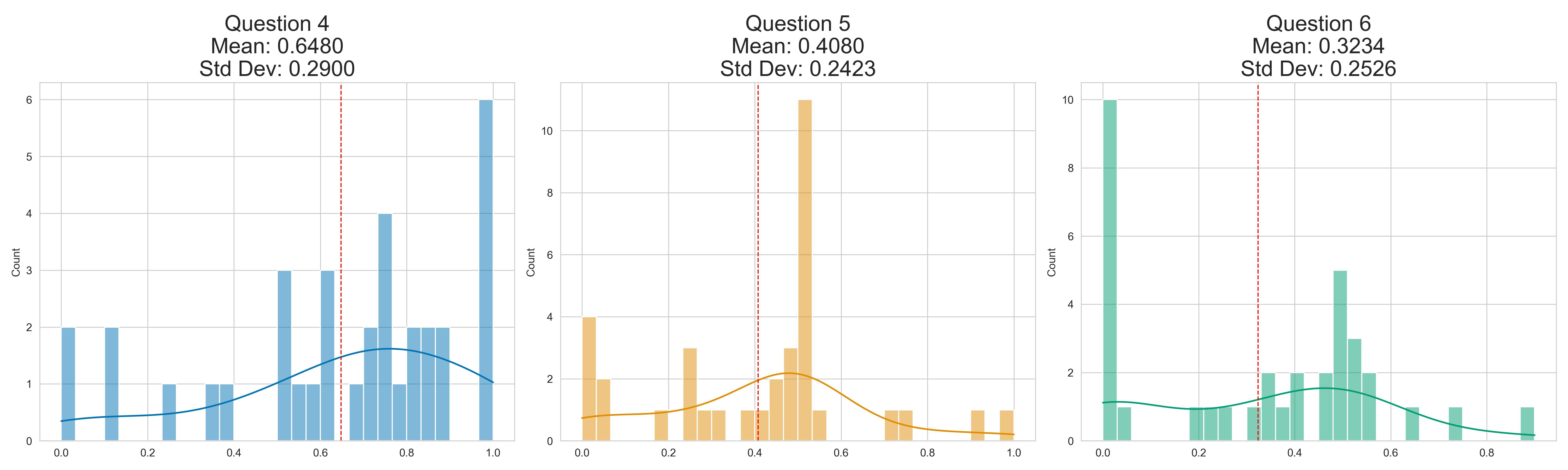}
\includegraphics[width=\textwidth]{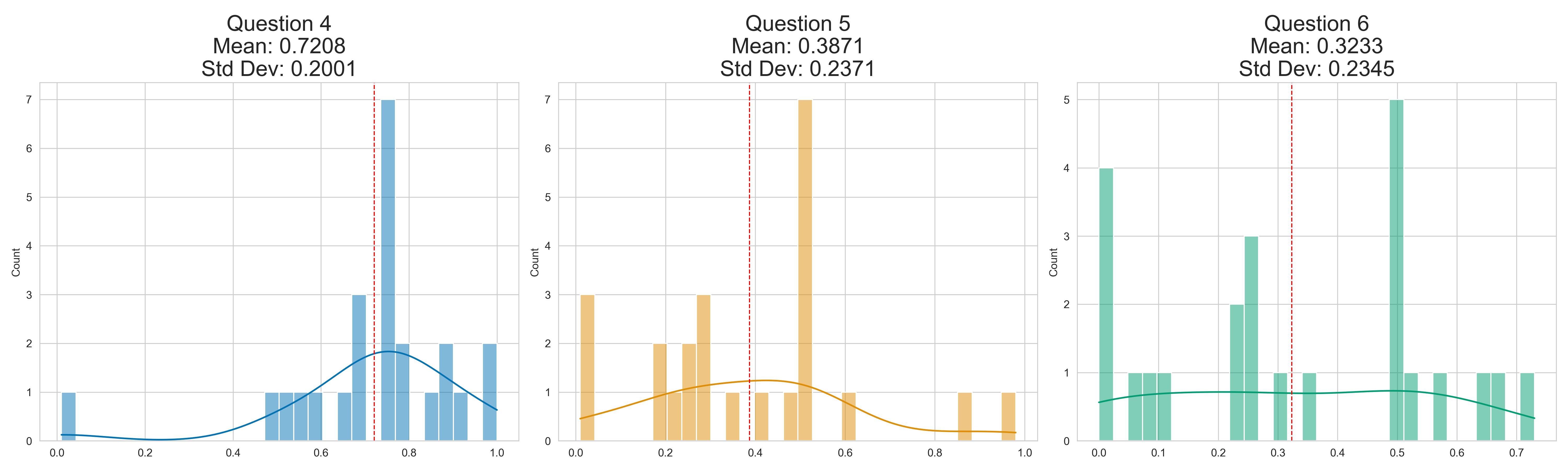}
\includegraphics[width=\textwidth]{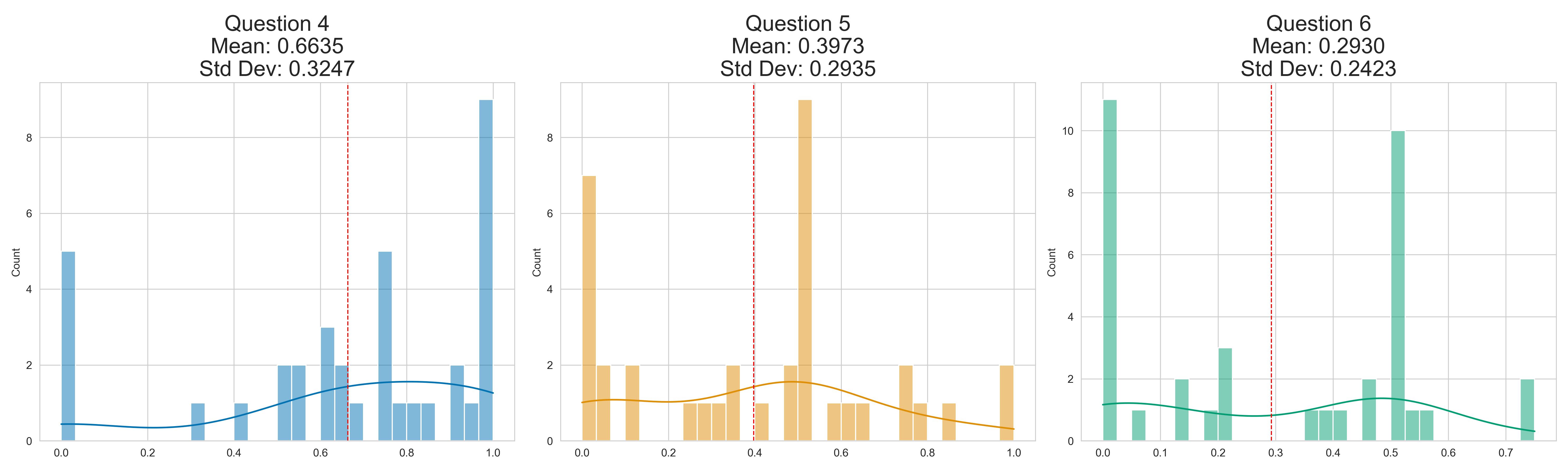}
\caption{Answer Distribution of Questions 5-7 with Different Starting Location of Sliders} 
\label{fig:randomq4}
\end{figure}

Finally, \textbf{`Question 8'} verifies hypothesis {\bf H3}, which is to evaluate if people are more responsive to recent differences in treatment compared to those that occurred further back in time. The detailed question can be found in Figure \ref{fig:Que6}, and the distribution has been mentioned in Figure \ref{fig:Q7}.

\begin{figure}[!tb]
\center
\includegraphics[width=\textwidth]{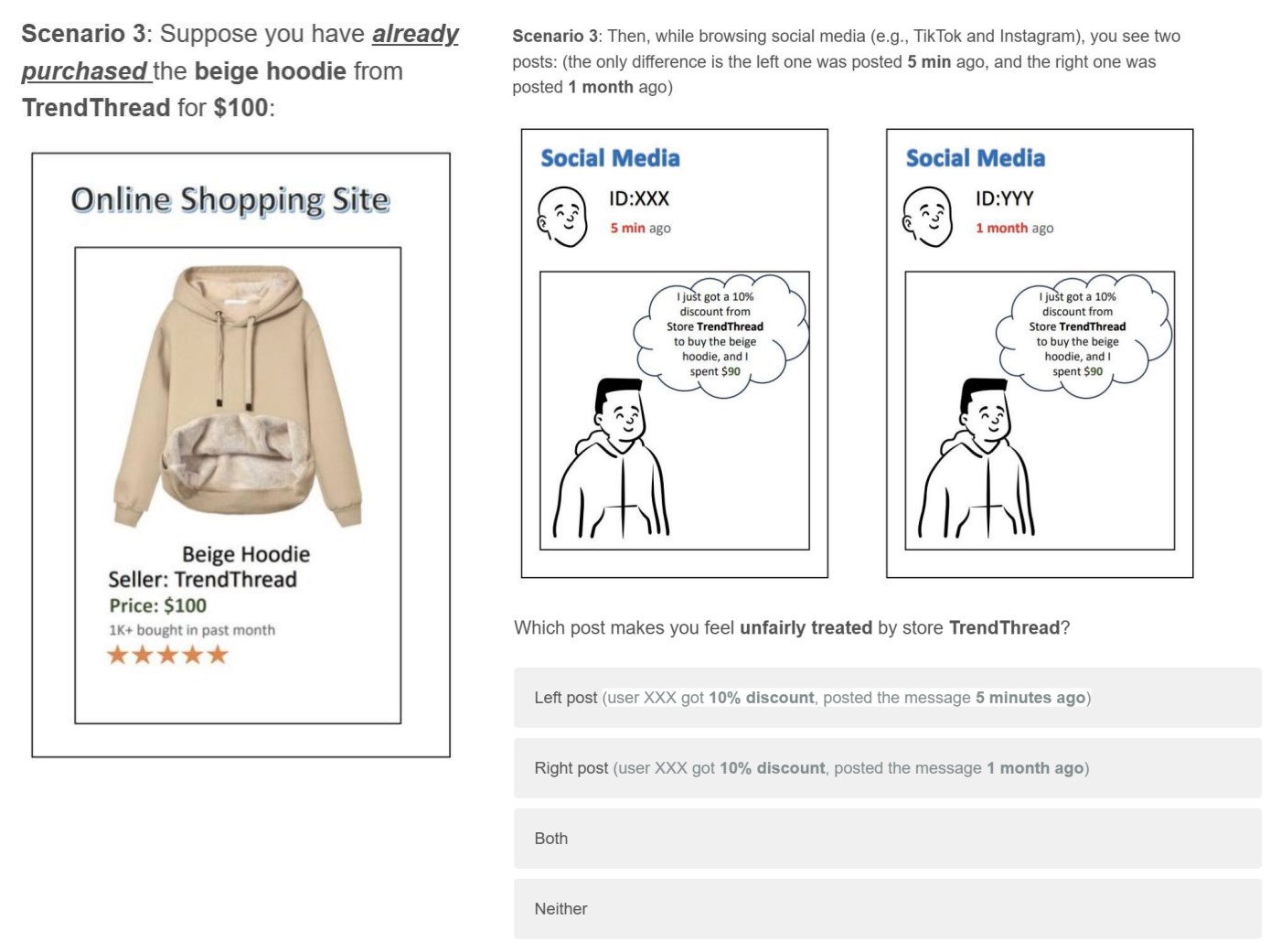}
\caption{Details of the Eighth Question} 
\label{fig:Que6}
\end{figure}

\subsection{$30\%$ Discount Survey} \label{subapp:30}

We conducted a comparative survey, wherein the $10\%$ discount is replaced by a $30\%$ discount. In this section, we will show the distribution of the answers of Question 1 - 8. 

Specifically, the distribution for \textbf{`Question 1'} can be found in Figure \ref{fig:30Qq1}. \textbf{`Question 2'} is depicted in Figure \ref{fig:30Q1}, while \textbf{`Question 3'} is represented in Figure \ref{fig:30Q1(2)}. The responses to \textbf{`Question 4'} are presented in Figure \ref{fig:30Q3}, and the distributions for \textbf{`Questions 5'} to \textbf{`Question 7'} are consolidated in Figure \ref{fig:30Q4}. Finally, the distribution for \textbf{`Question 8'} is available in Figure \ref{fig:30Q7}. Upon comparison with the data in Section \ref{sec:experiment}, it appears that the magnitude of the discount offered does not significantly influence the distribution of responses to these questions.

\begin{figure}[!tb]
\center
\includegraphics[width=0.9\textwidth]{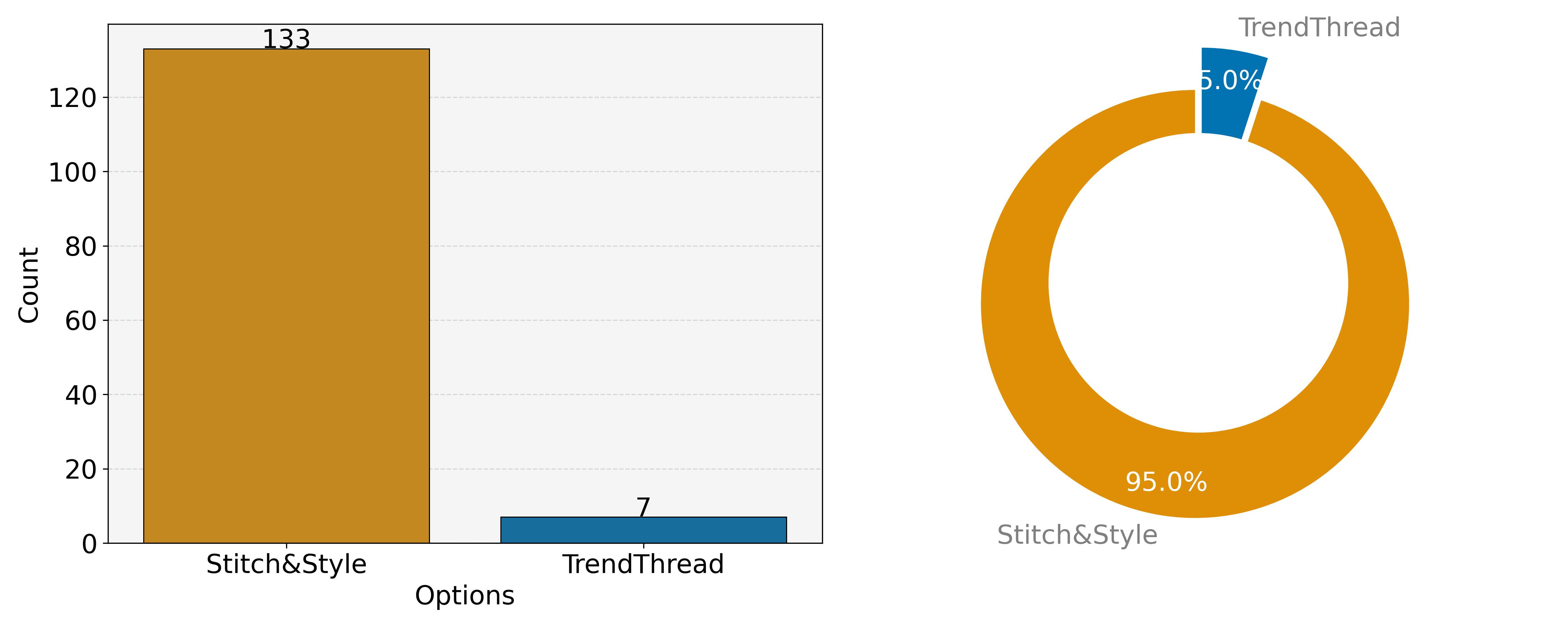}
\caption{Distribution of preferences among respondents of Question 1} 
\label{fig:30Qq1}
\end{figure}

\begin{figure}[!tb]
\center
\includegraphics[width=0.9\textwidth]{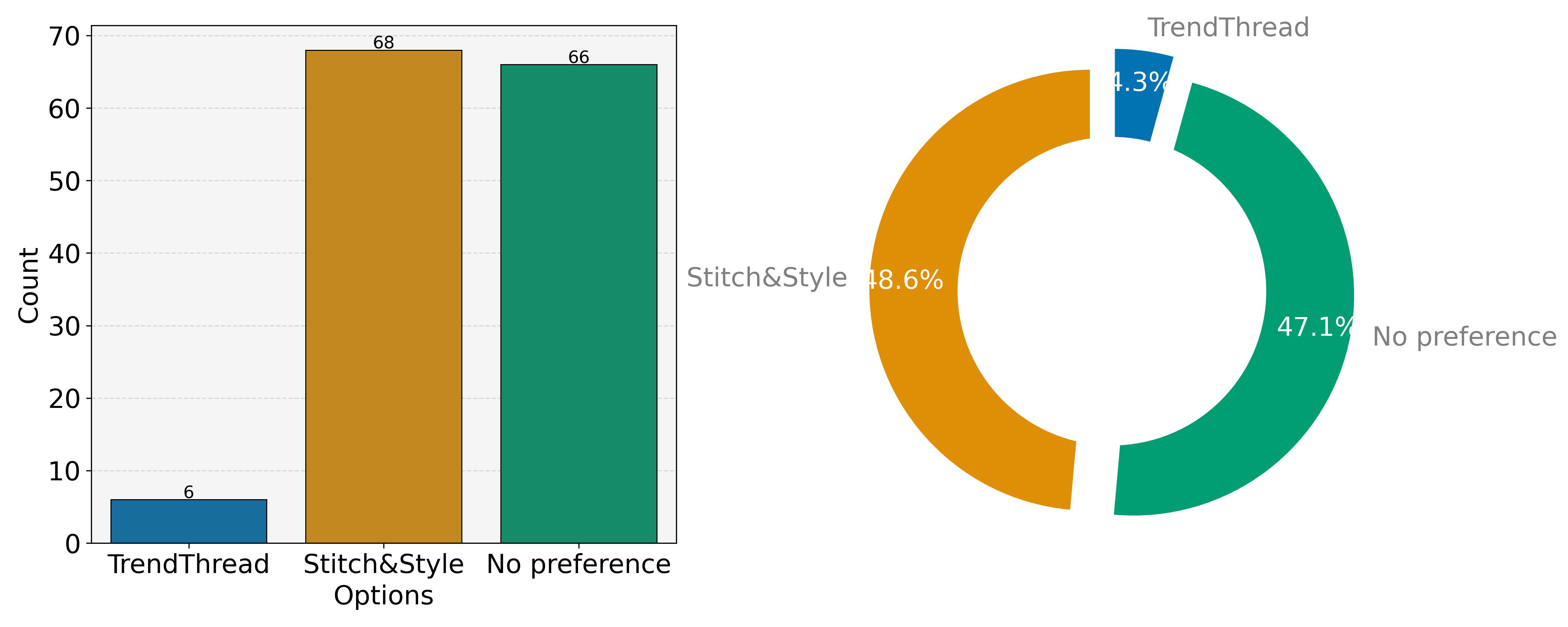}
\caption{Distribution of preferences among respondents of Question 2} 
\label{fig:30Q1}
\end{figure}

\begin{figure}[!tb]
\center
\includegraphics[width=0.9\textwidth]{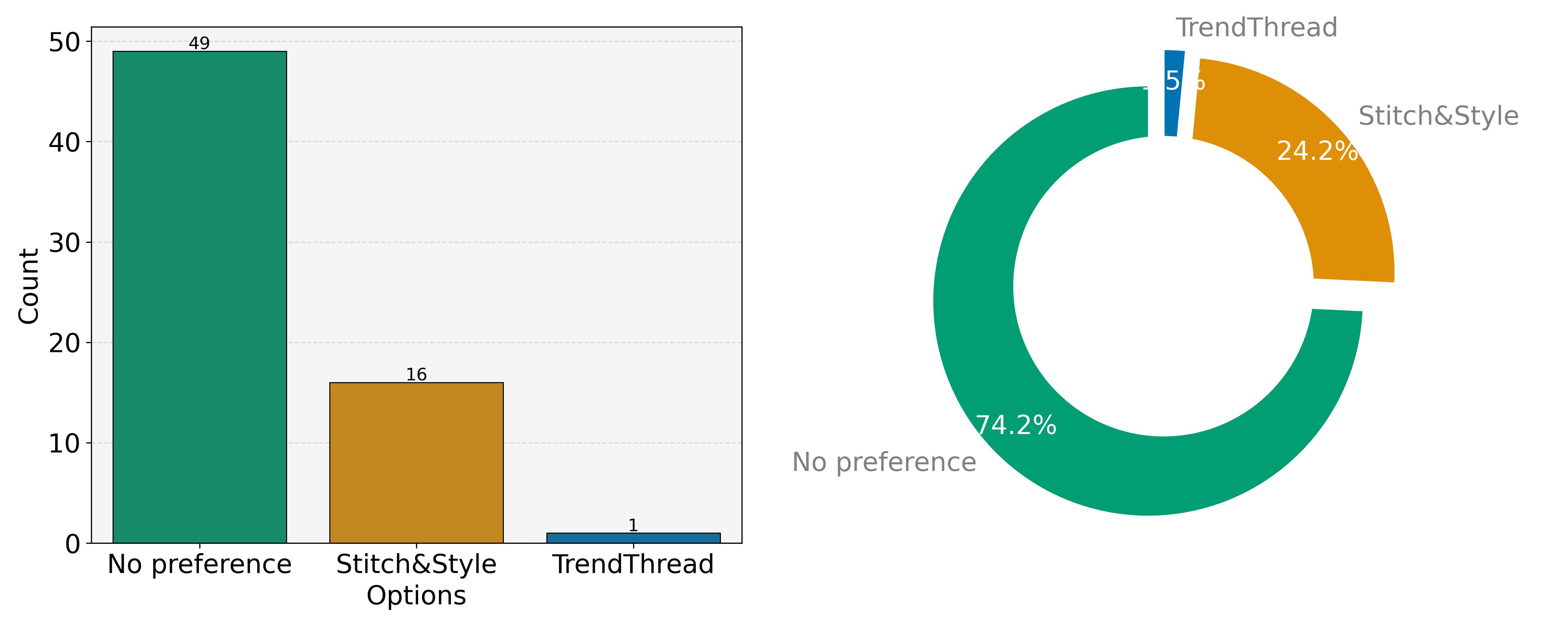}
\caption{Distribution of preferences among respondents of Question 3} 
\label{fig:30Q1(2)}
\end{figure}

\begin{figure}[!tb]
\center
\includegraphics[width=\textwidth]{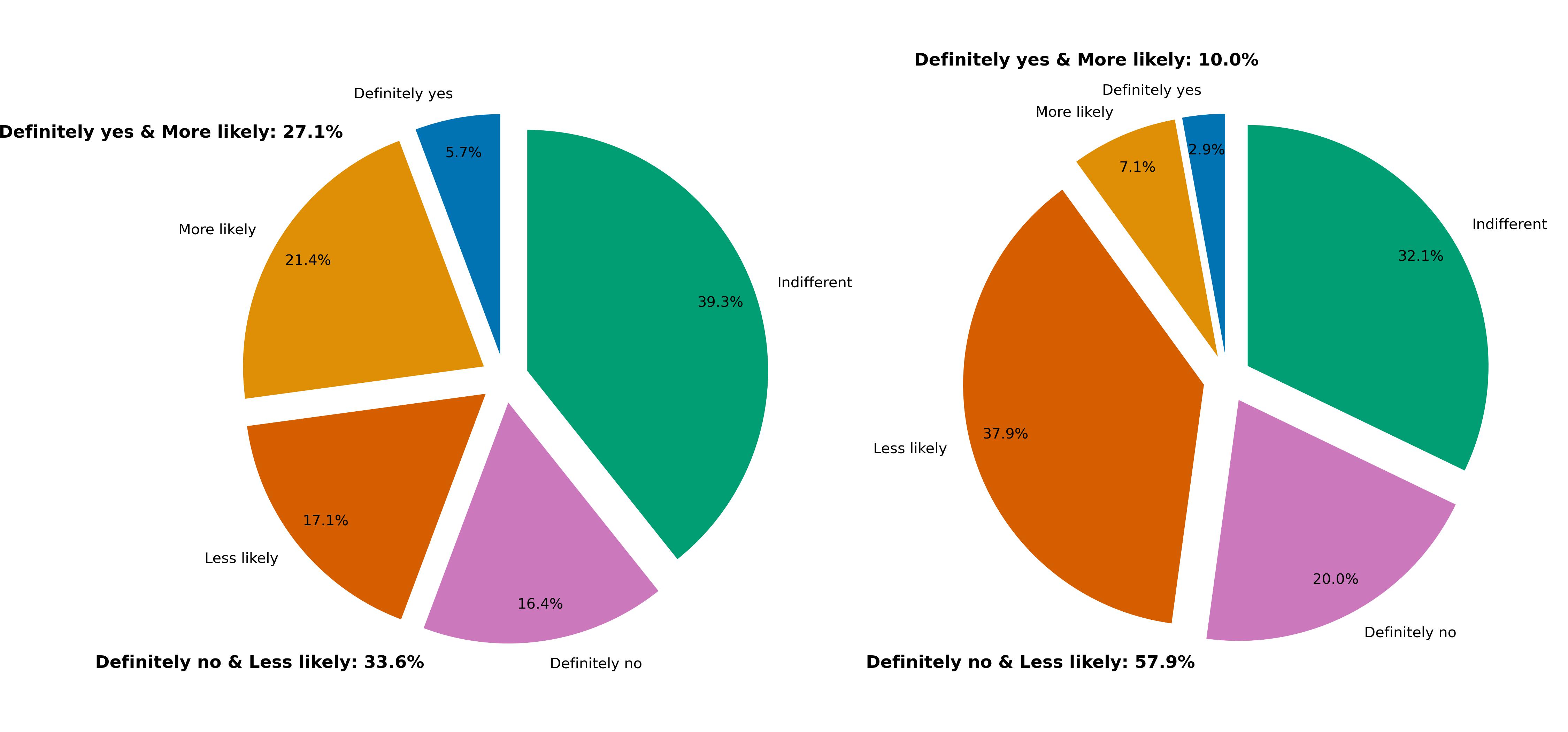}
\caption{\textbf{Left: } Distribution of preferences for store, \textit{TrendThread}; \textbf{Right: } Distribution of preferences for store, \textit{Stitch\&Style} } 
\label{fig:30Q3}
\end{figure}

\begin{figure}[!tb]
\center
\includegraphics[width=\textwidth]{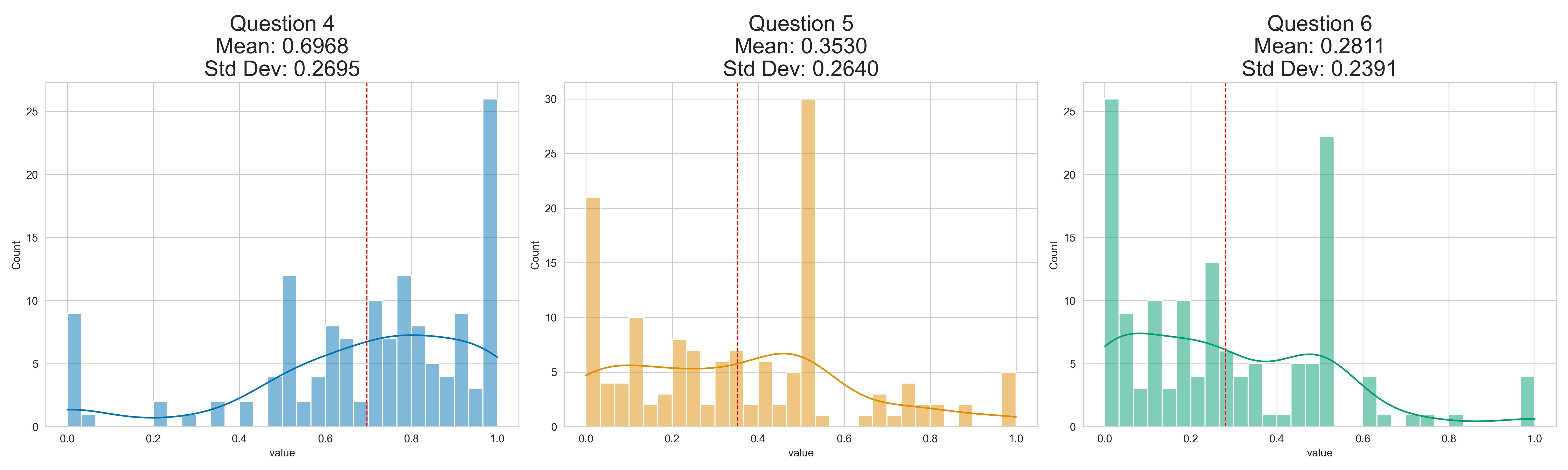}
\caption{\textbf{Left: } Distribution of preferences of Question 5; \textbf{Middle: } Distribution of preferences of Question 6;  \textbf{Right: } Distribution of preferences of Question 7} 
\label{fig:30Q4}
\end{figure}

\begin{figure}[!tb]
\center
\includegraphics[width=0.5\textwidth]{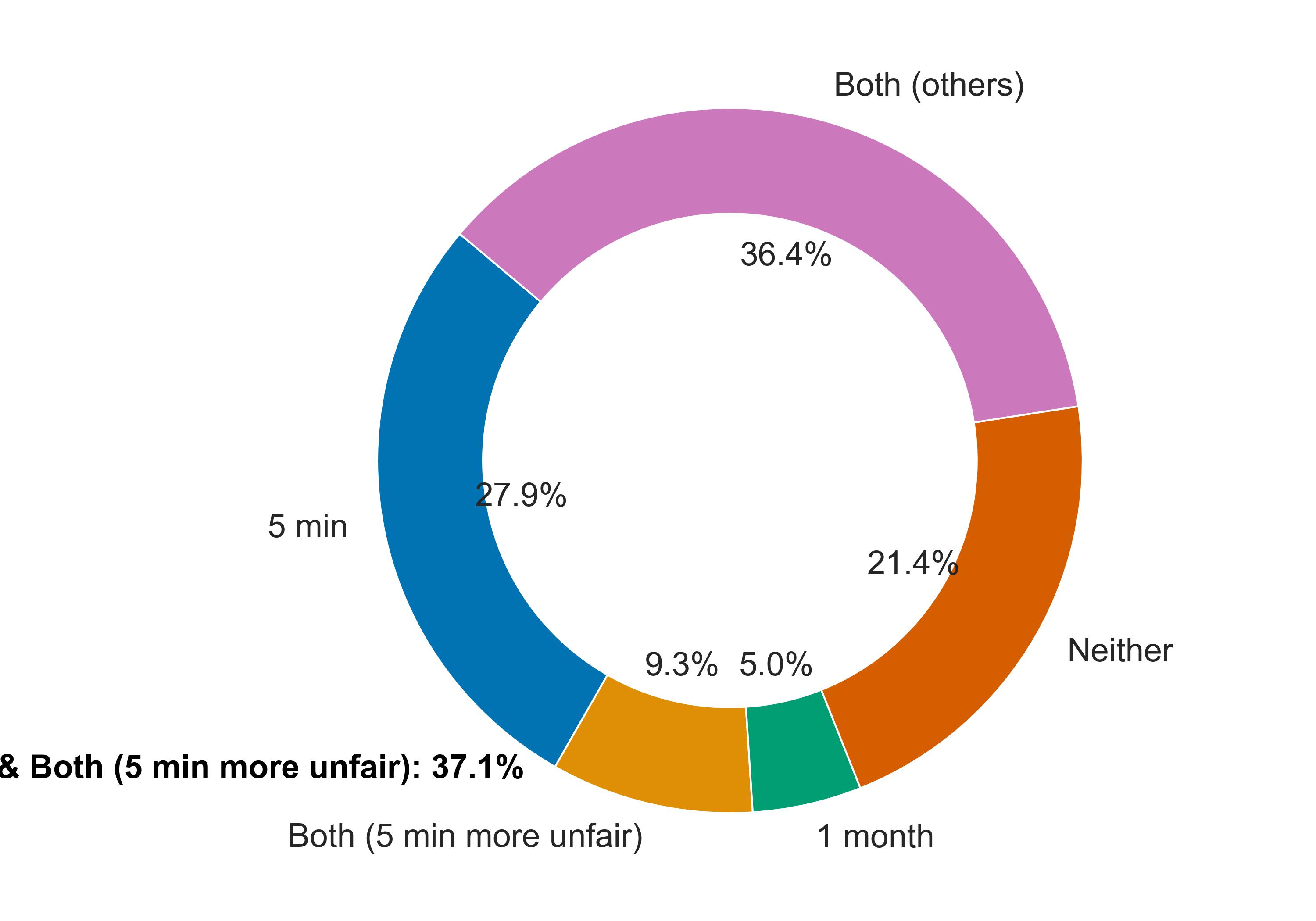}
\caption{Distribution of preferences among respondents of Question 8} 
\label{fig:30Q7}
\end{figure}

\newpage

\section{Supplementary Material for Section \ref{sec:intro}} \label{append:intro}

\subsection{Other Related Work} \label{subsec:other}

{\textbf{Time Variant Individual Fairness. } In the fields of Computer Science and Operations Research, individual fairness, often referenced as envy-freeness, is a critical metric in scenarios where equitable treatment of entities—such as customers, users, or resources—is desired. The time-variant aspect of individual fairness adds another layer of complexity, as it demands that customers arriving at similar times should receive similar treatment. Different papers adapt the concept of fairness to suit their specific context, which can lead to varying definitions and formulations. We present some general axioms and mathematical formulations that capture the essence of time-variant individual fairness as it has been discussed in the literature thus far:


\paragraph{{\bf Axiom 1: Lipschitz Continuity in Time}} \citep{dwork2012fairness,gupta2021individual,mukherjee2020two} This formulation states that the change in treatment should be limited by the elapsed time between arrivals, ensuring a smooth transition:
   \[
   D(f(t), f(s)) \leq L d(t, s)
   \]
   Here, \(t\) and \(ts\) represent the arrival time, \(f(x)\) denotes the treatment function, \(D\) is a distance metric measuring similarity between treatments, \(d\) is a distance metric measuring the time difference, and \(L\) is a Lipschitz constant that bounds the rate of change in treatment over time. This ensures that the treatment function \(f\) does not vary too sharply with small changes in time.

\paragraph{{\bf Axiom 2: Probabilistic Fairness}  (Our paper)} \citep{donahue2020fairness,herlihy2023planning}. This approach introduces randomness in treatment while maintaining fairness statistically:
   \[
   P(D\left(f(t),f(s)\right) \leq \delta | t, s) \geq 1 - \epsilon \text{ for } d(t, s) < \gamma
   \]
   where \(\delta, \gamma, \epsilon\) are thresholds dictating the acceptable level of treatment difference, time difference, and probability of unequal treatment, respectively.

\paragraph{{\bf Axiom 3: Time-Consistent Treatment}} \citep{cohen2021dynamic,gupta2023temporal,heidari2018preventing} This axiom ensures that entities arriving within a short time window receive comparable treatment. Mathematically, it can be expressed as:
   \[
   \forall t,s \text{ with } d(t, s) < \epsilon, \quad D(f(t), f(s)) \leq \delta
   \]
   Here, \(\epsilon, \delta\) are small constants for the permissible time window and treatment disparity, respectively.


Mathematically speaking, Axiom 1 is stronger than Axiom 2. The intuition behind is that in Axiom 1, the treatment function must not change more rapidly than a constant rate over time. This means that the rate at which treatment changes is tightly controlled, providing a uniform smoothness in how treatments are administered over time. As such, every algorithm satisfying Axiom 1, by default satisfies Axiom 2 as well.

Axiom 2 is stronger than Axiom 3. The intuition is that Axiom 2 generally specifies that for entities arriving within a close timeframe, the probability of receiving similar treatment must be high—usually quantified by a certain threshold. It requires a statistical guarantee that can be measured and verified. Axiom 3, while ensuring that entities arriving at similar times receive similar treatment, doesn’t specify how similar the treatments need to be or provide a statistical probability of similarity. Axiom 3 is more qualitative and lacks the explicit statistical guarantees required by Axiom 2, making it a less stringent and thus weaker notion of fairness.
}


We remark here that our paper (and our definition of fairness) satisfies Axiom 2. As such, it lies somewhere in the middle in terms of strength with regards to other fairness models/desiderata considered in the literature. Concretely, all the algorithms that satisfy Axiom 2 fairness must satisfy Axiom 3 fairness, thus making our contributions significant for a wider range of problems.

\xhdr{Modern Revenue Management Models.} Due to the limitations of the traditional network revenue management model, researchers have diversified its approach for greater relevance and depth. Despite these advancements, the resolution of these contemporary models often relies on the foundational algorithmic frameworks discussed in Section \ref{sec:classicalnrm}. Our contributions aim to refine these foundational structures, thereby promoting individual fairness across all advanced revenue management models that employ decision-making algorithms outlined in Section \ref{sec:classicalnrm}. For instance, to ensure group fairness, a \AlgDLP algorithm is proposed by \cite{ma2020group}, while a \AlgDBPC algorithm is suggested by \cite{balseiro2021regularized}. In the context of choice-based revenue management, a choice-based \AlgDLP algorithm is recommended by \cite{liu2008choice, gallego2015online}. To tackle revenue management issues involving offline prediction in demand, \cite{balseiro2022single, golrezaei2023online} advocate for a \AlgBL algorithm. Revenue management with correlated arrivals is addressed by \cite{jiang2023constant} with an \AlgSBPC method, and \cite{jiang2020online} proposes a \AlgDBPC algorithm for non-stationary arrivals. In scenarios involving revenue management with reusable resources, \cite{huo2022online} employs a \AlgN algorithm. For revenue management with horizontal uncertainty, a \AlgDBPC algorithm is suggested by \cite{balseiro2023online}. Finally, \cite{vera2019bayesian} extends the revenue management framework to a more general online decision-making setting and applies a variant of the \AlgRDLP algorithm to broader problems, such as online packing.

\xhdr{Price-based Revenue Management.} Contrasting with the quantity-based model in this paper, price-based revenue management represents a distinct category. Here, pricing decisions inversely affect demand, with higher prices leading to lower demand and vice versa. \cite{maglaras2006dynamic} suggests that, under certain conditions, price-based and quantity-based models align in function. Furthermore, price-based strategies also utilize the algorithmic frameworks mentioned in Section \ref{sec:classicalnrm}. Works by \cite{gallego1994optimal,gallego1997multiproduct} implement the \AlgDLP algorithm for decision-making, whereas \cite{jasin2012re,jasin2014reoptimization,wang2022constant} employ the \AlgRDLP algorithm. In Section \ref{sec:extension}, we explore the concept of individual fairness within the price-based framework and discuss the integration of a grace period design in this context.

\xhdr{Group Fairness.} While a considerable volume of literature exists concerning fairness in RM algorithms, much of it primarily aims to achieve group fairness \citep{cohen2022price,aminian2023fair,balseiro2021regularized,cayci2020group,donahue2020fairness,ma2020group,manshadi2021fair}. Group fairness in decision-making entails ensuring that every group (or type) of customer is treated equitably. However, group fairness does not guarantee individual fairness. This is because, firstly, individuals might not be aware of the groups to which they belong. Secondly, individuals within each group can be treated inequitably based on their specific contexts and features.

\newpage

\section{Supplementary Material for Section \ref{sec:classicalnrm}} \label{appendix:algs}

\begin{algorithm}[h]
\caption{\AlgDLP}
\label{alg:DLP}

\DontPrintSemicolon 
\KwIn{Optimal primal variable $\textbf{x}^{*}$. Preference matrix $\textbf{A}$.  Arriving rate vector $\mathbf{\lambda}$. Capacity vector $\textbf{m}$. Time horizon $[0,T]$.}
\KwOut{Acceptance decision for customers over time and update of capacity vector $\textbf{m}$.}
\For{$t \in \{1,2,...,T\}$}{
    Observe customer of type $i \in [n]$.\;
    \uIf{$\textbf{A}_i \leq \textbf{m}(t)$ for all $i\in [n]$}{
        Accept customer with probability $\frac{x^{*}_{i}}{\lambda_{i}}$.\;
        \If{customer is accepted}{
            $\textbf{m}(t+1) \gets \textbf{m}(t) -  \textbf{A}_{i}$.\;
        }
    }
    \Else{
        Reject the arriving customer and break.\;
    }
}
\end{algorithm}

\begin{algorithm}[!tb]
\caption{\AlgRDLP}
\label{alg:RDLP}

\DontPrintSemicolon 
\KwIn{Preference matrix $\textbf{A}$. Arriving rate vector $\mathbf{\lambda}$. Capacity vector $\textbf{m}$. Time horizon $[0,T]$. Resolving time point $t^{*}$.}
\KwOut{Updated optimal primal solutions $\mathbf{x}^{*}$ and $\tilde{\mathbf{x}}^{*}$ based on capacity adjustments at $t^{*}$.}
At $t=0$, solve DLP based on the initial capacity. Denote the optimal primal solution as $\mathbf{x}^{*}$.\;
Run Algorithm \ref{alg:DLP} with initial capacity $\textbf{m}$, horizon $[0,t^{*}]$, and optimal primal solution $\mathbf{x}^{*}$.\;
At $t=t^{*}$, solve DLP based on the remaining capacity. Let $\tilde{\mathbf{x}}^{*}$ be the optimal primal solution.\;
Run Algorithm \ref{alg:DLP} with remaining capacity, horizon $[t^{*},T]$, and optimal primal solution $\tilde{\mathbf{x}}^{*}$.\;
\end{algorithm}

\begin{algorithm}[!tb]
\caption{\AlgSBPC}
\label{alg:fixedbid}

\DontPrintSemicolon 
\KwIn{Optimal dual variable $\mathbf{\theta}^{*}$. Preference matrix $\textbf{A}$. Reward vector $\textbf{r}$. Capacity vector $\textbf{m}$.}
\KwOut{Acceptance decision for customers over time and update of capacity vector $\textbf{m}$.}
\For{$t \in \{1,2,...,T\}$}{
    Observe customer of type $i$. Set $y_t \gets \mathbf{1} \big( r_i > \sum_{j=1}^L \theta_j^{*} A_{ij} \big)$.\;
    \uIf{$\textbf{A}_i \leq \textbf{m}(t)$ for all $i\in [n]$}{
        \uIf{$y_t$ equals to $1$}{
            Accept the arriving customer.\;
        }
        \Else{
            Reject the arriving customer.\;
        }
        Set $\textbf{m}(t+1) \gets \textbf{m}(t) - y_t A_{i}$.\;
    }
    \Else{
        Reject the arriving customer and break.\;
    }
}
\end{algorithm}

\begin{algorithm}[h]
\caption{\AlgDBPC} \label{alg:ogd}

\DontPrintSemicolon 
\KwIn{Preference matrix $\textbf{A}$. Reward vector $\textbf{r}$. Capacity vector $\textbf{m}$. OGD Parameter: $G$, $D$, $\bar \theta$.}
\KwOut{Update of dual variables $\mathbf{\theta}$ over time.}
\textbf{Initialize:} Dual variable $\mathbf{\theta}^{(0)} \gets \mathbf{0}$.\;
\For{$t =1,2,\ldots,T$}{
    Observe customer of type $i$. Set $y_t \gets \mathbf{1} \big( r_i > \sum_{j=1}^L \theta_j^{(t)} A_{ij} \big)$.\;
    \uIf{$\textbf{A}_i \leq \textbf{m}(t)$ for all $i\in [n]$}{
        \uIf{$y_t$ equals to $1$}{
            Accept the arriving customer.\;
        }
        \Else{
            Reject the arriving customer.\;
        }
        Set $\textbf{m}(t+1) \gets \textbf{m}(t) - y_t A_{i}$.\;
    }
    \Else{
        Reject the arriving customer and break.\;
    }
    Construct function $g_t(\mathbf{\theta}) = \sum_{j=1}^L \theta_j \left( \frac{m_j(t)}{T} - y_t A_{ij} \right)$.\;
    Update the dual variables using the OGD procedure:\;
    \Indp$\eta_t \gets \frac{D}{ G\sqrt{T} }$,\;
    $\mathbf{\theta}^{(t+1)} \gets
    \mathbf{\theta}^{(t)} - \eta_t \nabla_{\mathbf{\theta}} g_t(\mathbf{\theta}^{(t)})$,\;
    $\theta_j^{(t+1)}  \gets \min \bigg(\max \big(0, \theta_j^{(t+1)} \big) , \; \bar{\theta} \bigg)
    \text{ for all } i \in [m]$.\;
    \Indm
}
\end{algorithm}

\begin{algorithm}[h]
\caption{\AlgBL}
\label{alg:bookinglimit}

\DontPrintSemicolon 
\KwIn{Preference matrix $\mathbf{A}$. Capacity vector $\textbf{m}$. Booking limit $\mathbf{b}$.}
\KwOut{Acceptance decision for customers over time and update of capacity vector $\textbf{m}$.}
\textbf{Initialize:} Number of accepted customers $s_i=0$, $i \in [n]$.\;
\For{$t \in \{1,2,...,T\}$}{
    Observe customer of type $i \in [n]$.\;
    \uIf{$\textbf{A}_i \leq \textbf{m}(t)$ and $s_i < b_i$}{
        Accept the arriving customer, and set $s_i \gets s_i+1$, $\textbf{m}(t+1) \gets \textbf{m}(t) -  \textbf{A}_{i}$.\;
    }
    \Else{
        Reject the arriving customer.\;
    }
}
\end{algorithm}

\begin{algorithm}[h]
\caption{\AlgN}
\label{alg:nesting}

\DontPrintSemicolon 
\KwIn{Capacity $m$. Nesting quota $\mathbf{b}$.}
\KwOut{Acceptance decision for customers over time and update of capacity.}
\textbf{Initialize:} Number of accepted customers $s_i=0$, $i \in [n]$.\;
\For{$t \in \{1,2,...,T\}$}{
    Observe customer of type $i \in [n]$.\;
    \uIf{$m(t) \geq 1$ and $\sum_{j=i}^{n}s_j < b_i$}{
        Accept the arriving customer, and set $s_i \gets s_i+1$, $m(t+1) \gets m(t) -  1$.\;
    }
    \Else{
        Reject the arriving customer.\;
    }
}
\end{algorithm}

\newpage

\endproof

\section{Supplementary Material For Section \ref{sec:gpd}} \label{append:gpd}

\newpage
\section{Supplementary material for Section \ref{sec:stochatic}} \label{append:Sec5}

\subsection{Proof of Theorem \ref{thm:RDLPrevise}}

\begin{proof}{Proof of Theorem \ref{thm:RDLPrevise}}
The proof of Theorem \ref{thm:RDLPrevise} is split into two parts: in the first one, we show that Algorithm \ref{alg:RDLPrevise} is $(\alpha,\delta)$-fair. In the second part, we show that Algorithm \ref{alg:RDLPrevise} has a regret of $O(\max\{T^{\eta}, \log_{1-\beta} \delta\})$. Define the length of the uniform time segment as $\hat{T}=\max\{T^{\eta}, 2\bar a n \gamma\}$.

\textit{Part 1: }
Define $\tau$ as the stopping time, where one of the resources depletes. Let $k(\tau)$ be the index of time segment containing $\tau$. Note that here both $\tau$ and $k(\tau)$ are random variables. For any type $i$ customer $i(u)$ arriving within $[1,(k(\tau)-1)\hat{T}]$, by definition, both decreasing grace period and increasing grace period satisfy the $(\alpha,\delta)$-fair metrics. For any customer $i(u)$ arriving within $[k(\tau)\hat{T},T]$, Theorem \ref{thm:FCFSnetwork} shows that Algorithm \ref{alg:RDLPrevise} is $(\alpha,\delta)$-fair. Note that Theorem \ref{thm:FCFSnetwork} only proves the fairness of the decreasing grace period towards resource depletion. We can use a symmetric statement to prove the fairness of the increasing grace period towards resource depletion, and we omit the proof of this statement. 

\textit{Part 2: }
To show the regret bound, we have 
\begin{align*}
\Reg&=\E[\HO]-\Revp =\left(\E[\HO]-\RDLP\right)+\left(\RDLP-\Revp\right)\\&=O(T^{\eta})+\left(\RDLP-\Revp\right).
\end{align*}
Next, we upper bound the value of $\RDLP-\Revp$.

Define $\tau$ as the stopping time, where one of the resources depletes, and $t^{*}$ as the re-solving time point. Let $k(\tau)$, $k(t^{*})$ be the index of time segment containing $\tau$, $t^{*}$, respectively. For any arriving instance $I$, for the time segment $1,2,\ldots,k(\tau)-1$, we make the following coupling between R-DLP and GP Enhanced R-DLP: let $u_i(k)$ be the number of type $i$ customers accepted by DLP in the time segment $k$. We have $u_i(k) \sim \text{Bin}(\lambda_i\hat{T},\frac{x^{*}_i}{\lambda_i})$. Then, we couple $y_i$ in time segment $k$ to have the value $u_i(k)$. We can make this coupling because $y_i$ in time segment $k$ and $u_i(k)$ follow the same distribution. Let $w_i(k)$ be the number of rejected type $i$ customers due to the grace period, and let $Y_i(k)$ be the realization of total number of accepted type $i$ customers in time segment. For any instance $\mathcal I$, any customer type $i$, and any time segment index $s \in [k(t^{*})-1]$, we have
\begin{equation} \label{eq:couple2}
\sum_{k=1}^{s} u_i(k)-Y_i(k)= \sum_{k=1}^{s} u_i(k)-\left(y_i-w_i(k)+w_i(k-1)\right) = w_i(s) \leq 2\frac{\bar a}{\underline a} \gamma,
\end{equation}
which implies that
\begin{align}
    \nonumber& \RDLP(1,(k(t^{*})-1)\hat{T})-\Revp(1,(k(t^{*})-1)\hat{T}) \\ \nonumber&= \sum_{k=1}^{k(t^{*})-1} \left(\RDLP\left((k-1)\hat{T}+1,k\hat{T}\right)-\Revp\left((k-1)\hat{T}+1,k\hat{T}\right) \right) \\ \label{eq:beforeresolve}&= \sum_{i=1}^{n} w_i(k(t^{*})-1) \leq 2\frac{\bar a}{\underline a} n \gamma \bar r.
\end{align}

Next, let $\tilde{\mathbf{x}}^{*}$, $\tilde{\mathbf{y}}^{*}$ be the re-solved optimal solution by Algorithm 
\AlgRDLP and Algorithm \GPRDLP at time $t^{*}$ respectively. By Equation \eqref{eq:couple2}, the difference of remaining capacity Algorithm 
\AlgRDLP and Algorithm \GPRDLP at time $t^{*}$ is at most $O(\log 1/\delta)$. Therefore, 
\[
| \tilde{\mathbf{x}}^{*} - \tilde{\mathbf{y}}^{*} | = O\left(\frac{\log 1/\delta}{T}\right).
\]
For any instance $\mathcal I$, any customer type $i$, and any time segment index $s \in \{k(t^{*}), k(t^{*})+1,\ldots, k(\tau)-1\}$, we have
\begin{align} 
\sum_{k=k(t^{*})}^{s} u_i(k)-Y_i(k) \nonumber&= \sum_{k=k(t^{*})}^{s} u_i(k)-\left(y_i-w_i(k)+w_i(k-1)+O\left(\frac{\log T}{T}\right)\right) \nonumber \\
&= w_i(s) + (s-k(t^{*}))O\left(\frac{\log 1/\delta}{T}\right) \nonumber \\ \label{eq:couple3}&\leq 2\frac{\bar a}{\underline a} \gamma +O(\log 1/\delta).
\end{align}
Therefore, we have
\begin{align}
    \nonumber& \RDLP(k(t^{*})\hat{T}),(k(\tau)-1)\hat{T})-\Revp(k(t^{*})\hat{T}),(k(\tau)-1)\hat{T}) \\ \nonumber&= \sum_{k=k(t^{*})}^{k(\tau)-1} \left(\RDLP\left((k-1)\hat{T}+1,k\hat{T}\right)-\Revp\left((k-1)\hat{T}+1,k\hat{T}\right) \right) \\ \label{eq:afterresolve}&= \sum_{i=1}^{n} w_i(k(\tau)-1) + O(\log 1/\delta) \leq 2\frac{\bar a}{\underline a} n \gamma \bar r + O( \log 1/\delta).
\end{align}

As we can have at most $2\hat{T}$ loss in the time segment $k(t^{*})$ and $k(\tau)$, by Equations \eqref{eq:beforeresolve} and \eqref{eq:afterresolve}, we can obtain:
\begin{align*}
\RDLP-\Revp &= 2\max\{T^{\eta}, 2\bar a n \log 1/\delta\} +O(\log 1/\delta) \\&= O(\max\{T^{\eta}, \log_{1-\beta} \delta\}).
\end{align*}
\Halmos
\endproof

\subsection{Proof of Theorem \ref{thm:reviseogd}}
\proof{Proof of Theorem \ref{thm:reviseogd}}
First, in the period $(0,T^{2/3})$, Algorithm \GPDBPC rejects all customers, which is obviously $(\alpha,\delta)$-fair. As is shown in Theorem \ref{thm:RDLPrevise}, Algorithm \AlgRDLP satisfies $(\alpha,\delta)$-fair metrics, we can obtain that Algorithm \GPDBPC is $(\alpha,\delta)$-fair.

Secondly, to address the regret bound, we observe that the regret bound can be formulated as
\begin{align} \label{eq:defregretogd}
    \Reg&=\E[\HO]-\Revp  = \nonumber \\
    &=\Big(\E\big[\HO(0,T^{2/3})\big] - \Revp(0,T^{2/3})\Big) + \Big(\E\big[\HO(T^{2/3},T)\big] - \Revp(T^{2/3},T)\Big),
\end{align}
where $\Revp$ is revenue generated by Algorithm \GPDBPC.

Upon rejecting all customers within $[0,T^{2/3}]$, the maximum loss incurred is $T^{2/3}$ during the initial $T^{2/3}$ periods, which implies that the first term in Equation \eqref{eq:defregretogd} is bounded by $T^{2/3}$. To bound the second term, we show that the Euclidean distance between the rate of accepted type $i$ customers by \AlgDBPC ($\frac{u_i}{\Lambda_i}$), and that by DLP ($\frac{x_i^{*}}{\lambda_i}$) is close. Let the expected total revenue generated by \AlgDBPC between $(t_1,t_2)$ be $\Rogd(t_1,t_2)$. By \cite{balseiro2023best}, we can obtain that \AlgDBPC has the regret bound of $O(\sqrt{T})$, which implies that for $t \in [0,T^{2/3}]$, 
\begin{equation} \label{eq:sketchogd1}
\HO(0,T^{2/3})-\Rogd(0,T^{2/3})=\HO(0,T^{2/3})-\sum_{i \in [n]}r_iu_i = O\big(\sqrt{T^{2/3}}\big) = O\big(T^{1/3}\big).
\end{equation}
A similar deduction holds for DLP given its regret bound of $O(\sqrt{T})$, leading to 
\begin{equation} \label{eq:sketchogd2}
\HO(0,T^{2/3})-\DLP(0,T^{2/3})=\HO(0,T^{2/3})-\sum_{i \in [n]}r_ix_i^{*}T^{2/3} = O\big(\sqrt{T^{2/3}}\big) = O\big(T^{1/3}\big),
\end{equation}
where $x_i^{*}$ is the $i^{\text{th}}$ element of the optimal solution of DLP. By Equations \eqref{eq:sketchogd1} and \eqref{eq:sketchogd2}, we have
\begin{equation} \label{eq:sketchogd3}
\vert \sum_{i \in [n]}r_iu_i - \sum_{i \in [n]}r_ix_i^{*}T^{2/3} \vert = O\big(T^{1/3}\big).
\end{equation}

Building on assumption 3 of \cite{agrawal2014dynamic}, according to \cite{sun2020near}, for any $i \in [n]$, the following equation holds with probability $1-O(\frac{1}{T})$:
\begin{equation} \label{eq:sketchogd4}
\vert u_i - x_i^{*}T^{2/3} \vert = O\big(T^{1/3} \log T\big).
\end{equation}

For a given $t \in [T^{2/3},T]$, Algorithm \GPRDLP is executed with a time segment length of $T^{2/3}$, and with $x_i^{*}=\frac{u_i}{T^{2/3}}$. By Equation \eqref{eq:sketchogd4}, we have $\vert \frac{x_i^{*}}{\lambda_i}-\frac{u_i}{\lambda_i T^{2/3}} \vert = O\big(T^{-1/3}\big)$, and by Hoeffding's inequality, we have $\vert \frac{u_i}{\Lambda_i}-\frac{u_i}{\lambda_i T^{2/3}} \vert = O\big(T^{-1/3} \log T\big)$. Then, with triangle inequality, we have
\begin{equation}\label{eq:sketchogd5}
    \big\vert \frac{x_i^{*}}{\lambda_i}-\frac{u_i}{\Lambda_i} \big\vert = O\big(T^{-1/3} \log T\big),
\end{equation}
Therefore, to bound the second term in Equation \eqref{eq:defregretogd}, we have
\begin{align}\label{eq:sketchogd6}
    & \nonumber \HO(T^{2/3},T) - \Revp(T^{2/3},T) \\&= \nonumber \Big(\HO(T^{2/3},T) - \Rdlpp(T^{2/3},T)\Big) \\ \nonumber &+  \Big(\Rdlpp(T^{2/3},T)-\Revp(T^{2/3},T)\Big) \\&= \nonumber O(\sqrt{T}) + \Big(\Rdlpp(T^{2/3},(k(\tau)-1)T^{2/3})-\Revp(T^{2/3},(k(\tau)-1)T^{2/3}) \Big) \\&+ \Big(\Rdlpp((k(\tau)-1)T^{2/3},k(\tau)T^{2/3}) - \Revp((k(\tau)-1)T^{2/3},k(\tau)T^{2/3}) \Big).
\end{align}
By Equation \eqref{eq:sketchogd5}, we can obtain
\begin{align*}
\Rdlpp(T^{2/3},(k(\tau)-1)T^{2/3})-\Revp(T^{2/3},(k(\tau)-1)T^{2/3}) &= (T-T^{2/3}) O\big(T^{-1/3} \log T\big) \\&= O\big(T^{2/3} \log T\big).
\end{align*}
Moreover, as the length of the time segment $k(\tau)$ is $\max\{T^{2/3}, 2 \bar a n \log 1/\delta\}$, we have
\begin{align*}
& \Rdlpp((k(\tau)-1)\max\{T^{2/3}, 2 \bar a n \log 1/\delta\},k(\tau)\max\{T^{2/3}, 2 \bar a n \log 1/\delta\}) \\&- \Revp((k(\tau)-1)\max\{T^{2/3}, 2 \bar a n \log 1/\delta\},k(\tau)\max\{T^{2/3}, 2 \bar a n \log 1/\delta\}) \\&= O(\max\{T^{2/3}, 2 \bar a n \log 1/\delta\}).
\end{align*}

Therefore, by Equation \eqref{eq:sketchogd6}, the second term in Equation \eqref{eq:defregretogd} is bounded by $O(\sqrt{T})+O(T^{2/3}\log T)+O(\max\{T^{2/3}, 2 \bar a n \log 1/\delta\})=O(\max\{T^{2/3}\log T, 2 \bar a n \log 1/\delta\})$. Finally, by Equation \eqref{eq:defregretogd}, we have the regret of Algorithm \GPDBPC is 
\[
\HO - \Revp = O(\max\{T^{2/3}, 2 \bar a n \log 1/\delta\})+O(T^{2/3}\log T)=O(\max\{T^{2/3} \log T,  \log 1/\delta\}).
\]
\Halmos
\endproof

\end{proof}

\section{Supplementary material for Section \ref{sec:adversarial}} \label{append:Sec6}


\proof{Proof of Theorem \ref{thm:revisebooking}}
First, we show that \GPBL is $(\alpha,\delta)$-fair. Without the capacity constraint, for each customer type $i$, \GPBL is the same as the \FCFS algorithm under a resource capacity $b_i$, where $b_i$ is the booking limit. Therefore, by Theorem \ref{thm:FCFSnetwork}, \GPBL is $(\alpha,\delta)$-fair if the demand is less than the capacity. Then, let's consider the capacity constraint. Suppose that at a certain time period $t$, $\min_{j \in [L]}m_j(t) - \bar a n \gamma \leq 0$, for each customer type $i$, if $s_i<b_i-\gamma$, then \GPBL starts a decreasing grace period to type $i$. If $s_i \geq b_i-\gamma$, then type $i$ has been already started a decreasing grace period, and this will not have any impact on the decision. Therefore, by Theorem \ref{thm:FCFSnetwork}, \GPBL is $(\alpha,\delta)$-fair.

Second, we show that \GPBL has a competitive ratio of $C-O(\frac{\log 1/\delta }{m})$. If the total number of arrivals is $o(m)$, then with probability $1$, no decreasing grace period will start because the number of accepted customers is much fewer than the booking limit or resource capacity. This leads to a competitive ratio of $1$ because Algorithm \ref{alg:bookingrevised} accepts everyone. 

If the total number of arrivals is $\Omega(m)$, the offline optimal revenue is $\Theta(m)$ as the resource capacity scales with $m$. Compare to Algorithm \AlgBL, the loss is only from the decreasing grace period. The maximum number of customers who are rejected due to the decreasing grace period is the total length of decreasing grace periods: $\bar a n \gamma + n \gamma = (\bar a+1) n \gamma$, which will incur a maximum revenue loss of $(\bar a+1) n \gamma \bar r$. Therefore, the competitive ratio is 
\[
\inf_{I \in \mathcal{I}}\frac{\text{Rev}(\pi(I))}{\text{OPT}(I)} \geq \inf_{I \in \mathcal{I}}\frac{\text{Rev}(\textsf{BL}(I))-(\bar a+1) n  \gamma \bar r}{\text{OPT}(I)} = C-O\left(\frac{\log 1/\delta }{m}\right).
\]
\Halmos
\endproof

\section{Supplementary material for Section \ref{sec:extension}} \label{append:Sec8}

\proof{Proof of Theorem \ref{thm:pricebased}}
First, we show that by assigning a decreasing grace period $[t_1(i), t_2(i)]$, where $t_1 (i) = \inf\{t: \min_{j \in [L]}m_j(t) \leq \bar a n \gamma\}$ and $t_2(i) = T$ to each type $i$ customer, the static pricing algorithm is $(\alpha, \delta)$-fair. 

Define event $E$ as $E:= \{\text{no resource is depleted in the time interval } [0, T] \}$. We show that $E$ happens with probability at least $1-\delta$ by showing that the complement of $E$ happens with probability at most $\delta$.
\begin{align}
    \mathbb{P}(E ^{\mathsf{C}}) \nonumber&\leq \mathbb{P}(\text{more than $\gamma n$ customers are accepted in the grace period}) \\ \label{eq:pfthm7eq1}& \leq \mathbb{P}(\exists i \in [n], \text{ s.t. the number of type } i \text{ customers accepted in the grace period } \geq \gamma) \\  \nonumber &=1 - \mathbb{P}(\exists i \in [n], \text{ s.t. the number of type } i \text{ customers accepted in the grace period } < \gamma) \\ \label{eq:pfthm7eq2}&= (1-\beta)^{\gamma} \\ \nonumber &= (1-\beta)^{\log_{1-\beta}\delta} = \delta,
\end{align}
where \eqref{eq:pfthm7eq1} is due to the pigeonhole principle, and \eqref{eq:pfthm7eq2} is because for each type $i$, the number of type $i$ customers accepted after the grace period starts is a geometric random variable with success probability $\alpha$. Therefore, the cdf is $1-(1-\alpha)^{\gamma}$.

Conditional on event $E$ happening, for any customer $i(u)$ arriving within $[1,t_1]$, the algorithm gives the same price $p_u^{(i)}$ to all of them. For any customer $i(u)$ arriving within $[t_1,t_2]$, by the definition of the grace period, we obtain that $\mathbb{P}\left(  p_u^{(i)} \neq p_u^{(i+1)}  \right) \leq \alpha$.

To complete the proof of the theorem, we need to show that the revenue loss is bounded by $\frac{\bar a}{\underline a} n \gamma \bar p$. In the worst case for the revenue, the algorithm rejects all customers after $t_1$. By the definition of $t_1$, the remaining units for each $j$ are $\bar a n \gamma$. Since $\bar a n \gamma$ units of resource can serve at most $\frac{\bar a}{\underline a} n \gamma$ customers, we have that the revenue loss is at most $\frac{\bar a}{\underline a} n \gamma \bar p$. \Halmos
\endproof

\section{Supplementary Material For Section \ref{sec:discussion}} \label{append:def}

\begin{proposition} \label{prop:p1}
Given a time horizon \( [1, T] \), under the fairness constraint in Equation~\eqref{eq:tempfair}, any algorithm must incur a total revenue of \( 0 \).
\end{proposition}

\proof{Proof of Proposition \ref{prop:p1}}
Suppose there is only one type of customer. Let \( \mathcal{A} \) be any algorithm satisfying the fairness constraint in Equation~\eqref{eq:tempfair}. Define \( u_0 \) as the index of the first customer whose acceptance probability under \( \mathcal{A} \), $p_{u_0}$, is strictly greater than $0$. 

We now analyze how the lower bound of acceptance probabilities evolve for customers with indices \( v > u_0 \). For any such customer \( v \), we have:

\begin{align}
&\mathbb{P}[\text{customer } v \text{ is accepted}] \nonumber \\
&= \mathbb{P}[\text{accepted} \mid \text{prev accepted}] \cdot \mathbb{P}[\text{prev accepted}] \nonumber \\
&\quad + \mathbb{P}[\text{accepted} \mid \text{prev rejected}] \cdot \mathbb{P}[\text{prev rejected}] \nonumber \\
&\geq \mathbb{P}[\text{accepted} \mid \text{prev accepted}] \cdot \mathbb{P}[\text{prev accepted}] \nonumber \\
&= \left(1 - \mathbb{P}[1(v-1) \succ_{\mathcal{A}} 1(v)]\right) \cdot \mathbb{P}[\text{customer } v-1 \text{ is accepted}] \nonumber \\
&\geq (1 - \alpha) \cdot \mathbb{P}[\text{customer } v-1 \text{ is accepted}], \label{eq:defappend1}
\end{align}
where \( 1(v) \) denotes the \( v \)-th customer of type 1 (and we only have one customer type here). From the recursive inequality \eqref{eq:defappend1}, we see that for any \( x \geq 0 \), the acceptance probability of customer \( u_0 + x \) is lower bounded by:
\[
\mathbb{P}[\text{customer } u_0 + x \text{ is accepted}] \geq p_{u_0} (1 - \alpha)^x.
\]

Next, we claim that any feasible algorithm must eventually assign acceptance probability zero to at least one customer before the resource is fully depleted. We prove this by contradiction. Suppose instead that all customers have strictly positive acceptance probabilities before the resource runs out. Consider the point when only one unit of resource remains, and let \( \hat{u} \) denote the index of the customer arriving at that moment. Since the algorithm accepts customers with positive probability, it is possible that \( \hat{u} \) is accepted. But this implies that the next customer \( \hat{u} + 1 \) must be rejected with probability 1 due to zero remaining capacity. This means:
\[
\mathbb{P}[1(\hat{u}) \succ_{\mathcal{A}} 1(\hat{u} + 1)] = 1 > \alpha,
\]
which violates the fairness constraint in Equation~\eqref{eq:tempfair}.

Therefore, to satisfy the fairness constraint, the algorithm must assign zero acceptance probability to some customer before the resource is depleted. But from \eqref{eq:defappend1}, if the acceptance probability of customer \( u_0 \) is non-zero, then no subsequent customer will ever have acceptance probability exactly zero, since \( (1 - \alpha)^x > 0 \) for all finite \( x \). This contradiction implies that the acceptance probability of customer \( u_0 \) must be zero. Finally, if the first customer \( u_0 \) is already assigned zero acceptance probability, then the algorithm rejects all customers, resulting in zero revenue. \Halmos
\endproof

\end{document}